\title{\mbox{Extending Defeasibility for Propositional Standpoint Logics}}
\author{Nicholas Leisegang\inst{1}\orcidlink{0000-0002-8436-552X}\and
Thomas Meyer\inst{1}\orcidlink{0000-0003-2204-6969}\and
Ivan Varzinczak\inst{2,1}\orcidlink{0000-0002-0025-9632}}
\date{July 2024}
\authorrunning{N. Leisegang et al.}
\institute{University of Cape Town and CAIR, Cape Town, South Africa\\ 
\and
Université Sorbonne Paris Nord\\
Inserm, Sorbonne Université, Limics, 93017 Bobigny, France\\
\email{lsgnic001@myuct.ac.za}, \email{tommie.meyer@uct.ac.za}
\email{ivan.varzinczak@sorbonne-paris-nord.fr}}
\begin{document}

\newcommand{\ie}{\mbox{i.e.}}
\newcommand{\eg}{\mbox{e.g.}}
\newcommand{\cf}{\mbox{cf.}}
\newcommand{\viz}{\mbox{viz.}}
\newcommand{\wrt}{\mbox{w.r.t.}}
\newcommand{\st}{\mbox{s.t.}}
\newcommand{\wolog}{\mbox{w.l.o.g.}}
\newcommand{\etal}{\mbox{et al}}
\newcommand{\aka}{\mbox{a.k.a.}}
\newcommand{\myskip}{\smallskip} 
\newcommand{\df}[1]{\textbf{#1}} 


\newcommand{\Lang}{\mathcal{L}}
\newcommand{\KB}{\mathcal{K}}


\newcommand{\dentails}{\mid\hskip-0.40ex\approx}   
\newcommand{\ndentails}{\not\mid\hskip-0.40ex\approx}

\newcommand{\twiddle}{\leadsto}        
\newcommand{\ntwiddle}{\not\leadsto}   

\newcommand*{\myleftmid}{%
	\mathrel{\vcenter{\offinterlineskip
			\vskip-0.25ex\hbox{$\shortmid$}}}}
\newcommand*{\myrightmid}{%
	\mathrel{\vcenter{\offinterlineskip
			\vskip-0.7ex\hbox{$\shortmid$}}}}
\newcommand*{\twosim}{%
	\mathrel{\vcenter{\offinterlineskip
			\vskip0.05ex\hbox{$\sim$}\vskip0.25ex\hbox{$\sim$}}}}
\newcommand{\bartwosim}{\mathrel{\myleftmid}\hskip-0.03ex\joinrel\twosim}
\newcommand{\dnec}{\mathrel{\bartwosim}\hskip-.05ex\joinrel\myrightmid}
\newcommand{\dposs}{\scalebox{0.8}{\raisebox{-0.2ex}{\rotatebox{57}{\ensuremath{\dnec}}}}\hskip-0.3ex}

\newcommand{\Lflag}{\Lang^{\scalebox{0.5}{${\dnec}$}}}
\newcommand{\Langd}{\widetilde{\Lang}}
\newcommand{\LangdSt}{\Langd_{\mathds{S}}}

\newcommand{\dimp}{\leadsto}
\newcommand{\ndimp}{\not\leadsto}

\newcommand{\SSS}{\mathcal{S}}
\newcommand{\T}{\mathcal{T}}
\newcommand{\B}{\mathcal{B}}

\newcommand{\states}[1]{\llbracket #1 \rrbracket}

\newcommand{\Nick}[1]{\textcolor{orange}{#1}}
\newcommand{\Ivan}[1]{\textcolor{blue}{#1}}
\newcommand{\Tommie}[1]{\textcolor{red}{#1}}


\newcommand{\animal}{\ensuremath{\mathsf{animal}}}
\newcommand{\cheese}{\ensuremath{\mathsf{cheese}}}
\newcommand{\egg}{\ensuremath{\mathsf{egg}}}
\newcommand{\Env}{\ensuremath{\mathsf{Environmentalist}}}
\newcommand{\Pcf}{\ensuremath{\mathsf{Pacifist}}}
\newcommand{\Vga}{\ensuremath{\mathsf{Vegan}}}
\newcommand{\Vgt}{\ensuremath{\mathsf{Vegetarian}}}

\newcommand{\animalF}{\ensuremath{\mathsf{a}}}
\newcommand{\cheeseF}{\ensuremath{\mathsf{c}}}
\newcommand{\eggF}{\ensuremath{\mathsf{e}}}
\newcommand{\EnvF}{\ensuremath{\mathsf{Env}}}
\newcommand{\PcfF}{\ensuremath{\mathsf{Pcf}}}
\newcommand{\VgaF}{\ensuremath{\mathsf{Vga}}}
\newcommand{\VgtF}{\ensuremath{\mathsf{Vgt}}}

\maketitle

\begin{abstract}
In this paper, we introduce a new defeasible version of propositional standpoint logic by integrating Kraus~\etal.'s defeasible conditionals, Britz and Varzinczak's notions of defeasible necessity and distinct possibility, along with Leisegang~\etal.'s approach to defeasibility into the standpoint logics of Gómez Álvarez and Rudolph. The resulting logical framework allows for the expression of defeasibility on the level of implications, standpoint modal operators, and standpoint-sharpening statements. We provide a preferential semantics for this extended language and propose a tableaux calculus, which is shown to be sound and complete with respect to preferential entailment. We also establish the computational complexity of the tableaux procedure to be in \textsc{PSpace}.
\end{abstract}

\section{Introduction}

Standpoint logics are a recently introduced family of agent-centred knowledge representation formalisms~\cite{alvarezrudolph:propositionalstdpt}. Their main feature is to allow the integration of the viewpoints of two or more agents into a single knowledge base, especially when the agents have conflicting takes on a given matter. Standpoint logics are tightly related to various systems of epistemic and doxastic logics since they build on modalities for expressing viewpoints and also assume a Kripke-style possible-worlds semantics. Sentences with standpoint-indexed modal operators such as $\Box_{s}\alpha$ and $\Diamond_{s}\alpha$ read, respectively, ``from the~$s$ standpoint, it is unequivocal that~$\alpha$,'' and ``from the~$s$ standpoint, it is possible that~$\alpha$''. With standpoint-sharpening statements of the form $s\leq t$ (which, in modal-logic terms, is an abbreviation for an axiom schema establishing the interaction between two modalities), one expresses that one standpoint is at least as specific as another, which is a way to say both standpoints agree to some extent.


In spite of allowing for the opinions upheld by agents to be in conflict without causing the knowledge base to be inconsistent, classical standpoint logics do not allow for each agent to handle exceptional cases \emph{within} their respective standpoints. 
This has been partially remedied by Leisegang~\etal.~\cite{LeisegangRudolphMeyer:SacairStandpoints}, who have extended standpoint logics with both defeasible conditionals in the scope of modalities and a non-monotonic form of entailment. The resulting framework, defeasible restricted standpoint logic (DRSL), allows agents to reason about exceptions relative to their own beliefs and for defeasible consequences of a knowledge base to be derived. Nevertheless, DRSL still leaves open the question of a more general approach to defeasibility.

As pointed out by Britz and Varzinczak~\cite{britzvarzin:defeasiblemodalities}, logical languages with modalities make room for exploring defeasibility elsewhere than in conditionals: we can talk of \emph{defeasible necessity} and \emph{distinct possibility}, represented, respectively, by the modal operators~${\dnec}$ and~${\dposs}$.
These enrich modal systems with defeasibility at the object level and meet a variety of applications in reasoning about defeasible knowledge, defeasible action effects, defeasible obligations, and others. It seems only natural that defeasible modalities can be fruitful in providing a formal account of the defeasible standpoints motivated above.

The goal of the present paper is to introduce Propositional Defeasible Standpoint Logic (PDSL), a new defeasible version of standpoint logic enriched with defeasibility aspects on various levels. First, we allow for a defeasible form of implication which is different from the restricted one by Leisegang~\etal.~\cite{LeisegangRudolphMeyer:SacairStandpoints}. Second, drawing on the work of Britz and Varzinczak~\cite{britzvarzin:defeasiblemodalities}, we define defeasible versions of the standpoint modal operators found in classical standpoint logic. Finally, we extend classical standpoint logic further by allowing for the possibility of defeasible standpoint-sharpening statements.

The example below gives an idea of the level of expressivity available in PDSL. 

\begin{example}\label{Example:WorldSaviours}
We consider the standpoints of vegetarians, vegans, pacifists, and environmentalists. From a vegetarian's \emph{usual} standpoint, egg and cheese, although animal-based, are not considered unethical animal products to consume. 
In PDSL, this can be expressed with the sentence ${\dnec}_{\Vgt}((\egg\lor\cheese)\rightarrow\lnot\animal)$, which should not conflict with $\Diamond_{\Vgt}(\egg\land\animal)$, an exception compatible with the vegetarian standpoint which formalises that it is possible (although unusual) to consider an egg an unethical animal product. From the vegan standpoint, though, egg and cheese are unethical animal products. This is formalised with $\Box_{\Vga}((\egg\lor\cheese)\rightarrow\animal)$, and is in line with the intuition that the vegan standpoint is a more specific version of the vegetarian one. This is captured by the sharpening sentence $\Vga\leq\Vgt$. The intuition that \emph{usually}, the vegetarian standpoint is a more specific version of the pacifist one, but allows for exceptions, \eg\ those who do not eat meat only for health reasons, 
can be formalised as the defeasible sharpening $\Vgt\lesssim\Pcf$. Among the consequences of the above, we may expect that there exists a vegan who is a typical representative of the vegan standpoint, who believes that eggs are an unethical animal product and conclude ${\dposs}_{\Vga}(\egg\rightarrow\animal)$. Moreover, while we may expect that typical environmentalists are vegetarians and so $\Env\lesssim\Vgt$ holds, we would not expect that typical environmentalists are necessarily pacifists, and so would expect that $\Env\lesssim\Pcf$ does not hold, even though $\Vgt\lesssim\Pcf$ holds.
\end{example}

The plan of the paper is as follows: Section~\ref{Sec:Preliminaries} recalls the background and notation for the upcoming sections. Following that, and inspired by the work of Britz and Varzinczak~\cite{britzvarzin:defeasiblemodalities} and Leisegang~\etal.~\cite{LeisegangRudolphMeyer:SacairStandpoints}, Section~\ref{section:dmodalities-syntax-semantics} introduces Propositional Defeasible Standpoint Logic (PDSL). In particular, we show that a preferential semantics \emph{à la}~KLM is suitable for interpreting defeasibility in~PDSL and also enables us to define \emph{preferential entailment} \cite{lehmann:conditionalentail} from~PDSL knowledge bases. In Section~\ref{section:satisfiability-and-pref-entailment}, we provide a tableaux-based algorithm for computing preferential entailment for~PDSL, we prove its soundness and completeness, and show that its complexity is in \textsc{PSpace}. Section \ref{Sec:RelatedWork} is a brief discussion on related work. Section~\ref{Sec:Conclusion} closes the paper and considers future work.

\section{Preliminaries}\label{Sec:Preliminaries}

In this section we briefly introduce the basics of classical propositional standpoint logic, as well as defeasible modalities and defeasible reasoning in modal logic,  which form the basis for the logic PDSL introduced in this paper. Standpoint logic was introduced by Gómez Álvarez and Rudolph \cite{alvarezrudolph:propositionalstdpt} for the  propositional case. 
Given a vocabulary $\mathcal{V=(P,S)}$, where $\mathcal{P}$ is a set of propositional atoms and $\mathcal{S}$ a finite set of standpoints including the universal standpoint~$*$, the language $\mathcal{L}_{\mathds{S}}$ over $\mathcal{V}$ is defined by:
    \[\phi::=s\leq t\mid p\mid \neg \phi\mid \phi \wedge \phi\mid \Box_s \phi\]
   where $s,t\in\mathcal{S}$ and $p\in \mathcal{P}$. 
Statements of the form $s\leq t$ are referred to as \textit{standpoint sharpening statements}. The Boolean connectives $\vee$, $\rightarrow$, $\leftrightarrow$ are defined via $\neg$ and $\wedge$ in their usual manner, and for each standpoint $s\in\mathcal{S}$, we define $\Diamond_s$ as $\neg\Box_s\neg$. 

A \emph{standpoint structure} is a triple $M=(\Pi,\sigma,\gamma)$ where 
$\Pi$ is a non-empty set of precisifications;
$\sigma:\mathcal{S}\rightarrow 2^\Pi$ is a function such that $\sigma(*)=\Pi$ and $\sigma(s)\neq \emptyset$ for all $s\in\mathcal{S}$;
$\gamma:\Pi\rightarrow 2^{\mathcal{P}}$ is a function which assigns each precisification a set of atoms. 
Intuitively, the mapping~$\sigma$ allows one to allocate to a standpoint $s$ the set of all ``reasonable ways to make~$s$'s beliefs correct'', and $\gamma$ assigns a set of basic propositions which are `true' in that precisification. 
For a standpoint structure $M$ and a precisification $\pi\in\Pi$, we define the satisfaction relation $\Vdash$ as follows (where $\phi,\phi_1,\phi_2\in\mathcal{L}_{\mathds{S}}$, $s,t\in\mathcal{S}$, and $p\in \mathcal{P}$):
$M,\pi\Vdash p$ iff $p\in\gamma(\pi)$; 
$M,\pi\Vdash \neg \phi$ iff $M,\pi\nVdash \phi$;
$M,\pi\Vdash \phi_1\wedge \phi_2$ iff $M,\pi\Vdash \phi_1$ and $M,\pi\Vdash \phi_2$;
$M,\pi\Vdash \Box_s\phi$ iff $M,\pi'\Vdash \phi$ for all $\pi'\in \sigma(s)$;
$M,\pi\Vdash s\leq t$ iff $\sigma(s)\subseteq \sigma(t)$, and
$M\Vdash\phi$ iff $M,\pi\Vdash \phi$ for all $\pi\in \Pi$.

Defeasible reasoning in modal logic is largely based off of similar methods in the propositional case derived from the notion of preferential consequence relations introduced by Kraus et al. \cite{kraus:nonmonotonic}, and rational consequence relations introduced by Lehmann and Magidor \cite{lehmann:conditionalentail}. Named after the aforementioned authors, this is often called the KLM approach to defeasibility. Preferential consequence relations were considered in the modal case by Britz et al. \cite{britzetal:preferentialreasoningmodal, britzmeyervar:normalmodalpreferential} and extended to include KLM-style defeasibility within modal operators themselves by Britz and Varzinczak\cite{britzvarzin:defeasiblemodalities}. 
In our paper, we build upon the defeasible multi-modal language $\Lflag$ \cite{britzvarzin:defeasiblemodalities}. For a set of propositional atoms $\mathcal{P}$, the language  is $\Lflag$ defined by:
\[
\phi::= p \mid \lnot\phi \mid \phi\land\phi \mid \Box_i\phi \mid {\dnec}_i\phi \mid \phi\twiddle\phi
\]
where $p\in\mathcal{P}$, and $1\leq i \leq n$, for some $n\in\mathds{N}$. The other connectives, $\lor$, $\rightarrow$, and $\leftrightarrow$, are defined as usual. The modality $\Diamond_i$ is defined as $\lnot\Box_i\lnot$, and ${\dposs}_i$ is (analogously) defined as $\lnot{\dnec}_i\lnot$. Intuitively, $\Box_i$ indicates necessity and $\Diamond_i$ possibility (both with respect to~$i$). Regarding the three new operators, ${\dnec}_i$ is intended to indicate ``usual necessity'' (with respect to~$i$), while ${\dposs}_i$ is intended to convey ``distinct'' or ``strong'' possibility (with respect to~$i$), and~$\twiddle$ is a (possibly nested) defeasible conditional. 

A preferential Kripke model is a quadruple $P=(W,R,V,\prec)$, where $W$ is a non-empty set of worlds, $R:=<R_1,\dots,R_n>$, where each $R_i\subseteq W\times W$ is an accessibility relation on $W$, $V:W\longrightarrow 2^\mathcal{P}$ is a valuation function which maps each world to a set of propositional atoms, and $\prec$ is a strict partial order on $W$ that is well-founded (for every $W'\subseteq W$ and every $v\in W'$, either $v$ is $\prec$-minimal in $W'$, or there is a $u\in W'$ that is $\prec$-minimal in $W'$ and $u\prec v$). Satisfaction with respect to $P$ and a world $w\in W$ is defined as follows: For $p\in\mathcal{P}$, $P,w\Vdash p$ iff $p\in V(w)$; $P,w\Vdash \lnot\phi$ iff $P,w\nVdash\phi$; $P,w\Vdash\phi_1\land\phi_2$ iff $P,w\Vdash\phi$ and $P,w\Vdash\phi_2$; $P,w\Vdash \Box_i\phi$ iff $P,v\Vdash\phi$ for every $v\in W$ such that $(w,v)\in R_i$; $P,w\Vdash \dnec_i\phi$ iff $P,v\Vdash\phi$ for every $v\in W$ such that $v\in\min_{\prec}R_i(w)$ (where $R_i(w)=\{w'\mid (w,w')\in R_i\})$; $P,w\Vdash\phi_1\twiddle\phi_2$ whenever $w\notin\min_{\prec}\states{\phi_1}^{P}$ or $w\in\states{\phi_2}^{P}$ (where $\states{\phi_1}^{P}$ refers to those elements $v$ of $W$ for which $P,v\Vdash\phi_1$, and similarly for $\states{\phi_2}^{P}$). Finally, $P\Vdash\phi$ iff $P,w\Vdash\phi$ for every $w\in W$.

Classical multi-modal statements are interpreted in the standard way. Statements of the form ${\dnec}_i\phi$ are true with respect to $P$ and $w$ whenever $\phi$ is true with respect to~$P$ and all the most typical worlds accessible from $w$,  while statements of the form $\dposs_i\phi$ are true with respect to $P$ and $w$ whenever $\phi$ is true with respect to~$P$ and at least one most typical world accessible from $w$. Statements of the form $\phi_1\twiddle\phi_2$ are true in the model~$P$ when $\phi_2$ is true in the most typical $\phi_1$-worlds. 

Britz and Varzinczak present a tableaux method for checking whether or not a statement in $\Lflag$ is satisfiable in some preferential Kripke model. They prove soundness and completeness for their tableaux method, and show that it is \textsc{PSpace}-complete.

\section{Propositional Defeasible Standpoint Logic (PDSL)}\label{section:dmodalities-syntax-semantics}
Having dispensed with the necessary preliminaries, we now proceed to introduce Propositional Defeasible Standpoint Logic, or PDSL.  

\begin{definition}\label{def:language-defeasible-standpoint logic}
    Given a vocabulary $\mathcal{V=(P,S)}$ where $\mathcal{P}$ is a set of propositional atoms and $\mathcal{S}$ is a set of standpoints, we define the set of standpoint expressions $\mathcal{E}$ over $\mathcal{S}$ as 
    \[e::= *\mid s\mid -e \mid e\cap e\]
where $s\in\mathcal{S}$. We define $\LangdSt$ over $\mathcal{V}$ (where $p\in\mathcal{P}$ and $e,d\in\mathcal{E}$) as follows:
    \[\alpha::=\top \mid p\mid e\lesssim d\mid\neg\alpha\mid  \alpha\wedge\alpha\mid \Box_e\alpha\mid{\dnec}_e\alpha\mid \alpha \leadsto \alpha\]
\end{definition}

From this, we can define statements of the form $\alpha\vee\beta$, $\alpha\rightarrow\beta$ and $\alpha\leftrightarrow\beta$ in the usual way. We can also define dual symbols for both classical and defeasible standpoint modalities. That is, we define $\Diamond_e\alpha:=\neg\Box_e\neg \alpha\text{ and }\dposs_e\alpha:=\neg{\dnec}_e\neg \alpha$.
Intuitively, $\Diamond_e\alpha$ reads ``it is possible to $e$ that $\alpha$'' and $\dposs_e\alpha$ represents the stronger notion that ``\textit{in the most typical understandings of $e$'s viewpoint}, it is possible that $\alpha$ holds''. We can also define new standpoint symbols $e\cup d$ and $e\setminus d$ by as $e\cup d:= -(-e\cap-d)\text{ and }e\setminus d:=e\cap-d$,
for all $e,d\in\mathcal{E}$. We are also able to define classical standpoint sharpening statements as
$e\leq d:=\Box_{e\setminus d}\bot$. Note that $e\leq d$ intuitively denotes that ``standpoint $e$ is a more specific version of standpoint $d$''. That is, every precisification associated with $e$'s standpoint can also be associated with $d$'s standpoint. The sentence $e\lesssim d$ can then be thought of as the defeasible version of this sentence, which reads that the \textit{most typical} precisifications associated with standpoint $e$ are also associated with standpoint $d$.


The semantic structure used for defeasible standpoint modalities takes the conventions of the semantics for standpoint propositional logics \cite{alvarezrudolph:propositionalstdpt}, as well as complex standpoint expressions introduced in first-order standpoint logic \cite{alvarez:stdptlogicfocase}, and adds an ordering to precisifications, 
where, intuitively, lower precisifications should be considered as ``more typical'' or more preferred states. This again follows the convention for the defeasible modalities introduced for more generalised multimodal logics \cite{britzvarzin:defeasiblemodalities}. 
\begin{definition}\label{def:standpoint-structure}
    A \textbf{state-preferential} standpoint structure (SPSS) is a quadruple $M=(\Pi,\sigma,\gamma,\prec)$ where,
    \begin{itemize}
        \item $\Pi$ is a set of precisifications.
        \item $\sigma:\mathcal{E}\rightarrow 2^\Pi$ is a function such that $\sigma(*)=\Pi$,   $\sigma(e\cap d)=\sigma(e)\cap \sigma(d)$, and  $\sigma(-e)=\Pi\setminus\sigma(e)$. Moreover, we require that $\sigma(s)\neq \emptyset$ for all $s\in\mathcal{S}$.
        \item $\gamma:\Pi\rightarrow 2^\mathcal{P}$ is a map which assigns a classical valuation to each precisification.
        \item $\prec$ is a strict partial order on $\Pi$ such that for every subset $X$ of $\Pi$, and every $\pi\in X$, either $\pi$ is a $\prec$-minimal element of $X$, or there is a $\pi'\in X$ such that $\pi'$ is a $\prec$-minimal element of $X$, and $\pi'\prec\pi$ (well-foundedness).
    \end{itemize}
\end{definition}

\begin{example}[Example~\ref{Example:WorldSaviours} continued]
Assume $\mathcal{P}=\{\animal,\cheese,\egg\}$ and $\mathcal{S}=\{\Env,\Pcf,\Vga,\Vgt\}$. Figure~\ref{Figure:SPSS} depicts an example of a state preferential standpoint structure for the given vocabulary.
\end{example}

\begin{figure}[h]
\begin{center}
\scalebox{0.85}{
\begin{tikzpicture}
[->,>=stealth',thick, auto,node_style/.style={circle,fill=black,minimum size=0.1mm}]
\node (sigma-Vgt) at (5,5.5) {\textcolor{blue}{\footnotesize{$\sigma(\VgtF)$}}} ;
\node (sigma-Env) at (0.5,4.6) {\textcolor{Green}{\footnotesize{$\sigma(\EnvF)$}}} ;
\node (sigma-Vga) at (5,4.6) {\textcolor{orange}{\footnotesize{$\sigma(\VgaF)$}}} ;
\node (sigma-Vga) at (9.75,2.65) {\textcolor{Purple}{\footnotesize{$\sigma(\PcfF)$}}} ;

\node (pi1) at (1,3) [label=left:\mbox{$\{\eggF\}$}] {$\pi_{1}$} ;
\node (pi2) at (3,4) [label=right:\mbox{$\{\animalF,\cheeseF\}$}] {$\pi_{2}$} ;
\node (pi3) at (3,0) [label=right:\mbox{$\{\cheeseF\}$}] {$\pi_{3}$} ;
\node (pi4) at (6,4) [label=right:\mbox{$\{\animalF,\eggF\}$}] {$\pi_{4}$} ;
\node (pi5) at (6,0) [label=right:\mbox{$\{\eggF\}$}] {$\pi_{5}$} ;
\node (pi6) at (9.5,2) [label=right:\mbox{$\{\cheeseF\}$}] {$\pi_{6}$} ;
\node (pi7) at (9.5,0) [label=right:\mbox{$\{\eggF\}$}] {$\pi_{7}$} ;


\draw [blue,rounded corners,thick] (2,-0.7) rectangle (8.25,5.75) ;
\draw [Green,rounded corners,thick] (-0.75,1.25) rectangle (8,5.25) ;
\draw [orange,rounded corners,thick] (2.25,3) rectangle (7.75,5) ;
\draw (2.25,-0.5) [Purple!70,rounded corners=5pt,thick] -- (2.25,1) -- (8.5,1) -- (8.5,3) -- (11,3) -- (11,-0.5) -- cycle ;

\path
(pi3) edge [dashed] (pi2)
(pi3) edge [dashed] (pi4)
(pi5) edge [dashed] (pi2)
(pi5) edge [dashed] (pi4)
(pi7) edge [dashed] (pi6)
;
\end{tikzpicture}
}
\end{center}
\vspace*{-0.1cm}
\caption{A state preferential standpoint structure for $\mathcal{P}=\{\animal,\cheese,\egg\}$ and $\mathcal{S}=\{\Env,\Pcf,\Vga,\Vgt\}$, where $\Pi=\{\pi_{i} \mid 1\leq i\leq 7\}$, $\sigma(\Env)=\{\pi_{1},\pi_{2},\pi_{4}\}$, $\sigma(\Pcf)=\{\pi_{6}, \pi_{7}\}$, $\sigma(\Vga)=\{\pi_{2},\pi_{4}\}$, and $\sigma(\Vgt)=\{\pi_{2},\pi_{3},\pi_{4},\pi_{5}\}$. Moreover, $\gamma(\pi_{1})=\{\egg\}$, $\gamma(\pi_{2})=\{\animal,\cheese\}$, $\gamma(\pi_{3})=\{\cheese\}$, $\gamma(\pi_{4})=\{\animal,\egg\}$, $\gamma(\pi_{5})=\{\egg\}$, $\gamma(\pi_{6})=\{\cheese\}$, and $\gamma(\pi_{7})=\{\egg\}$. (Standpoints and atomic propositions are abbreviated for conciseness.) The strict partial order on~$\Pi$ is given by $\prec=\{(\pi_{3},\pi_{2}),(\pi_{3},\pi_{4}),(\pi_{5},\pi_{2}),(\pi_{5},\pi_{4}),(\pi_{7},\pi_{6})\}$.}
\label{Figure:SPSS}
\end{figure}

We then define satisfaction for a given SPSS.

\begin{definition}\label{def:model-satisfaction-stdptstructure}
    For an SPSS $M$ and a precisification $\pi\in\Pi$, we define the satisfaction relation $\Vdash$ inductively as follows (where $\alpha,\alpha_1,\alpha_2\in\mathcal{L}_{\mathds{S}}$, $s,s_1,s_2\in\mathcal{S}$ and $p\in \mathcal{P}$):
    \begin{itemize}
        \item $M,\pi\Vdash \top$.
        \item $M,\pi\Vdash p$ iff $p\in\gamma(\pi)$.
        \item $M,\pi\Vdash e\lesssim d$ iff $\min_\prec(\sigma(e))\subseteq \sigma(d)$.
        \item $M,\pi\Vdash \neg \alpha$ iff $M,\pi\nVdash \alpha$.
         \item $M,\pi\Vdash \alpha_1\wedge \alpha_2$ iff $M,\pi\Vdash \alpha_1$ and $M,\pi\Vdash \alpha_2$.
         \item $M,\pi\Vdash \Box_s\alpha$ iff $M,\pi'\Vdash \alpha$ for all $\pi'\in \sigma(s)$.
         \item $M,\pi\Vdash \dnec_s\alpha$ iff $M,\pi'\Vdash \alpha$ for all $\pi'\in \min_\prec(\sigma(s))$.
        \item $M,\pi\Vdash \alpha_1\leadsto\alpha_2$ iff $\pi\notin\min_\prec\llbracket\alpha_1\rrbracket$ or $\pi\in\llbracket\alpha_2\rrbracket$.
         \item $M\Vdash\alpha$ iff $M,\pi\Vdash \alpha$ for all $\pi\in \Pi$.
    \end{itemize}
\end{definition}

We also note here several rules which the semantics introduced above satisfies in general. However, it should be noted that this list is not exhaustive by any means.  Firstly, it should be clear that, since both the language and the semantics introduce notions of defeasibility on top of the existing propositional standpoint logic $\mathcal{L}_\mathbb{S}$, any sentences in $\mathcal{L}_\mathbb{S}$ which are tautologous in the original logic (as discussed by Gómez Álvarez and Rudolph \cite{alvarezrudolph:propositionalstdpt}) are still tautologies in our case.  We therefore will only discuss the defeasible parts of the logic explicitly here. We first compare the defeasible statements to their non-defeasible counterparts.

\begin{proposition}[Supra-classicality]\label{supra-classical-properties-defeasible-symbols}
For any SPSS $M$, and any $\pi\in\Pi$:
    \begin{enumerate}
    \begin{multicols}{2}
        \item $M,\pi\Vdash \
        \Box_e\alpha\implies M,\pi\Vdash \
        \dnec_e\alpha$.
        \item $M,\pi\Vdash \
        \dposs_e\alpha\implies M,\pi\Vdash \
        \Diamond_e\alpha$.
        \item $M,\pi\Vdash \
        e\leq d\implies M,\pi\Vdash \
        e\lesssim d$.
        \item $M,\pi\Vdash \alpha\rightarrow\beta \implies M,\pi\Vdash \alpha\leadsto \beta$.
        \end{multicols}
    \end{enumerate}
\noindent And in general none of the converses hold.
\end{proposition}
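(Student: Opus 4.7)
The plan is to prove each of the four implications by unfolding the semantic clauses from Definition~\ref{def:model-satisfaction-stdptstructure} and then exploiting the single observation that, for any subset $X\subseteq\Pi$, we have $\min_{\prec}(X)\subseteq X$, together with well-foundedness to guarantee that $\min_{\prec}(X)\neq\emptyset$ whenever $X\neq\emptyset$. After the implications, I would exhibit small SPSSs as counterexamples to each converse (for instance, reusing the structure in Figure~\ref{Figure:SPSS}).

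First I would handle (1) directly: if $M,\pi\Vdash\Box_e\alpha$ then $M,\pi'\Vdash\alpha$ for all $\pi'\in\sigma(e)$; since $\min_{\prec}(\sigma(e))\subseteq\sigma(e)$, the same holds for all $\prec$-minimal $\pi'\in\sigma(e)$, giving $M,\pi\Vdash\dnec_e\alpha$. Claim (3) is analogous once one notes that $e\leq d$ abbreviates $\Box_{e\setminus d}\bot$, which, using the semantics of $\sigma$ on complex expressions ($\sigma(e\setminus d)=\sigma(e)\setminus\sigma(d)$), is equivalent to $\sigma(e)\subseteq\sigma(d)$; then $\min_{\prec}(\sigma(e))\subseteq\sigma(e)\subseteq\sigma(d)$ yields $e\lesssim d$. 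Claim (4) is similar in spirit: if $M,\pi\Vdash\alpha\rightarrow\beta$, then either $\pi\notin\states{\alpha}$ (in which case $\pi\notin\min_{\prec}\states{\alpha}$ because $\min_{\prec}\states{\alpha}\subseteq\states{\alpha}$) or $\pi\in\states{\beta}$; either disjunct suffices for $\alpha\leadsto\beta$.

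The one place where a slightly more careful argument is needed is claim (2), since $\dposs_e\alpha$ is a defined symbol. Unfolding, $M,\pi\Vdash\dposs_e\alpha$ iff $M,\pi\nVdash\dnec_e\neg\alpha$, iff there exists $\pi'\in\min_{\prec}(\sigma(e))$ with $M,\pi'\Vdash\alpha$. Such a $\pi'$ lies in $\sigma(e)$, so it witnesses $\Diamond_e\alpha$. Here I should also observe that the hypothesis $\dposs_e\alpha$ implicitly requires $\sigma(e)\neq\emptyset$ (otherwise $\dnec_e\neg\alpha$ would hold vacuously), and then well-foundedness of $\prec$ guarantees $\min_{\prec}(\sigma(e))\neq\emptyset$, so the witness $\pi'$ genuinely exists.

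Finally, for the ``in general none of the converses hold'' clause, I would give four small counterexamples, each built with two precisifications $\pi_0\prec\pi_1$ in $\sigma(s)$ chosen to break exactly one of the implications: (a) a model where $\alpha$ holds at $\pi_0$ but not at $\pi_1$ shows $\dnec_s\alpha$ without $\Box_s\alpha$ and also refutes (1), while the dual assignment refutes (2); (b) a model where $\sigma(s)=\{\pi_0,\pi_1\}$, $\sigma(t)=\{\pi_0\}$, and $\pi_0\prec\pi_1$ refutes (3); (c) a model with a non-minimal world satisfying $\alpha\wedge\neg\beta$ and a $\prec$-lower world satisfying $\alpha\wedge\beta$ refutes (4). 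I expect no real obstacle here; the only mildly delicate point is to make sure the minimality sets used in (a) and (c) are as intended, which is why stating the $\prec$-order explicitly in each counterexample is worthwhile.
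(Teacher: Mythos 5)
Your proposal is correct and follows essentially the same route as the paper: each implication is obtained by unfolding the semantic clauses and using $\min_{\prec}(X)\subseteq X$, and the converses are refuted by small explicit SPSSs (the paper uses a single three-precisification structure to handle all four at once, whereas you sketch one per converse, but the witnessing configurations are the same). The only cosmetic remark is that for claim (2) the existential witness already falls out of negating the universal clause for $\dnec_e\neg\alpha$, so the appeal to well-foundedness is not actually needed there.
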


This tells us that $\dnec$, $\lesssim$ and $\leadsto$ are strictly weaker notions than their classical counterparts, while $\dposs$ is a stronger notion. We can also note how defeasible modalities affect the unions and intersections of standpoint symbols.

\begin{proposition}\label{proposition:or-and-union-properties-defeasible-modalities}
      For an SPSS $M$ and $\pi\in\Pi$, we have  $M,\pi\Vdash \dnec_{e} \alpha\wedge \dnec_{d} \alpha \implies M,\pi\Vdash \dnec_{e\cup d} \alpha$ and $M,\pi\Vdash \dposs_{e\cup d} \alpha\implies M,\pi\Vdash \dposs_{e} \alpha\vee \dposs_{d} \alpha$. In general, the converses do not hold.
\end{proposition}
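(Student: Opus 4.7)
The plan is to reduce both implications to a single set-theoretic fact about $\prec$-minima, then refute the converses with a small two-precisification SPSS. First, I would unpack the definition $e\cup d := -(-e\cap-d)$ together with the clauses of Definition~\ref{def:standpoint-structure}, noting that $\sigma(e\cup d)=\sigma(e)\cup\sigma(d)$; this reduces the modal clauses for $\dnec_{e\cup d}$ and $\dposs_{e\cup d}$ to statements about $\min_\prec(\sigma(e)\cup\sigma(d))$.

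Next, I would establish the key lemma
\[
\min_\prec(\sigma(e)\cup\sigma(d))\ \subseteq\ \min_\prec(\sigma(e))\cup\min_\prec(\sigma(d)).
\]
Take $\pi'\in\min_\prec(\sigma(e)\cup\sigma(d))$; without loss of generality $\pi'\in\sigma(e)$. If $\pi'$ were not $\prec$-minimal in $\sigma(e)$, well-foundedness would give some $\pi''\in\sigma(e)\subseteq\sigma(e)\cup\sigma(d)$ with $\pi''\prec\pi'$, contradicting the minimality of $\pi'$ in the larger set. So $\pi'\in\min_\prec(\sigma(e))$. From this inclusion, (1) is immediate: if $\alpha$ holds on all of $\min_\prec(\sigma(e))$ and on all of $\min_\prec(\sigma(d))$, then it holds on their union and hence on $\min_\prec(\sigma(e\cup d))$. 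For (2), $\dposs_{e\cup d}\alpha$ yields some $\pi'\in\min_\prec(\sigma(e\cup d))$ with $M,\pi'\Vdash\alpha$; by the lemma $\pi'\in\min_\prec(\sigma(e))$ or $\pi'\in\min_\prec(\sigma(d))$, giving $\dposs_e\alpha\vee\dposs_d\alpha$.

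For the failures of the converses, I would use a single SPSS with $\Pi=\{\pi_1,\pi_2\}$, $\sigma(e)=\{\pi_1\}$, $\sigma(d)=\{\pi_2\}$, $\pi_2\prec\pi_1$, and a single atom $p$ with $\gamma(\pi_2)=\{p\}$ and $\gamma(\pi_1)=\emptyset$. Then $\min_\prec(\sigma(e\cup d))=\{\pi_2\}$, so $\dnec_{e\cup d} p$ holds even though $\dnec_e p$ fails, refuting the converse of the first implication. Flipping the valuation to $\gamma(\pi_1)=\{p\}$, $\gamma(\pi_2)=\emptyset$ gives $\dposs_e p$ but not $\dposs_{e\cup d} p$, refuting the converse of the second.

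No step looks genuinely hard; the only place to be careful is the minimality argument, where one must invoke well-foundedness rather than just transitivity to rule out infinite descending chains when arguing that a $\prec$-minimum in the union is already a $\prec$-minimum in its containing component.
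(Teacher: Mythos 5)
Your proof is correct and follows essentially the same route as the paper's: the key step --- that a $\prec$-minimal element of $\sigma(e)\cup\sigma(d)$ lying in $\sigma(e)$ must already be $\prec$-minimal in $\sigma(e)$, since any $\prec$-smaller witness in $\sigma(e)$ would also lie in the union --- is exactly the paper's argument, merely packaged as an explicit lemma and applied directly to both implications (the paper instead derives the second implication from the first by De Morgan duality), and your two-precisification counterexamples serve the same purpose as the paper's single three-precisification model. One small correction to your closing remark: the lemma does not need well-foundedness at all, because non-minimality of $\pi'$ in $\sigma(e)$ already yields some $\pi''\in\sigma(e)$ with $\pi''\prec\pi'$ directly from the definition of a minimal element, with no descending-chain argument required.
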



  



This shows the relationship that defeasible modalities have when combining them with more ``compound'' standpoint symbols. It is also worth noting that this relationship is different to the case for classical standpoint modalities. For example $M\Vdash \Box_{e} \alpha\wedge \Box_{d} \alpha$ and $M\Vdash \Box_{e\cup d} \alpha$ are equivalent, while this is not the case for defeasible modalities. 

Our semantics also gives us a clearer understanding of $\lesssim$. The statement $e\leq d$ can be written as a modal sharpening $\Box_{e\setminus d}\bot$, which refers to the semantic condition on a standpoint structure $M$ where $\sigma(e)\subseteq \sigma(d)$ \cite{alvarez:stdptlogicfocase}. In the defeasible case, it is clear that an analogous translation does not occur. Consider the SPSS $M$ in Figure \ref{Figure:SPSS}. Clearly,  ${M\Vdash \Vgt\lesssim\Pcf}$, while $M\nVdash \Box_{\Vgt\setminus\Pcf}\bot$ and ${M\nVdash \dnec_{\Vgt\setminus\Pcf}\bot}$, since $\pi_2\in \min_\prec\sigma(\Vgt\setminus\Pcf)$. In fact, the interpretation of the defeasible sharpening $\lesssim$ behaves as a defeasible consequence relation on the standpoint hierarchy, since $s\lesssim t$ can be interpreted semantically as stating ``precisifications in $s$ are \textit{typically} included in $t$''. In order to motivate this, we note that $\lesssim$ satisfies a version of the KLM rationality postulates \cite{kraus:nonmonotonic}.

\begin{proposition}\label{pproposition:lesssim-KLM-properties}
 For $e,d,g\in\mathcal{E}$, an SPSS $M$, and $\pi\in \Pi$ we have:
    \begin{itemize}
        \item $M,\pi\nVdash *\lesssim e\cap - e$\hfill \textit{(Consistency)}
        \item $M,\pi\Vdash e\lesssim e$\hfill \textit{(Reflexivity)}
        \item If $M,\pi\Vdash (e\leq d)\wedge (d\leq e)$ and $M,\pi\Vdash e\lesssim g$ then $M,\pi\Vdash d\lesssim g$ \hfill \textit{(LLE)}
        \item If $M,\pi\Vdash e\lesssim d$, and $M,\pi\Vdash d\leq g$ then $M,\pi\Vdash e\lesssim g$ \hfill \textit{(RW)}
        \item If $M,\pi\Vdash e\lesssim d$, and $M,\pi\Vdash e\lesssim g$ then $M,\pi\Vdash e\lesssim d\cap g$\hfill \textit{(And)}
        \item If $M,\pi\Vdash e\lesssim d$, and $M,\pi\Vdash g\lesssim d$ then $M,\pi\Vdash e\cup g\lesssim d$ \hfill \textit{(Or)}
        \item If $M,\pi\Vdash e\lesssim d$, and $M,\pi\Vdash e\lesssim g$ then $M,\pi\Vdash e\cap g\lesssim d$ \hfill \textit{(CM)}
        \end{itemize}
    \end{proposition}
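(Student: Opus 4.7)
The plan is to exploit the fact that the satisfaction clause for $e\lesssim d$ is world-independent, i.e., $M,\pi\Vdash e\lesssim d$ iff $\min_\prec(\sigma(e))\subseteq\sigma(d)$, with no reference to $\pi$. This reduces every one of the seven claims to a set-theoretic assertion about $\sigma$, $\min_\prec$, and the Boolean operations on standpoint expressions. The background facts I will freely use are: $\sigma(e\cap d)=\sigma(e)\cap\sigma(d)$, $\sigma(-e)=\Pi\setminus\sigma(e)$ (hence $\sigma(e\cup d)=\sigma(e)\cup\sigma(d)$), the derived equivalence $M\Vdash e\leq d$ iff $\sigma(e)\subseteq\sigma(d)$, and the well-foundedness condition on $\prec$.

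Five of the postulates are then essentially one-liners. For \emph{Consistency}, $\sigma(*\cap(e\cap-e))=\emptyset$, while $\sigma(*)=\Pi\neq\emptyset$ (since every $\sigma(s)$ is non-empty), and well-foundedness then guarantees $\min_\prec(\Pi)\neq\emptyset$, so $\min_\prec(\sigma(*))\not\subseteq\emptyset$. \emph{Reflexivity} is immediate from $\min_\prec(X)\subseteq X$. \emph{LLE} follows because $e\leq d$ and $d\leq e$ together give $\sigma(e)=\sigma(d)$, hence $\min_\prec(\sigma(e))=\min_\prec(\sigma(d))$. \emph{RW} is simple chaining: $\min_\prec(\sigma(e))\subseteq\sigma(d)\subseteq\sigma(g)$. \emph{And} uses $\sigma(d\cap g)=\sigma(d)\cap\sigma(g)$ and intersects the two inclusions.

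For \emph{Or}, I will pick any $\pi\in\min_\prec(\sigma(e\cup g))=\min_\prec(\sigma(e)\cup\sigma(g))$ and show $\pi\in\sigma(d)$. Since $\pi$ is in $\sigma(e)$ or $\sigma(g)$, assume the former: any $\prec$-predecessor of $\pi$ in $\sigma(e)$ would also lie in $\sigma(e)\cup\sigma(g)$, contradicting minimality, so $\pi\in\min_\prec(\sigma(e))\subseteq\sigma(d)$; the other case is symmetric.

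\emph{CM} is the main obstacle and the only step that genuinely uses well-foundedness. The key lemma I will prove is: if $\min_\prec(\sigma(e))\subseteq\sigma(g)$, then $\min_\prec(\sigma(e)\cap\sigma(g))\subseteq\min_\prec(\sigma(e))$. To see this, take $\pi\in\min_\prec(\sigma(e)\cap\sigma(g))$ and suppose $\pi\notin\min_\prec(\sigma(e))$. By well-foundedness, there is some $\pi'\in\min_\prec(\sigma(e))$ with $\pi'\prec\pi$. By the hypothesis, $\pi'\in\sigma(g)$, so $\pi'\in\sigma(e)\cap\sigma(g)$ with $\pi'\prec\pi$, contradicting the choice of $\pi$. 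Applying this to the CM premises $e\lesssim d$ and $e\lesssim g$, we get $\min_\prec(\sigma(e\cap g))\subseteq\min_\prec(\sigma(e))\subseteq\sigma(d)$, which is precisely $e\cap g\lesssim d$.
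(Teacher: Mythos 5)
Your proof is correct and follows essentially the same route as the paper's: each postulate is reduced to the world-independent set-theoretic condition $\min_\prec(\sigma(e))\subseteq\sigma(d)$, with Or and CM handled by the same minimality-transfer argument (any $\prec$-predecessor inside the smaller set would also witness non-minimality in the larger set, using well-foundedness to extract a minimal predecessor in the CM case). The only difference is presentational — you isolate the CM step as a standalone lemma, whereas the paper inlines the identical argument.
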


We can also see that defeasible sharpenings satisfy an adapted version of the classical standpoint axiom (\textbf{P}) \cite{alvarezrudolph:propositionalstdpt}, given by $(e\leq d)\rightarrow(\Box_d \phi\rightarrow\Box_e \phi)$, and thus behaves as a natural extension of the classical sharpening statements. 

\begin{proposition}\label{proposition:adaptedaxiomPfordefeasiblesharpenings}
    For $e,d\in \mathcal{E}$, $\phi\in\LangdSt$, an SPSS $M=(\Pi,\sigma,\gamma,\prec)$, and any $\pi\in\Pi$, we have that $M,\pi\Vdash (e\lesssim d)\rightarrow(\Box_d\phi \rightarrow\dnec_e \phi)$.
\end{proposition}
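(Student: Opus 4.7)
The plan is to unfold the definitions in Definition~\ref{def:model-satisfaction-stdptstructure} and then chain two simple inclusions. Since the statement is a nested implication, I would fix an SPSS $M=(\Pi,\sigma,\gamma,\prec)$ and a precisification $\pi\in\Pi$, assume $M,\pi\Vdash e\lesssim d$ and $M,\pi\Vdash \Box_d\phi$, and aim to conclude $M,\pi\Vdash \dnec_e\phi$.

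First I would translate the two hypotheses using Definition~\ref{def:model-satisfaction-stdptstructure}. The assumption $M,\pi\Vdash e\lesssim d$ unpacks to the set-theoretic inclusion $\min_\prec(\sigma(e))\subseteq \sigma(d)$, and the assumption $M,\pi\Vdash \Box_d\phi$ unpacks to: for every $\pi''\in \sigma(d)$, $M,\pi''\Vdash \phi$. Note that both conditions are conditions on the structure $M$ itself and do not actually depend on the specific world $\pi$, which is typical of sharpening and box statements under this semantics.

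Next I would pick an arbitrary $\pi'\in \min_\prec(\sigma(e))$ in order to verify the goal $M,\pi\Vdash \dnec_e\phi$. By the first hypothesis, $\pi'\in \sigma(d)$; by the second hypothesis applied to this $\pi'$, we obtain $M,\pi'\Vdash \phi$. Since $\pi'$ was arbitrary in $\min_\prec(\sigma(e))$, the clause for $\dnec_e$ in Definition~\ref{def:model-satisfaction-stdptstructure} yields $M,\pi\Vdash \dnec_e\phi$, completing the proof of the implication.

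There is no real obstacle here: the argument is essentially the standard monotonicity step ``$\Box_d\phi$ distributes down any subset of $\sigma(d)$,'' combined with the fact that the defeasible sharpening $e\lesssim d$ is precisely the inclusion $\min_\prec(\sigma(e))\subseteq\sigma(d)$ needed to apply this. One might observe, as a sanity check using Proposition~\ref{supra-classical-properties-defeasible-symbols}, that the stronger classical axiom $(e\leq d)\rightarrow(\Box_d\phi\rightarrow\Box_e\phi)$ would yield $\Box_e\phi$ and hence $\dnec_e\phi$; the present proposition merely shows that weakening both the premise (from $\leq$ to $\lesssim$) and the conclusion (from $\Box_e$ to $\dnec_e$) still preserves the implication, which is exactly what the $\min_\prec$-version of the inclusion supports.
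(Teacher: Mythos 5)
Your proof is correct and uses the same two ingredients as the paper's: the inclusion $\min_\prec(\sigma(e))\subseteq\sigma(d)$ from $e\lesssim d$ chained with the fact that $\Box_d\phi$ forces $\phi$ at every element of $\sigma(d)$. The only difference is that you argue directly while the paper phrases the identical argument as a proof by contradiction, so the approaches are essentially the same.
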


However, other variants of (\textbf{P}) which incorporate defeasible symbols, such as $({e\lesssim d})\rightarrow(\dnec_d\phi \rightarrow\Box_e \phi)$ or $(e\leq d)\rightarrow(\dnec_d\phi \rightarrow\dnec_e \phi)$, are not satisfied by every SPSS.

It is shown in the more general modal logic $\Lflag$, introduced by Britz and Varzinczak \cite{britzvarzin:defeasiblemodalities}, that the defeasible implication operator $\leadsto$ satisfies a similar set of KLM postulates. This gives us a notion of defeasible implication between Boolean formulas in the logic. However, it is worth noting that, while $\leadsto$ gives us the original KLM-style consequence for Boolean formulas in our language, this intuition does not follow when we combine it with standpoint modalities in our language. In particular, when we bound a defeasible implication $p\leadsto q$ with some defeasible or non-defeasible standpoint modality, we may expect this to tell us something about a standpoints defeasible beliefs. For example, we may expect $\Box_s(p\leadsto q)$ to tell us that ``\textit{from $s$'s standpoint, the most typical instances of $p$ are instances of $q$'}'. The following example shows us that this is not the case. 

\begin{example}
    Consider an SPSS $M=(\Pi,\sigma,\gamma,\prec)$ over propositional atoms $p$ and $q$ and a single standpoint $s$ defined as follows: $\Pi=\{\pi_1,\pi_2\}$, $\sigma(s)=\{\pi_2\}$, $\gamma(\pi_1)=\{p,q\}$, $ \gamma(\pi_2)=\{p\}$, and $\pi_1\prec\pi_2\prec\pi_3$. Then note $M\Vdash \Box_s(p\leadsto q)$ iff $M,\pi\Vdash p\leadsto q$ for all $\pi\in \sigma(s)$. That is, $\pi\in \sigma(s)$ implies $\pi \notin \min_\prec \llbracket p\rrbracket$ or $\pi \in \llbracket q\rrbracket$. Since in the above model $\sigma(s)\cap \min_\prec \llbracket p\rrbracket=\emptyset$, we have $M\Vdash \Box_s(p\leadsto q)$, even though the only precisification in $\sigma(s)$ (and therefore the minimal one) violates $p\rightarrow q$, and so $s$'s standpoint intuitively does not believe that the most typical instances of $p$ are instances of $q$.
\end{example}

Thus, defeasible implication $\LangdSt$ does not serve to represent standpoints holding defeasible beliefs, and a separate semantics is introduced for this problem in \cite{LeisegangRudolphMeyer:SacairStandpoints}. We also note that $\leadsto$ does not provide an intuitive account of defeasible consequences for modal statements. For example,  $M\Vdash \Box_e\alpha\leadsto \beta$ if and only every $\pi\in \min_\prec\llbracket\Box_e\alpha\rrbracket\subseteq\llbracket\beta\rrbracket$. However, it is clear that by definition, either $\llbracket\Box_e\alpha\rrbracket=\emptyset$ or $\llbracket\Box_e\alpha\rrbracket=\Pi$. In the first case, $M\Vdash \Box_e\alpha\leadsto \beta$ trivially; in the second, $M\Vdash \Box_e\alpha\leadsto \beta$ iff $M,\pi\Vdash \beta$ for every $\pi\in\min_\prec \Pi$, which is equivalent to $M\Vdash \dnec_*\beta$. We therefore treat $\leadsto$ as a part of the language which is useful for describing defeasible implication between Boolean statements, but not as an intuitively meaningful statement outside of these bounds.


\section{Satisfiability Checking and Preferential Entailment}\label{section:satisfiability-and-pref-entailment}

In this section we address the notion of preferential satisfiability and preferential entailment in the semantics for $\LangdSt$. We differentiate between local and global satisfaction, as is defined below.

\begin{definition}
    Let $\alpha$ be a sentence in $\LangdSt$. We say that $\alpha$ is \textbf{locally satisfiable} if there exists an SPSS $M=(\Pi,\sigma,\gamma,\prec)$ for which there is some precisification $\pi\in \Pi$ such that $M,\pi\Vdash \alpha$. We say that $\alpha$ is \textbf{globally satisfiable} if there exists an SPSS $M$ such that $M\Vdash \alpha$.  For any finite set $A\subseteq\LangdSt$, we say that $A$ is locally (resp. globally) satisfiable if $\bigwedge A$ is locally (resp. globally) satisfiable.
\end{definition}

Closely related to this is the notion of preferential entailment, which extends the notion of preferential entailment found in the propositional case by Kraus et al. \cite{kraus:nonmonotonic}, and in the case for defeasible modalities in logic \textbf{K} by Britz et al. \cite{britzetal:preferentialreasoningmodal}.

\begin{definition}
    Consider a finite knowledge base $\KB\subseteq \LangdSt$, and a sentence $\alpha\in\LangdSt$. We say that $\KB$ \textbf{preferentially entails} $\alpha$ or write $\KB \vDash_P\alpha$ if, for every SPSS $M$ such that $M\Vdash \phi$ for all $\phi\in\KB$, we have $M\Vdash \alpha$.
\end{definition}

It is noted by Britz et al. \cite{britzmeyervar:normalmodalpreferential} that preferential entailment defined this way induces a monotonic consequence operator. Therefore while the defeasible symbols we introduce are non-monotonic on the object-level, the entailment-level reasoning remains monotonic. The following propositions show that both global satisfiability and preferential entailment can be expressed in terms of local satisfiability.

\begin{proposition}\label{proposition:global-and-entailment-canbe-expressed-as-local}
    Consider a globally satisfiable knowledge base $\KB\subseteq \LangdSt$ and a sentence $\alpha\in\LangdSt$. Then $\alpha$ is globally satisfiable iff $\Box_* \alpha$ is locally satisfiable. Furthermore, $\KB\vDash_P \alpha$ iff $\Box_*(\bigwedge \KB)\wedge\neg \alpha$ is not locally satisfiable.
\end{proposition}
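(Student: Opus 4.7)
The plan is to prove both equivalences by unwinding the definitions, exploiting the key fact that $\sigma(*) = \Pi$ in every SPSS, so $\Box_*$ acts as a universal quantifier over precisifications and thus lets us internalize the notion ``$\alpha$ holds at every precisification'' as a formula evaluated at a single precisification.

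For the first equivalence, I would handle each direction separately. Suppose $\alpha$ is globally satisfiable, witnessed by an SPSS $M = (\Pi, \sigma, \gamma, \prec)$ with $M, \pi \Vdash \alpha$ for every $\pi \in \Pi$. Since $\sigma(*) = \Pi$, the satisfaction clause for $\Box_s$ in Definition~\ref{def:model-satisfaction-stdptstructure} immediately yields $M, \pi \Vdash \Box_* \alpha$ for any chosen $\pi \in \Pi$, so $\Box_* \alpha$ is locally satisfiable in $M$. Conversely, if $M, \pi \Vdash \Box_* \alpha$ holds for some precisification $\pi$ of some SPSS $M$, then by the same clause $M, \pi' \Vdash \alpha$ for every $\pi' \in \sigma(*) = \Pi$, which by definition is $M \Vdash \alpha$, giving global satisfiability.

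For the second equivalence, I would reformulate preferential entailment as the non-existence of a counterexample model and then apply the first equivalence to the conjunction $\bigwedge \KB$. By definition, $\KB \vDash_P \alpha$ fails iff there exists an SPSS $M$ with $M \Vdash \bigwedge \KB$ and $M \not\Vdash \alpha$. The latter condition means there is some $\pi \in \Pi$ with $M, \pi \Vdash \neg \alpha$, while the former is equivalent, by the reasoning above, to $M, \pi \Vdash \Box_*(\bigwedge \KB)$ at that same $\pi$ (since $\Box_*(\bigwedge \KB)$ holds at any precisification iff it holds at all of them). Conjoining, the existence of such a counterexample $M$ is precisely the local satisfiability of $\Box_*(\bigwedge \KB) \wedge \neg \alpha$, so $\KB \vDash_P \alpha$ iff this formula is not locally satisfiable.

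There is no serious obstacle here — the argument is essentially bookkeeping around the universality of $\sigma(*)$. The only mild subtlety worth noting is the hypothesis that $\KB$ be globally satisfiable: the equivalence in fact holds without it, since if $\KB$ has no global model then $\KB \vDash_P \alpha$ holds vacuously and, by the first equivalence, $\Box_*(\bigwedge \KB)$ is not locally satisfiable, making $\Box_*(\bigwedge \KB) \wedge \neg \alpha$ unsatisfiable for every $\alpha$. The hypothesis merely excludes this degenerate case and ensures the characterisation is informative.
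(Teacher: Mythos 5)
Your proof is correct and follows essentially the same route as the paper: both equivalences are obtained by unwinding the definitions using the fact that $\sigma(*)=\Pi$, so that $\Box_*$ internalises global truth, with the entailment claim reduced to the (non-)existence of a countermodel. Your closing remark that the global-satisfiability hypothesis on $\KB$ is dispensable is also accurate (the paper only invokes it in the right-to-left direction of the second claim, where it is in fact not needed), and is a small improvement over the stated result.
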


Our first method for checking preferential satisfiability is via translating our logic into classical propositional standpoint logic as described by Gómez Álvarez and Rudolph~\cite{alvarezrudolph:propositionalstdpt}, including complex standpoint expressions. 

\begin{definition}\label{def:defeasible-to-classical-standpoint-translation}
    Let $M=(\Pi,\sigma,\gamma,\prec)$ be a state preferential standpoint structure over the vocabulary $\mathcal{V=(P,S)}$. Then the translation $T(M)=(\Pi,\sigma',\gamma)$ of $M$ is a standpoint structure over the vocabulary $\mathcal{V=(P,S}\cup\tilde{\mathcal{E}})$ where: $\Pi$ and $\gamma$ are the same in each structure, $\tilde{\mathcal{E}}=\{\tilde{e}\mid e\in\mathcal{E}\}$, and $\sigma'$ is defined by $\sigma'(s)=\sigma(s)$ for $s\in\SSS$ and $\sigma'(\tilde{e})=\min_\prec \sigma(e)$ for $\tilde{e}\in\tilde{\mathcal{E}}$. The value of $\sigma'$ is extended to complex standpoint expressions inductively, as is done in the literature \cite{alvarez:stdptlogicfocase}.
\end{definition}

We can then express the satisfiability of defeasible modalities and sharpenings in terms of classical standpoint logic.

\begin{proposition}\label{proposition:translation-to-classical-standpoint-logic}
    For any SPSS $M$, and any sentence $\alpha$ in classical standpoint logic, we have that 
    $M\Vdash \dnec_e\alpha$ iff $T(M)\Vdash \Box_{\tilde{e}}\alpha$,
    and 
    $M\Vdash e\lesssim d$ iff $T(M)\Vdash \tilde{e}\leq d$.
\end{proposition}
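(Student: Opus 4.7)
The plan is to reduce both biconditionals to a single substitutivity lemma: for every sentence $\alpha$ in the classical standpoint language (i.e., one containing no defeasible operators and only standpoint symbols from $\SSS$ together with complex expressions built from them) and every $\pi\in\Pi$, we have $M,\pi\Vdash\alpha$ iff $T(M),\pi\Vdash\alpha$. Once this lemma is in hand, both halves of the proposition drop out by unpacking the semantic clauses, because $T(M)$ is designed so that $\sigma'(\tilde{e})=\min_{\prec}\sigma(e)$ while $\sigma'$ and $\sigma$ agree on all classical standpoint symbols.

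The substitutivity lemma proceeds by structural induction on $\alpha$. The propositional base case is immediate, since $T$ preserves $\gamma$ unchanged. The Boolean cases follow from the induction hypothesis. For $\Box_s\beta$ with $s\in\SSS$, we use that $\sigma'(s)=\sigma(s)$, so the range of the quantifier over precisifications is identical and the induction hypothesis applies to $\beta$. For complex standpoint expressions occurring in classical sharpenings $e\leq d$, one shows by a side induction on $e$ that $\sigma'(e)=\sigma(e)$ whenever $e$ is built from $\mathcal{S}$ and the Boolean operators on standpoints, since $\sigma$ and $\sigma'$ obey the same recursive clauses for $\cap$ and $-$ and agree on atoms; consequently $\sigma(e)\subseteq\sigma(d)$ iff $\sigma'(e)\subseteq\sigma'(d)$.

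For the first biconditional, observe that $M\Vdash\dnec_e\alpha$ unfolds (independently of the outer $\pi$) to ``$M,\pi'\Vdash\alpha$ for every $\pi'\in\min_{\prec}\sigma(e)$.'' Applying the lemma to $\alpha$ converts each such $M,\pi'\Vdash\alpha$ into $T(M),\pi'\Vdash\alpha$, and substituting $\sigma'(\tilde{e})$ for $\min_{\prec}\sigma(e)$ gives exactly $T(M)\Vdash\Box_{\tilde{e}}\alpha$. For the second biconditional, no induction is needed: by Definition~\ref{def:model-satisfaction-stdptstructure}, $M\Vdash e\lesssim d$ iff $\min_{\prec}\sigma(e)\subseteq\sigma(d)$, and by Definition~\ref{def:defeasible-to-classical-standpoint-translation} plus the side induction above, this is the same as $\sigma'(\tilde{e})\subseteq\sigma'(d)$, i.e., $T(M)\Vdash\tilde{e}\leq d$.

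The only genuinely delicate point is the bookkeeping around complex standpoint expressions: $T$ augments the vocabulary with fresh symbols $\tilde{e}$, so one must be careful to restrict the substitutivity lemma to sentences over the \emph{original} vocabulary $\mathcal{V}=(\mathcal{P},\mathcal{S})$ and to verify that $\sigma'$ agrees with $\sigma$ on every complex expression not involving a tilde. Beyond this verification the argument is essentially a transcription of the definitions, and no appeal to well-foundedness of $\prec$ is required since the characterisation of $\min_{\prec}\sigma(e)$ is absorbed directly into $\sigma'(\tilde{e})$.
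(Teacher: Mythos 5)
Your proposal is correct and follows essentially the same route as the paper: both arguments unfold the semantic clauses for $\dnec_e$ and $\lesssim$, substitute $\sigma'(\tilde{e})=\min_{\prec}\sigma(e)$, and rely on the fact that $M$ and $T(M)$ satisfy the same classical standpoint sentences at each precisification (the paper asserts this invariance in one line from $\Pi$ and $\gamma$ being shared, whereas you spell it out as an explicit substitutivity lemma proved by structural induction, with a side induction showing $\sigma'$ and $\sigma$ agree on tilde-free complex expressions). The extra rigour is welcome but does not change the substance of the argument.
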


This can then give us a means to determine satisfiability, using existing methods in standpoint logics \cite{alvarezrudolph:propositionalstdpt,alvarez:stdptlogicfocase}. 

\begin{corollary}
    For any sentence $\alpha\in \LangdSt$ not containing the symbol ``$\leadsto$'', we have that $\alpha$ is locally satisfiable iff $T(\alpha)$ is satisfiable in classical standpoint semantics, where $T(\alpha)$ is the sentence formed by replacing every instance of $\dnec_e$ occurring in $\alpha$ with $\Box_{\tilde{e}}$ and every subsentence of the form $e\lesssim d$ with $\tilde{e}\leq d$.
\end{corollary}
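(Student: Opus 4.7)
The plan is to derive this corollary from Proposition~\ref{proposition:translation-to-classical-standpoint-logic} by structural induction. First I would establish an auxiliary lemma: for every SPSS $M$, every precisification $\pi \in \Pi$, and every $\leadsto$-free $\alpha \in \LangdSt$, we have $M, \pi \Vdash \alpha$ if and only if $T(M), \pi \Vdash T(\alpha)$. The proof of the lemma proceeds by induction on the structure of $\alpha$: the atomic, Boolean, and classical $\Box_s$ cases are immediate (the last because $T$ leaves $\Box_s$ untouched and $\sigma'(s) = \sigma(s)$ for each basic $s$), while the $\dnec_e$ and $e \lesssim d$ cases follow directly from Proposition~\ref{proposition:translation-to-classical-standpoint-logic} combined with the induction hypothesis applied to the body of the formula.

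From this lemma, the forward direction of the corollary is immediate: if $M, \pi \Vdash \alpha$ for some SPSS $M$ and precisification $\pi$, then $T(M), \pi \Vdash T(\alpha)$, and since Definition~\ref{def:defeasible-to-classical-standpoint-translation} guarantees that $T(M)$ is a classical standpoint structure over the expanded vocabulary $(\mathcal{P}, \mathcal{S} \cup \tilde{\mathcal{E}})$, it follows that $T(\alpha)$ is satisfiable in classical standpoint semantics.

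The backward direction will be the main obstacle. Given $N = (\Pi, \sigma', \gamma)$ and $\pi^* \in \Pi$ with $N, \pi^* \Vdash T(\alpha)$, I would build an SPSS $M = (\Pi, \sigma, \gamma, \prec)$ by letting $\sigma(s) = \sigma'(s)$ for each basic $s$ (extended to complex standpoint expressions by the clauses of Definition~\ref{def:standpoint-structure}) and then constructing a well-founded strict partial order $\prec$ on $\Pi$ so that $\min_\prec \sigma(e) = \sigma'(\tilde{e})$ for every $\tilde{e}$ occurring in $T(\alpha)$. The hard part is arranging a single $\prec$ that realises all these minimum-set constraints simultaneously, since an arbitrary $N$ need not even satisfy $\sigma'(\tilde{e}) \subseteq \sigma'(e)$. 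To handle this I would first pass to a structure in which that inclusion does hold, which can be arranged without disturbing satisfaction of $T(\alpha)$, for instance by adjoining fresh precisifications dedicated to each $\tilde{e}$ and placing them into the basic standpoints needed so that $\sigma(e)$ absorbs $\sigma'(\tilde{e})$; I would then set $\prec$ so that these witnessing elements sit strictly below all other members of the corresponding $\sigma(e)$. Once such an $M$ is in hand, applying the lemma in the reverse direction yields $M, \pi^* \Vdash \alpha$, completing the proof.
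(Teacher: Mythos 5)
The paper offers no proof of this corollary at all --- it is stated as an immediate consequence of Proposition~\ref{proposition:translation-to-classical-standpoint-logic} --- so there is nothing to compare your argument against directly. Your auxiliary lemma and the forward direction are fine; indeed they supply a strengthening of Proposition~\ref{proposition:translation-to-classical-standpoint-logic} (which covers only a single unnested modality and global satisfaction) to arbitrary nesting and pointwise satisfaction, which any honest proof would need.

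The backward direction, however, contains a genuine gap, and you have put your finger on exactly the right place without resolving it: in general there is \emph{no} single well-founded order $\prec$ realising $\min_\prec\sigma(e)=\sigma'(\tilde e)$ for all expressions $e$ occurring in $\alpha$ simultaneously, and neither adjoining fresh precisifications nor ``absorbing'' $\sigma'(\tilde e)$ into $\sigma'(e)$ repairs this. Concretely, take $\alpha=(s\leq t)\wedge(t\leq s)\wedge\dnec_s p\wedge\dnec_t\neg p$. In any SPSS the first two conjuncts force $\sigma(s)=\sigma(t)\neq\emptyset$, so $\min_\prec\sigma(s)=\min_\prec\sigma(t)$ is non-empty by well-foundedness and would have to satisfy both $p$ and $\neg p$; hence $\alpha$ is locally unsatisfiable. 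Yet $T(\alpha)=(s\leq t)\wedge(t\leq s)\wedge\Box_{\tilde s}p\wedge\Box_{\tilde t}\neg p$ is classically satisfiable: take $\Pi=\{\pi_1,\pi_2\}$ with $\gamma(\pi_1)=\{p\}$, $\gamma(\pi_2)=\emptyset$, $\sigma'(s)=\sigma'(t)=\Pi$, $\sigma'(\tilde s)=\{\pi_1\}$, $\sigma'(\tilde t)=\{\pi_2\}$. Note that $\sigma'(\tilde s)\subseteq\sigma'(s)$ and $\sigma'(\tilde t)\subseteq\sigma'(t)$ already hold here, so the inclusion you propose to enforce is not the obstacle; the obstacle is that $\tilde s$ and $\tilde t$ have been interpreted in a way that no single preference order on $\sigma(s)=\sigma(t)$ can reproduce as the respective sets of minimal elements. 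This shows not merely that your construction fails but that the statement in its literal right-to-left reading fails: a correct version would have to augment $T(\alpha)$ with bridge axioms tying the various $\tilde e$ to one another (beyond $\tilde e\leq e$, which the countermodel satisfies), or restrict attention to classical structures in the image of $T$. So the backward direction cannot be completed as sketched.
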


This approach however, has two weaknesses. Firstly, it does not provide a method for determining the satisfiability of sentences in $\LangdSt$ containing ``$\leadsto$''. Secondly, the set $\tilde{\mathcal{E}}$ as given in Definition \ref{def:defeasible-to-classical-standpoint-translation} is infinite. Since $\mathcal{E}$ behaves as a Boolean algebra of subsets, we could clearly reduce the size of $\tilde{\mathcal{E}}$ to account for equivalences between Boolean formulas. However, for any complex standpoint expression $e$, we cannot in general express $\tilde{e}$ in terms of other, smaller standpoint symbols in $\tilde{\mathcal{E}}$. For example, we cannot express $\widetilde{e\cap d}$ in terms of $\tilde{e}$ and $\tilde{d}$. This means that if $|\SSS|=n$, then the size of $\tilde{\mathcal{E}}$ (once reduced to account for Boolean equivalences) may still be as big as $2^{2^n}$, since we need to consider a new standpoint for every non-equivalent Boolean combination of the original standpoints. This potential double exponential blow-up means that the given translation is not an effective means for determining satisfiability in $\LangdSt$. However, in the restricted case where we only allow sentences with atomic standpoint indexes (i.e., where we restrict $\mathcal{E}$ to the set $\SSS\cup\{*\}$), the size of $\tilde{\mathcal{E}}$ is only $2n+2$. Therefore, since satisfiability for propositional standpoint logic is \textsc{NP}-complete \cite{alvarezrudolph:propositionalstdpt}, the translation above provides an \textsc{NP}-complete means for checking satisfiability in the setting where only atomic standpoint indexes occur in $\alpha$.

However, in the general case for determining satisfiability and preferential entailments, we need to turn to other methods in order to avoid a double exponential blow-up in standpoint symbols. We therefore propose a tableau algorithm for computing whether a given statement in $\LangdSt$ is locally satisfiable. Our tableau is semantic in nature, and follows closely conventions for semantic tableau in related modal logics \cite{britzvarzin:defeasiblemodalities,Castilho1999FormalizingAA,Goré1999}. To this end, we introduce a normal form for sentences in $\LangdSt$.

\begin{definition}
    If a standpoint expression $c\in\mathcal{E}$ is of the form 
    $c=s_1\cap s_2\cap...\cap s_k\cap -s_{k+1}\cap...\cap -s_{l}$
     where $s_1,...,s_l\in\SSS$ we call it a \textbf{standpoint conjunct}. If $k=0$, for the sake of the tableau we add $*$ to the conjunct so that $c=*\cap-(s_{k+1}\cup...\cup s_m)$.

    For any $e\in \mathcal{E}$, we say that $e$ is in \textbf{disjunctive normal form} (DNF) if it is of the form $e=c_1\cup...\cup c_m$
    where $c_1,...,c_m$ are all standpoint conjuncts. We then say a formula $\alpha\in\LangdSt$ is in \textbf{index normal form} (INF) if every standpoint expression which appears in $\phi$ is in disjunctive normal form.
\end{definition}

Since the logic of standpoint expressions operates as a Boolean algebra of subsets, the well-known result that each standpoint expression has an equivalent expression in DNF holds. Therefore, when we check for satisfiability using the following tableau method, we first assume each formula is in INF. In general, in the tableau, an indexed lowercase letter $c$ will denote a standpoint-literal conjunction, while an indexed letter $s$ will denote an atomic standpoint. $e$ and $d$ refer to any standpoint in DNF. This allows us to describe our tableau system.  

\begin{definition}[\cite{britzvarzin:defeasiblemodalities}]
    If $n\in\mathbb{N}$ and $\alpha\in\LangdSt$, then $n::\alpha$ is a \textbf{labelled sentence}.
\end{definition}

Intuitively, the labelled sentence $n::\alpha$ indicates semantically that there is a precisification $n$ in the model such that $\alpha$ holds at $n$. 

\begin{definition}
    A \textbf{skeleton } is a function $\Sigma:\mathcal{E}\rightarrow2^\mathbb{N}$. A \textbf{preference} relation $\prec$ is a binary relation on $\mathbb{N}$.
\end{definition}

A skeleton intuitively assigns each standpoint a set of precisifications, while the preference relation in the tableau acts to construct the preference ordering in an SPSS. Both $\Sigma$ and $\prec$ are built cumulatively, and so at the beginning of the tableau we assume $\prec=\emptyset$ and $\Sigma(e)=\emptyset$ for all $e\in\mathcal{E}$.

\begin{definition}[\cite{britzvarzin:defeasiblemodalities}]
    A branch is a tuple $(\B,\Sigma,\prec)$ where $\B$ is a set of labelled sentences, $\Sigma$ is a skeleton and $\prec$ is a preference relation.
\end{definition}

\begin{definition}[\cite{britzvarzin:defeasiblemodalities}]
    A\textbf{ tableau rule} is of the form
    \[(\rho) \frac{\mathcal{N}:\Gamma}{\mathcal{D}_1;\Gamma_1|...|\mathcal{D}_k;\Gamma_k}\]
where $\mathcal{N}:\Gamma$ is the \textbf{numerator} and $\mathcal{D_1};\Gamma_1|...|\mathcal{D_1};\Gamma_1$ is the \textbf{denominator}.
\end{definition}

As in \cite{britzvarzin:defeasiblemodalities}, $\mathcal{N}$ is a set of labelled sentences called the \textit{main sentences} of $\rho$, while $\Gamma$ specifies a set of conditions on $\Sigma$ and $\prec$. each $\mathcal{D}_i$ is a set of labelled sentences, while each $\Gamma_i$ is a set of conditions that have to be added cumulatively to $\Sigma$ and $\prec$ after the rule is applied. The symbol ``$|$'' indicates where the branch splits. That is, an instance where a non-deterministic choice of possible outcomes has to be explored.

A rule $\rho$ is \textit{applicable} to a branch $(\B,\Sigma,\prec)$ iff $\mathcal{S}$ contains the main sentences of $\rho$ and the conditions of $\Gamma$ are satisfied. The rule \textbf{(non-empty $\mathcal{S}$)} given below has the additional condition that it is only applied when no other rules are applicable.

We also require that \textit{applicable rules} have not already been satisfied. That is, that the denominators have not occurred in the branch before, and in the case of ``fresh'' labels in the denominator, that there are no existing labels $n\in\mathbb{N}$ which satisfy all the conditions of the denominator. We write $n\in W_e$ to denote $n\in\Sigma(e)$ and define $W^\phi_\B:=\{n\in\mathbb{N}\mid n::\phi\in \B\}$, where $\B$ is a branch in the tableau. $n\in\min_{\prec}X$ denotes that $n$ is a minimal element of the set $X$. That is, $n'\in X$ implies $n'\not\prec n$. We also use $n^\star$ to denote the addition of a ``fresh'' label which has not been used before in the tableau. Our tableau rules are defined in Figure \ref{fig:etableau-rules-for-satisfiability}. 

\begin{figure}
    \centering
    \textbf{1. Classical Rules:}
  \[(\bot)\ \frac{n::\alpha,\ n::\lnot\alpha}{n::\bot}\ \ \ \ \ \ (\lnot)\ \frac{n::\lnot\lnot\alpha}{n::\alpha}\ \ \ \ \ \ 
(\land)\ \frac{n::\alpha\land\beta}{n::\alpha,\ n::\beta} \ \ \ \ \ (\lor)\ \frac{n::\lnot(\alpha\land\beta)}{n::\lnot\alpha\ |\ n::\lnot\beta}\ \ \ \ \]
\hspace{2pt}
\textbf{2. Standpoint Hierarchy and Modality Rules:}

\[(\cap)\frac{n\in W_{e\cap d}}{n\in W_e, n\in W_d} \ \ \ \ (\cup)\frac{n\in W_{e\cup d}}{n\in W_{e}| n\in W_{d}}\ \ \ \  (\bot_{-})\frac{n\in W_e, n\in W_{-e}}{n::\bot} \] 

\[ (*_1)\frac{n:: \alpha}{n\in W_*} \ \ \ \ (*_2)\frac{n\in W_e}{n\in W_*}\ \ \ \ (\Diamond_e)\frac{n::\neg\Box_e \alpha}{n^\star::\neg \alpha; n^\star\in W_e}\]

\[(\Box_{e\cup d})\frac{n::\Box_{e\cup d}\alpha}{n::\Box_e \alpha,n::\Box_d \alpha}\ \ \ \ \ (\Box_c)\frac{n::\Box_{s_1\cap s_2...\cap s_k\cap -s_{k+1}\cap...\cap -s_{m}}\alpha; n'\in\Gamma^c}{n'\in W_{s_{k+1}\cup...\cup s_{m}}| n'::\alpha}\]

\[(\Box_{c^+})\frac{n::\Box_{s_1\cap s_2...\cap s_k}\alpha; n'\in \Gamma^c}{ n'::\alpha}\ \ \ \text{ where }\Gamma^c= \{n'\in\mathbb{N}\mid n'\in W_{s_1},...,n'\in W_{s_k}\}.\]

\textbf{3. Defeasibility and Minimality Rules:}
\[(\dposs_e)\frac{n::\neg\dnec_e\alpha}{n^\star:: \neg\alpha; n^\star\in \min_{\prec}W_e}\ \ \ \ (\leadsto)\frac{n::\alpha\leadsto\beta}{n::\neg \alpha|n^\star::\alpha; n^\star\prec n| n::\beta} \ \ \]

\[ (\not\leadsto)\frac{n::\neg(\alpha\leadsto \beta)}{n::\alpha, n::\neg \beta; n\in \min_\prec W^\alpha_\B}\ \ \]

\[ (\bot_\prec)\frac{n\in\min_\prec W, n'\prec n, n'\in W}{n::\bot}\ \ \ \ (\lesssim)\frac{n::c_1\cup...\cup c_m\lesssim d; n'\in \Gamma}{n'\in W_{s_{k+1}\cup...\cup s_l} | n'\in W_d| n^\star \in W_d, \Gamma^\star}\] 

\[(\lesssim^+)\frac{n::c_1\cup...\cup c_m\lesssim d; n'\in \Gamma^+}{ n'\in W_d| n^\star \in W_d, \Gamma^\star}\ 
\ \ \ (\not\lesssim)\frac{n::\neg(e\lesssim d)}{n^\star\in \min_\prec W_e, n^\star\in W_{-d}}\ \ \ \  \]

\[(\dnec)\frac{n::\dnec_{c_1\cup...\cup c_m}\alpha; n'\in \Gamma}{n'::\alpha | n'\in W_{s_{k+1}\cup...\cup s_l}|n^\star ::\alpha; \Gamma^\star}\ \ \ \ (\dnec^+)\frac{n::\dnec_{c_1\cup...\cup c_m}\alpha; n'\in \Gamma^+}{n'::\alpha | n^\star ::\alpha; \Gamma^\star}.\]

where $\Gamma^+=\{n'\in\mathbb{N}\mid n'\in W_{s_1},...,n'\in W_{s_k},  c_i=s_1\cap...\cap  s_k \text{ for some } 1\leq i\leq m\}$, 

$\Gamma^\star=\{n^\star \prec n', n^\star \in W_{c_1\cup... \cup c_m}\}$, and 

$\Gamma=\{n'\in \mathbb{N}\mid n'\in W_{s_1},...,n'\in W_{s_k}, c_i=s_1\cap...\cap  s_k\cap -s_{k+1}\cap...\cap - s_l \text{ where } 1\leq i\leq m\}$\newline

\textbf{(\textit{non-empty} $\mathcal{S}$): }

If, after all other applicable rules are applied to a branch $(\B,\Sigma,\prec)$, there is some $s\in\mathcal{S}$ such that $n\in W_s$ does not appear for any $n\in \mathbb{N}$, then add $n^\star\in W_s$ to $\Sigma$.
    \caption{Tableau Rules for Local Satisfiability in $\LangdSt$}
    \label{fig:etableau-rules-for-satisfiability}
\end{figure}

Many of the rules are straightforward, or have been discussed in the literature \cite{britzvarzin:defeasiblemodalities}, but we provide an explanation for some of the rules which are specific to our case. In Section~2. of Figure~\ref{fig:etableau-rules-for-satisfiability}, the rules $(\cap)$ and $(\cup)$ allocate every label $n$ within a  complex standpoint into the possible set of standpoint literals associated with this complex standpoint literals. This is important to ensure termination, and in order for the modal rules to be applied to each label correctly, since the conditions in the numerators of modal rules are written specifically in terms of atomic standpoints. Rules $(*_1)$ and $(*_2)$ make sure every label is associated to the universal standpoint. Rule $(\bot_{-})$ accounts for the semantic contradiction which occurs when a label is allocated to both a standpoint and its negation. Rules $(\Box_c)$ and $(\Box_c^+)$ deal with sentences bound by strict modalities with conjuncts as indexes, and the conditions are phrased so that the applicability of the rule applies to labels which fulfil the required atomic standpoints. $(\Box_c^+)$ deals with conjuncts with no negative literals, while $(\Box_c)$ deals with conjuncts with negative literals. The first branch in the denominator of $(\Box_c)$ intuitively accounts for the case where a label $n$ satisfies the positive literals in the conjunct but is not in one of the negative conjuncts, and so the sentence $n::\alpha$ need not appear.
Rule $(\Box_{e\cup d})$ deals with indexes which have unions of multiple conjuncts. Section 3. describes the behaviour of defeasible parts of the logic. Rules $(\dnec_e)$, $(\leadsto)$, $(\not\leadsto)$ and $(\bot_\prec)$ follow similar intuitions to those appearing in tableau construted by Britz and Varzicnzak \cite{britzvarzin:defeasiblemodalities}. Rules $(\lesssim)$ and $(\lesssim^+)$ allocates necessary labels to a new standpoint when a defeasible sharpening occurs. The conditions are again referred to in terms of atomic standpoints, and rules $(\lesssim)$ and $(\lesssim^+)$ differ in the same manner that $(\Box_c)$ and $(\Box_c^+)$ do. Both rules also have a branch which expresses that when $n::e\lesssim d$ occurs, it is possible that a label in $W_e$ is not allocated to $W_d$ on the account of it being non-minimal. Rule $(\not\lesssim)$ deals with negated defeasible sharpenings. Lastly, rules $(\dnec)$ and $(\dnec^+)$ deal with sentences bound by $\dnec$ with a similar differentiation between conjuncts with literals and without. It is worth noting that by Proposition \ref{proposition:or-and-union-properties-defeasible-modalities}, we cannot break unions of conjuncts down into their parts as is done in classical modalities, and so have to treat the general case of sentences in INF.

\begin{definition}\cite{britzvarzin:defeasiblemodalities}\label{definition:tableau}
    A\textbf{ tableau} $\T$ for $\alpha\in\LangdSt$ is the limit of a sequence $\T^0,...,\T^k,...$ of sets of branches where the initial $\T^0:=\{(\{0::\alpha\},\emptyset,\emptyset)\}$ and every $\T^{i+1}$ is obtained by the application of one of the \textit{applicable} rules in Figure \ref{fig:etableau-rules-for-satisfiability} to some branch in $\T^{i}$. Such a limit is denoted $\T^\infty$.
\end{definition}

We assume here that the limit is only found once every applicable rule is applied. We say  a tableau is \textit{saturated}
if no rule is applicable to any of its branches.

\begin{definition}\label{def:closed-and-open-tableau}
    A branch $(\B,\Sigma,\prec)$ is \textbf{closed }iff $n::\bot \in \B$ for some $n$. A saturated tableau $\T$ for $\alpha\in\LangdSt$ is closed iff all its branches are closed. If a saturated tableau $\T$ is not closed, we say that it is \textbf{open}.
\end{definition}

We then can describe the \textit{tableau algorithm} for local satisfaction checking as follows: If we are given a sentence $\alpha\in\LangdSt$ and want to check if it is locally satisfiable, then we construct a saturated tableau for $\alpha$ as in Definition \ref{definition:tableau}. If the resulting tableau is closed, then we conclude that $\alpha$ \textit{is not} locally satisfiable, and if the tableau is open, we conclude that $\alpha$ \textit{is} locally satisfiable. However, in order for this algorithm to be useful, we need the proposed tableau calculus to be sound, complete and to terminate. We therefore present the following two theorems, which are the main results of our paper:

\begin{theorem}[Complexity] \label{therorem:tableau-in-pspace}
    The tableau algorithm runs in \textsc{PSpace}.
\end{theorem}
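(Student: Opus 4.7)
The plan is to adapt the standard depth-first recursive tableau construction used by Britz and Varzinczak~\cite{britzvarzin:defeasiblemodalities} to establish \textsc{PSpace} for $\Lflag$, extending it to account for the rules specific to standpoint expressions and to defeasible sharpenings. Since local satisfiability is existential, the algorithm can guess rule-choices non-deterministically, yielding \textsc{NPSpace}; Savitch's theorem then collapses this to \textsc{PSpace}. The rest of the argument amounts to showing that the recursion uses only polynomial space per stack frame and that the stack itself is of polynomial depth.

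First I would establish the relevant polynomial bounds. Let $n=|\alpha|$. The set of subformulas of $\alpha$ (closed under negation) has size $O(n)$, and the set of standpoint expressions appearing as indices (assumed in INF) is also polynomial. Consequently, the \emph{local data} associated with any single label $m$---the set of labelled sentences $m::\beta$ that can ever occur, together with the standpoint-membership facts $m \in W_e$---has polynomially-bounded size. The classical rules and the hierarchy rules $(\cap),(\cup),(*_1),(*_2),(\bot_{-})$, as well as $(\Box_{e\cup d}),(\Box_c),(\Box_{c^{+}})$, only generate labelled sentences and membership facts drawn from this polynomial pool, so they can be saturated at each label in polynomial space.

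The key structural observation is that every rule introducing a fresh label $n^\star$---namely $(\Diamond_e)$, $(\dposs_e)$, $(\not\leadsto)$, $(\not\lesssim)$, $(\dnec)$, $(\dnec^{+})$, $(\lesssim)$ and $(\lesssim^{+})$---puts onto $n^\star$ only strict subformulas of formulas carried by the parent label. Hence successor labels can be explored by a recursive procedure whose call depth is bounded by the modal depth $\mathrm{md}(\alpha)\leq n$; once a recursive call returns, the successor's local data may be discarded in the usual Ladner-style ``forgetful'' fashion. The total space used is therefore $O(n)$ recursion frames, each storing polynomial local data, for a polynomial total.

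The main obstacle, and the step requiring most care, is the preference relation $\prec$: rules such as $(\leadsto)$, $(\dnec)$ and $(\lesssim)$ insert new labels strictly below existing ones, while the minimality side-conditions $n \in \min_\prec W_e$ appearing in $(\not\leadsto)$, $(\dposs_e)$ and $(\not\lesssim)$ must remain valid under these later additions, with $(\bot_\prec)$ serving as the consistency check. I would handle this in the standard way, by recording on each active stack frame the minimality commitments attached to its label and by fixing a processing order in which defeasible-minimality rules at a label are saturated before descending to any successor; a violating $n^\star$ produced later below some $n$ can then be flagged by $(\bot_\prec)$ using only information available on the current branch. Because each such $n^\star$ witnesses a strict subformula of its parent, the induced $\prec$-chains have length bounded by the modal/defeasible depth of $\alpha$, which preserves the polynomial-space bound and completes the proof.
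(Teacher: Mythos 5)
Your overall frame (a nondeterministic construction of a single branch followed by \textsc{NPSpace} $=$ \textsc{PSpace}) matches the paper's, but the core of your argument---a Ladner-style depth-first recursion of depth bounded by the modal depth of $\alpha$, with each successor label's local data ``forgotten'' once its recursive call returns---does not fit this tableau, and this is a genuine gap. The difficulty is that standpoint modalities are global rather than relational: $\Box_e$ and $\dnec_e$ quantify over the fixed sets $\sigma(e)$ and $\min_\prec\sigma(e)$, not over successors of the label carrying the boxed sentence. Concretely, the rules $(\Box_c)$, $(\Box_{c^+})$, $(\dnec)$, $(\dnec^+)$, $(\lesssim)$ and $(\lesssim^+)$ must fire for \emph{every} label $n'$ whose standpoint memberships match the side condition, including labels created in entirely unrelated parts of your recursion and labels created after the ``parent'' frame would have been popped; likewise, a minimality commitment $n\in\min_\prec W_e$ recorded at one frame can be violated by a fresh $n^\star\prec n$ introduced by a $(\dnec)$ or $(\lesssim)$ application triggered from a label that is never simultaneously on the stack with $n$. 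There is no parent--successor tree to recurse on, so neither the ``call depth bounded by modal depth'' claim nor the forgetful discarding of local data is justified, and your proposed remedy (per-frame commitments plus a fixed processing order) does not address these cross-branch interactions.

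The paper sidesteps all of this with a direct counting argument: every rule that creates a fresh label either strictly shortens the labelled sentence it acts on or is applicable only a bounded number of times per pair of standpoint expressions, so a saturated branch contains only $O(m^2)$ labels where $m=|\alpha|$; for each label only $O(m)$ further rule applications are possible; hence the entire branch has polynomial length and can simply be stored in full, giving \textsc{NPSpace} and therefore \textsc{PSpace}. Once one has this global polynomial bound on the number of labels, the tree-recursion and forgetting machinery you introduce is unnecessary---and without such a bound your argument does not go through.
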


This tells us that our tableau calculus terminates and that it is in the same complexity class as the tableau algorithms for modal logic $\textbf{K}_n$ with defeasible modalities \cite{britzmeyervar:normalmodalpreferential}, as well as for the classical normal modal logics \textbf{K} and $\textbf{K}_n$ \cite{ladner:complexity-of-K,Halpern:complexity-for-modal-logics}.

\begin{theorem}[Soundness and Completeness.]\label{theorem:soundness and completeness}
   The tableau algorithm is sound and complete with respect to local satisfiability in SPSS semantics.
\end{theorem}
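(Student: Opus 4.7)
The plan is to handle soundness and completeness separately, both via the standard semantic tableau technique. For \textbf{soundness}, I would first establish, for each rule $(\rho)$ in Figure~\ref{fig:etableau-rules-for-satisfiability}, a local preservation property: whenever an SPSS $M$ and a map $\iota:\mathbb{N}\rightarrow\Pi$ \emph{realize} the numerator branch $(\B,\Sigma,\prec)$ --- meaning $M,\iota(n)\Vdash\alpha$ for each $n::\alpha\in\B$, $\iota(n)\in\sigma(e)$ whenever $n\in\Sigma(e)$, and $\iota(n)\prec_{M}\iota(n')$ whenever $n\prec n'$ --- then $M$, with $\iota$ possibly extended on freshly introduced labels, also realizes at least one denominator branch. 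The classical, Boolean, standpoint-hierarchy, and classical modal rules are routine; the defeasible rules require checking that the $\Gamma^\star$ clauses can always be satisfied by picking an actual $\prec_{M}$-minimal precisification from the relevant $\sigma$-set below $\iota(n')$, whose existence is guaranteed by well-foundedness of $\prec_{M}$. An induction on the length of the tableau sequence then shows that if $\alpha$ is locally satisfiable, every $\T^{k}$ contains a realizable branch; since closed branches contain $n::\bot$ and are unrealizable, $\T^{\infty}$ must contain an open branch.

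For \textbf{completeness}, given an open saturated branch $(\B,\Sigma,\prec)$ for $\alpha$, I would construct a canonical SPSS $M_{\B}=(\Pi,\sigma,\gamma,\prec')$ by taking $\Pi$ to be the set of labels appearing in $\B$, $\gamma(n)=\{p\in\mathcal{P}\mid n::p\in\B\}$, $\sigma(s)=\Sigma(s)$ for each $s\in\SSS$ (extended to complex standpoints via the Boolean clauses of Definition~\ref{def:standpoint-structure}), and $\prec'$ equal to the transitive closure of $\prec$. Finiteness of $\Pi$, guaranteed by Theorem~\ref{therorem:tableau-in-pspace}, immediately yields well-foundedness of $\prec'$, and the \emph{(non-empty $\SSS$)} rule ensures $\sigma(s)\neq\emptyset$ for each $s\in\SSS$, so $M_{\B}$ is a genuine SPSS. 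The core step is then a \emph{truth lemma}: for every $n::\phi\in\B$, $M_{\B},n\Vdash\phi$, proved by structural induction on $\phi$. Atomic and Boolean cases follow from $(\bot),(\neg),(\land),(\lor)$; classical modal and standpoint cases from $(\Box_{c}),(\Box_{c}^{+}),(\Box_{e\cup d}),(\Diamond_{e}),(\cap),(\cup),(\bot_{-}),(*_{1}),(*_{2})$; defeasible cases use $(\dnec),(\dnec^{+}),(\dposs_{e}),(\leadsto),(\not\leadsto),(\lesssim),(\lesssim^{+}),(\not\lesssim)$, with $(\bot_{\prec})$ ruling out inconsistent minimality declarations. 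Instantiating at $n=0$ gives $M_{\B},0\Vdash\alpha$, completing the argument.

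The hard part will be the defeasible cases of the truth lemma, in particular for $n::\dnec_{e}\alpha$ and $n::e\lesssim d$. Here the semantic clauses demand a universal statement over \emph{every} $\prec'$-minimal element of $\sigma(e)$ in $M_{\B}$, whereas saturation of $(\dnec)$ or $(\lesssim)$ records the desired property only for labels currently in $\Sigma$ and for the freshly introduced witnesses $n^{\star}$ placed into $W_{c_{1}\cup\dots\cup c_{m}}$ via the $\Gamma^{\star}$ mechanism. The argument I have in mind is: any $n'$ that is $\prec'$-minimal in $\sigma(e)$ in the canonical model must have been processed by the rule (since its positive atomic memberships are tracked in $\Sigma$), and neither the second denominator branch (which would place $n'$ in a negated literal, contradicting $n'\in\sigma(e)$) nor the third (which would introduce $n^{\star}\prec' n'$ with $n^{\star}\in\sigma(e)$, contradicting minimality) can have been the one taken; hence the first branch applies and gives $n'::\alpha\in\B$. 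A secondary subtlety is that the Boolean extension of $\sigma$ to complex standpoint expressions must be coherent with the literal-level bookkeeping produced by $(\cap),(\cup),(\bot_{-})$; the index normal form assumption pays off here, as every modal and sharpening sentence processed by the tableau has already been decomposed into conjunct-form indices whose atomic membership is explicitly tracked in $\Sigma$.
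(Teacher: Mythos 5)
Your proposal follows essentially the same two-part strategy as the paper: completeness via a canonical SPSS built from an open saturated branch (labels as precisifications, $\sigma(s)=\Sigma(s)$ extended Boolean-wise to complex expressions, $\prec$ replaced by its transitive closure) together with a truth lemma proved by structural induction, including exactly the paper's treatment of the hard $\dnec_e$ and $\lesssim$ cases, where one argues that any minimal label in $\sigma(e)$ must have been processed by the rule and that the denominator branches contradicting membership or minimality cannot have been taken. The only difference is presentational and concerns soundness: you propagate realizability of a whole branch forwards through an interpretation map $\iota$, whereas the paper's key lemma propagates per-label unsatisfiability of $\widehat{\B(n)}$ backwards from denominators to numerator; these are contrapositive formulations of the same rule-preservation property, and your version, which explicitly tracks the skeleton, minimality declarations, and preference constraints through $\iota$, is if anything the tighter statement.
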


 Moreover, by Theorem \ref{theorem:soundness and completeness} and Proposition \ref{proposition:global-and-entailment-canbe-expressed-as-local}, we can easily adapt our tableau algorithm in order to obtain algorithms for global satisfiability and for preferential entailment which are sound, complete and computable in \textsc{PSpace}. Lastly, it is worth noting that the rules in Sections 1 and 2 in Figure \ref{fig:etableau-rules-for-satisfiability} provide a tableau calculus for an extension of Gómez Álvarez and Rudolph's \cite{alvarezrudolph:propositionalstdpt} classical propositional standpoint logic, in which full complex standpoint expressions are permitted. This is on its own a novel contribution to the field of standpoint logics.



\section{Related Work}\label{Sec:RelatedWork}

The most closely related work to this paper is that of Leisegang et al. \cite{LeisegangRudolphMeyer:SacairStandpoints} who also consider combining defeasibility and standpoint logics. However, their paper aims at representing situations where standpoints \textit{hold} defeasible beliefs, while this paper considers defeasibility and typicality relations between the precisifications themselves. In particular, the language DRSL is given in the form $\phi::=\psi\mid\phi\wedge\phi\mid\Box_s\psi\mid\Diamond_s\psi$,\footnote{As well as classical sharpening statements.} where $\psi$ is a boolean formula or a defeasible implication $\alpha\leadsto \beta$ where $\alpha$ and $\beta$ are Boolean. The semantics is given by \textit{ranked standpoint structures}, which consist of a triple $M=(\Pi,\sigma,\gamma)$ where $\Pi$ and $\sigma$ are as in classical standpoint structures. However, $\gamma$ maps each precisification not to a classical valuation, but a \textit{ranked interpretation} as defined by Lehmann and Magidor \cite{lehmann:conditionalentail}. That is, each $\gamma(\pi)$ is a ranking function $\gamma(\pi):2^\mathcal{P}\rightarrow \mathbb{N}\cup\{\infty\}$. Such a ranking function intuitively expresses how ``typical'' or preferred a state of the world is, with lower rankings signifying more typical states. This in turn induces a preference ordering for each precisification where $v\prec_{\pi} u$ iff $\gamma(\pi)(v)<\gamma(\pi)(u)$. Then if $\alpha$ and $\beta$ are Boolean formulae, $M,\pi\Vdash \alpha$ iff $\gamma(v)\neq \infty$ implies $v\Vdash \alpha$ $M,\pi\Vdash \alpha\leadsto \beta$ iff $\min_{\prec_\pi}\llbracket\alpha\rrbracket\subseteq \llbracket\beta\rrbracket$. The satisfaction standpoint modalities and conjunctions are defined inductively on top of this as in the classical case (for example, $M,\pi\Vdash \Box_s\phi$ iff $M,\pi'\Vdash \phi$ for all $\phi\in\sigma(s)$). These semantics for DRSL therefore utilize preference orderings, but such orderings are internal to the underlying valuation of each precisification, while the ordering in PDSL occurs on the set of precisifications itself. Another distinction in the work of Leisegang et al. \cite{LeisegangRudolphMeyer:SacairStandpoints}  is that it focuses on extending \textit{rational closure}, a non-monotonic form of reasoning introduced by Lehman and Magidor \cite{lehmann:conditionalentail}, into the DRSL case, while our work focusses on an extension of preferential entailment.

Besides this work, defeasibility has been considered for basic normal modal logics by Britz et al. \cite{britzetal:preferentialreasoningmodal,britzmeyervar:normalmodalpreferential} and Britz and Varzinczak \cite{britzvarzin:defeasiblemodalities}, and in the case of Linear Temporal Logic by Chafik et al. \cite{chafik:defeasiblelineartemporal}. Standpoint modalities have been considered in the propositional case by Gómez Álvarez and Rudolph \cite{alvarezrudolph:propositionalstdpt}, in first-order logic and its decidable fragments by Gómez Álvarez  et al. \cite{alvarez:stdptlogicfocase,alvarezrudolphstrass:standpointEL,alvarezstrassrudolph:standpointELplus} and in the case of linear temporal logic by Gigante et al. \cite{alvarezlyon:stndpttemporal}, who also use semantic tableau methods. Other forms of non-monotonic reasoning in standpoint logics have been considered by Gorczyca and Straß \cite{gorczyca-strass:nmonotonic-standpoint-s4f}.

\section{Conclusion}\label{Sec:Conclusion}

In this paper, we propose an extension to both defeasible modalities, and standpoint logics by considering a logic of defeasible standpoint modalities. We define the language $\LangdSt$ which extends propositional standpoint logic with defeasible modalities of the form $\dnec_e$ and $\dposs_e$, as well as defeasible standpoint sharpenings and implications. The main contribution of the paper is to provide a semantics for the logical language in Section \ref{section:dmodalities-syntax-semantics}, and provide a sound, complete and terminating method to check satisfiability with respect to these semantics. In particular, in Section \ref{section:satisfiability-and-pref-entailment} we consider a translation to plain standpoint logic which allows for an \textsc{NP}-complete satisfiability checking in a restricted setting, and go on to provide a tableau algorithm for the unrestricted case which is computable in \textsc{PSpace}.

For future work, we believe investigating other forms of non-monotonic entailment for defeasible reasoning in the propositional case \cite{lehmann:conditionalentail,lehmann:lexicographicreason} in the language $\LangdSt$ would expand the given logic's ability to reason prototypically about information. Moreover, it would be worth investigating whether defeasible standpoint modalities can be added to more expressive logics, such as lightweight description logics, where classical standpoint modalities have been investigated \cite{alvarezrudolphstrass:standpointEL,alvarezstrassrudolph:standpointELplus}. Lastly, it would be worth investigating a deeper comparison and fusion of the logic proposed in our paper with the related approach to defeasibility in standpoint logics given by Leisegang et al. \cite{LeisegangRudolphMeyer:SacairStandpoints}.

\section*{Acknowledgements}
This work is based on the research supported in part by the National Research Foundation of South Africa (REFERENCE NO: SAI240823262612). This work has been partially supported by the Programme Hubert Curien Campus France Protea 48957ZC ``Symbolic Artificial Intelligence for Cyber Security''. Nicholas Leisegang is supported by the University of Cape Town Science Faculty PhD Fellowship, and the South African Media, Information and Communication Technologies Sector Education and Training Authority (MICT-SETA) Bursary.

\bibliographystyle{splncs04}
\bibliography{defeasiblestandpointmodalities}

\newpage
\appendix
\section{Appendix}
\subsection{Proofs for Section \ref{section:dmodalities-syntax-semantics}}

\noindent\textbf{Proposition \ref{supra-classical-properties-defeasible-symbols}} \textit{For any state-preferential standpoint structure $M$, and any precisification $\pi$ the following holds:}
    \begin{enumerate}
        \item $M,\pi\Vdash \
        \Box_e\alpha\implies M,\pi\Vdash \
        \dnec_e\alpha$.
        \item $M,\pi\Vdash \
        \dposs_e\alpha\implies M\Vdash \
        \Diamond_e\alpha$.
        \item $M,\pi\Vdash \
        e\leq d\implies M,\pi\Vdash \
        e\lesssim d$.
        \item $M,\pi\Vdash \alpha\rightarrow\beta \implies M,\pi\Vdash \alpha\leadsto \beta$.
    \end{enumerate}
\textit{And in general none of the converses hold.}

\begin{proof}
Let $M=(\Pi,\sigma,\gamma,\prec)$ be an arbitrary SPSS. We first prove the positive statements:
    \begin{enumerate}
        \item If $M,\pi\Vdash \
        \Box_e\alpha$, then $M,\pi'\Vdash \alpha$ for all $\pi'\in \sigma(e)$ then for each $\pi''\in\min_\prec \sigma(e)\subseteq \sigma(e)$ we have $M,\pi''\Vdash \alpha$. Hence $M,\pi\Vdash\
        \dnec_e\alpha$.
        \item If $M,\pi\Vdash \
        \dposs_e\alpha$, then there exists $\pi'\in\min_\prec \sigma(e)$ such that $M,\pi'\Vdash \alpha$. But then, since $\min_\prec \sigma(e)\subseteq \sigma(e)$,  we have $M,\pi\Vdash \Diamond_e \alpha$.
        \item $M,\pi\Vdash \
        e\leq d$ iff. $\sigma(e)\subseteq \sigma(d)$. Thus, $\min_\prec \sigma(e)\subseteq \sigma(e)\subseteq \sigma(d)$ and $M,\pi\Vdash \
        e\lesssim d$.
        \item If $M,\pi\Vdash \alpha\rightarrow \beta=\neg \alpha\vee \beta$ then $M,\pi\Vdash \neg\alpha$ or $M,\pi\Vdash \beta$. Equivalently, $\pi\notin \llbracket \alpha \rrbracket$ or $\pi\in \llbracket \beta \rrbracket$. Then, since $\min_\prec\llbracket \alpha \rrbracket\subseteq \llbracket \alpha \rrbracket$, the above implies $\pi\notin \min_\prec\llbracket \alpha \rrbracket$ or $\pi\in \llbracket \beta \rrbracket$. That is, $M,\pi\Vdash \alpha\twiddle\beta$.
    \end{enumerate}

    In order to show the converses do not hold we consider the vocabulary $\mathcal{P}=\{p,q\}$ and $\SSS=\{s,t\}$. The we define the SPSS $M=(\Pi,\sigma,\gamma,\prec)$ as $\Pi=\{\pi_1,\pi_2,\pi_3\}$, $\sigma(s)=\{\pi_1,\pi_2\}$ and $\sigma(t)=\{\pi_1,\pi_3\}$. Then $\gamma(\pi_1)=\{p\}$, $\gamma(\pi_2)=\{q\}$ and $\gamma(\pi_3)=\{p,q\}$. Lastly, $\prec=\{(\pi_1,\pi_2), (\pi_1,\pi_3)\}$.

    For the first three converses we show that there are sentences for which every precisification in $\Pi$ breaks the reverse implication. Notice that $M,\Vdash \dnec_s p$ since $M,\pi_1\Vdash p$. However, since $M,\pi_2\nVdash p$ we have $M\nVdash \Box_s p$. Moreover, $M\Vdash \Diamond_s q$ since $M,\pi_2\Vdash 1$ but $M\nVdash \dposs_s q$ since $M,\pi_1\nVdash q$. Then consider that $\min_\prec \sigma(s)=\{\pi_1\}\subseteq \sigma(t)$ and so $M\Vdash s\lesssim t$. However, $\pi_2\in \sigma(t)\setminus \sigma(s)$ and so  $M\nVdash s\leq t$. For the converse of 4., $M,\pi_3\Vdash p\leadsto \neg q$ since $\pi_3\notin\min_\prec\llbracket p\rrbracket$. However, $M\nVdash p\rightarrow \neg q$ since $M,\pi_3\Vdash p\wedge q$. 
\end{proof}

\noindent\textbf{Proposition \ref{proposition:or-and-union-properties-defeasible-modalities}.} \textit{For any state-preferential standpoint structure $M$, and any precisification $\pi$ the following holds:}
      \begin{enumerate}
          \item $M,\pi\Vdash \dnec_{e} \alpha\wedge \dnec_{d} \alpha \implies M,\pi\Vdash \dnec_{e\cup d} \alpha$.
 \item $M,\pi\Vdash \dposs_{e\cup d} \alpha\implies M\Vdash \dposs_{e} \alpha\vee \dposs_{d} \alpha$.
      \end{enumerate}
       \noindent \textit{And in general the converses do not hold.}

\begin{proof}
    Let $M=(\Pi,\sigma,\gamma,\prec)$ be an arbitrary SPSS and $\pi\in\Pi$:

\begin{enumerate}
    \item Suppose $M,\pi\Vdash \dnec_{e} \alpha\wedge \dnec_{d} \alpha$, and (in order to obtain a contradiction) that $M,\pi\nVdash {\dnec_{e\cup d}} \alpha$. Then there exists some, $\pi'\in\min_\prec \sigma(s\cup d)$ such that $M,\pi'\nVdash \alpha$. However, since $\pi'\in\sigma(s\cup d)$, then either $\pi'\in \sigma(e)$ or $\pi'\in \sigma(d)$. Without loss of generality, assume $\pi'\in \sigma(e)$. Since $M,\pi\Vdash \dnec_{e} \alpha$, and $M,\pi'\nVdash \alpha$ then $\pi'$ cannot be minimal in $\sigma(s)$. That is, there is some  $\pi''\in\sigma(s)$ such that $\pi''\prec \pi'$. However, $\pi''\in \sigma(s)\cup \sigma(d)$ and so then $\pi'$ cannot be minimal in $\sigma(s\cup d)$, which gives us a contradiction. 
    \item Notice that the equivalent converse to statement 2. is $M,\pi\nVdash \dposs_{e} \alpha\vee \dposs_{d} \alpha \implies M,\pi\nVdash \dposs_{e\cup d} \alpha$. Then note that $M,\pi\nVdash \dposs_{e} \alpha\vee \dposs_{d} \alpha $ iff $M,\pi\Vdash \neg(\neg\dnec_{e} \neg\alpha\vee \neg\dnec_{d}\neg \alpha )=\dnec_{e} \neg\alpha\wedge \dnec_{d} \neg\alpha$. On the other hand  $M,\pi\nVdash \dposs_{e\cup d} \alpha$ iff$M,\pi\Vdash \neg(\neg\dnec_{e\cup d} \neg\alpha)=\dnec_{e\cup d} \neg\alpha$. That is, the converse of 2. is an instance of case 1. and so it follows from the previous point in the proof.
\end{enumerate}

For the converses, we show contradictions in the stronger case as in the proof of Proposition \ref{supra-classical-properties-defeasible-symbols}. In order to see that that the converses do not hold consider the vocabulary $\mathcal{P}=\{p\}$ and $\SSS=\{s,t\}$, and the SPSS $M=(\Pi,\sigma,\gamma,\prec)$ defined as follows. $\Pi=\{\pi_1,\pi_2,\pi_3\}$, $\sigma(s)=\{\pi_1,\pi_2\}$, $\sigma(t)=\{\pi_1,\pi_3\}$, $\gamma(\pi_2)=\emptyset$, $\gamma(\pi_1)=\gamma(\pi_3)=\{p\}$ and lastly $\prec=\{(\pi_3,\pi_2)\}$. Then note that $\sigma(s\cup t)=\Pi$ and so $\min_\prec \sigma(s\cup t)=\{\pi_1,\pi_3\}$. Then $M,\pi_1\Vdash p$ and $M,\pi_3\Vdash p$ and so $M,\Vdash\dnec_{s\cup t} p$. However, $\min_\prec \sigma(s)=\{\pi_1,\pi_2\}$ and $M,\pi_2\nVdash p$. Therefore $M\nVdash \dnec_s p$. Moreover, since $M,\pi_2\Vdash \neg p$ we have $M\Vdash \dposs_s \neg p$ and thus $M\Vdash \dposs_s \neg p\vee \dposs_t \neg p$. However, since $\pi_2\notin \min_\prec \sigma(s\cup t)$ it also follows that $M\nVdash \dposs_{s\cup t} \neg p$.
\end{proof}

\noindent\textbf{Proposition \ref{pproposition:lesssim-KLM-properties}.}\textit{ For any standpoint expressions $e,d,g\in\mathcal{E}$, any state-preferential standpoint structure $M$, and any precisification $\pi$ we have:}
    \begin{itemize}
        \item $M,\pi\nVdash *\lesssim e\cap - e$\hfill \textit{(Consistency)}
        \item $M,\pi\Vdash e\lesssim e$\hfill \textit{(Reflexivity)}
        \item \textit{If $M,\pi\Vdash (e\leq d)\wedge (d\leq e)$ and $M,\pi\Vdash e\lesssim g$ then $M,\pi\Vdash d\lesssim g$} \hfill \textit{(Left Logical Equivalence)}
        \item \textit{If $M,\pi\Vdash e\lesssim d$, and $M\pi\Vdash d\leq g$ then $M,\pi\Vdash e\lesssim g$} \hfill \textit{(Right Weakening)}
        \item \textit{If $M,\pi\Vdash e\lesssim d$, and $M,\pi\Vdash e\lesssim g$ then $M,\pi\Vdash e\lesssim d\cap g$} \hfill \textit{(And)}
        \item \textit{If $M,\pi\Vdash e\lesssim d$, and $M,\pi\Vdash g\lesssim d$ then $M,\pi\Vdash e\cup g\lesssim d$ \hfill \textit{(Or)}}
        \item \textit{If $M,\pi\Vdash e\lesssim d$, and $M,\pi\Vdash e\lesssim g$ then $M,\pi\Vdash e\cap g\lesssim d$ \hfill \textit{(Cautious Monotonicity)}}
        \end{itemize}

\begin{proof}
        Let $M=(\Pi,\sigma,\gamma,\prec)$ be an arbitrary SPSS, and $\pi\in\Pi$. Than consider $e,d,g\in \mathcal{E}$:
\begin{itemize}
    \item \textit{(Consistency):} Note that  $\sigma(e\cap -e)=\sigma(e)\cap\sigma(-e)=\sigma(e)\cap(\Pi\setminus\sigma(e))=\emptyset$. By definition $\sigma(*)=\Pi$ is not empty and therefore by smoothness, $\min_\prec\sigma(*)\neq \emptyset$. Thus, $\min_\prec\sigma(*)\nsubseteq \emptyset=\sigma(e\cap -e)$ and so $M,\pi\nVdash *\lesssim e\cap - e$.
    \item \textit{(Reflexivity):} By definition $\min_\prec\sigma(e)\subseteq \sigma(e)$ for any standpoint expression, therefore $M,\pi\Vdash e\lesssim e$.
    \item  \textit{(Left Logical Equivalence):} Suppose $M,\pi\Vdash (e\leq d)\wedge (d\leq e)$ and $M,\pi\Vdash e\lesssim g$. Then $\sigma(e)=\sigma(d)$ and $\min_\prec\sigma(e)\subseteq \sigma(g)$. Consequently $\min_\prec\sigma(d)\subseteq \sigma(g)$ and so $M,\pi\Vdash d\lesssim g$.
    \item \textit{(Right Weakening):} Suppose $M,\pi\Vdash e\lesssim d$, and $M,\pi\Vdash d\leq g$. Then $\min_\prec\sigma(e)\subseteq \sigma(d)$ and $\sigma(d)\subseteq \sigma(g)$ and so $\min_\prec\sigma(e)\subseteq \sigma(g)$. Hence $M,\pi\Vdash e\lesssim g$.
    \item\textit{(And):} Suppose $M,\pi\Vdash e\lesssim d$, and $M,\pi\Vdash e\lesssim g$. Then $\min_\prec\sigma(e)\subseteq \sigma(d)$  and $\min_\prec\sigma(e)\subseteq \sigma(g)$. Therefore $\min_\prec\sigma(e)\subseteq \sigma(d)\cap \sigma(g)=\sigma(d\cap g)$ and so  $M,\pi\Vdash e\lesssim d\cap g$.
    \item \textit{(Or):} Suppose $M,\pi\Vdash e\lesssim d$, and $M,\pi\Vdash g\lesssim d$. Then, $\min_\prec\sigma(e)\subseteq \sigma(d)$ and $\min_\prec\sigma(g)\subseteq \sigma(d)$. Then if $\pi'\in\min_\prec\sigma(e\cup g)$ we have $\pi'\in \sigma(e)\cup \sigma(g)$. Without loss of generality assume $\pi'\in\sigma(e)$. Then if $\pi'\notin \min_\prec\sigma(e)$, there exists $\pi''\in\sigma(e)\subseteq \sigma(e\cup g)$ such that $\pi''\prec \pi'$. However, then $\pi'$ is not minimal in  $\sigma(e\cup g)$ and we have a contradiction. Therefore, $\pi'\in \min_\prec\sigma(e)$ and so $\pi'\in \sigma(d)$. That is, $\min_\prec\sigma(e\cup g)\subseteq \sigma(d)$ and it follows that $M,\pi\Vdash e\cup g\lesssim d$.
    \item \textit{(Cautious Monotonicity):} Suppose $M,\pi\Vdash e\lesssim d$, and $M,\pi\Vdash e\lesssim g$. Then $\min_\prec\sigma(e)\subseteq \sigma(d)$ and $\min_\prec\sigma(e)\subseteq \sigma(g)$. Consider $\pi'\in \min_\prec(e\cap g)$. Then $\pi'\in \sigma(e)\cap \sigma(g)$. Suppose then that $\pi'\notin \min_\prec \sigma(e)$. Then there exists $\pi''\in\min_\prec \sigma(e)$ such that $\pi''\prec \pi$. But by assumption $\pi''\in \sigma(g)$ and so $\pi''\in \sigma(e \cap g)$, which contradicts the assumption that $\pi'\in \min_\prec(e\cap g)$. Therefore, if  $\pi'\in \min_\prec(e\cap g)$ we must have $\pi'\in\min_\prec \sigma(e)\subseteq \sigma(d)$. Therefore $\min_\prec \sigma(e\cap g)\subseteq \sigma(d)$ and so $M,\pi\Vdash e\cap g\lesssim d$.
\end{itemize}
\end{proof}

\noindent\textbf{Proposition \ref{proposition:adaptedaxiomPfordefeasiblesharpenings}} \textit{    For $e,d\in \mathcal{E}$, $\phi\in\LangdSt$, an SPSS $M=(\Pi,\sigma,\gamma,\prec)$ and any $\pi\in\Pi$, we have that $M,\pi\Vdash (e\lesssim d)\rightarrow(\Box_d\phi \rightarrow\dnec_e \phi)$.}

\begin{proof}
    Suppose, in order to obtain a contradiction, that there is some SPSS $M=(\Pi,\sigma,\gamma,\prec)$ and $\pi\in\Pi$ such that $M,\pi\nVdash (e\lesssim d)\rightarrow(\Box_d\phi \rightarrow\dnec_e \phi)$. Equivalently $M,\pi\Vdash \neg((e\lesssim d)\rightarrow(\Box_d\phi \rightarrow\dnec_e \phi))$, which using standard De Morgan laws can be reduced to $M,\pi\Vdash (e\lesssim d)\wedge\Box_d\phi\wedge \neg \dnec_e \phi$. Since $M,\pi\Vdash\neg \dnec_e \phi$ (equivalently $M,\pi\Vdash \dposs_e \neg\phi$), there exists some $\pi'\in\min_\prec\sigma(e)$ such that $M,\pi'\Vdash \neg \phi$. Then since $M,\pi\Vdash (e\lesssim d)$ we have $\min_\prec\sigma(e)\subseteq \sigma(d)$, and in particular $\pi'\in \sigma(d)$. Lastly, since $M,\pi\Vdash\Box_d\phi$ we have $M,\pi''\Vdash \phi$ for all $\pi''\in\sigma(d)$. But then $M,\pi'\Vdash \phi$ which is a contradiction.
\end{proof}

\subsection{Proofs for Section \ref{section:satisfiability-and-pref-entailment}}

\textit{\textbf{Proposition \ref{proposition:global-and-entailment-canbe-expressed-as-local}.}   For any globally satisfiable knowledge base $\KB\subseteq \LangdSt$ and sentence $\alpha\in\LangdSt$:
    \begin{enumerate}
        \item $\alpha$ is globally satisfiable iff $\Box_* \alpha$ is locally satisfiable.
        \item $\KB\vDash_P \alpha$ iff $\Box_*(\bigwedge \KB)\wedge\neg \alpha$ is not locally satisfiable.
    \end{enumerate}}

    \begin{proof}
        \begin{enumerate}
            \item $(\implies)$: If $\alpha$ is globally satisfiable, then there exists $M=(\Pi,\sigma,\gamma,\prec)$ such that $M\Vdash \alpha$. Equivalently, $M,\pi\Vdash\alpha$ for all $\pi\in\Pi=\sigma(*)$. Therefore, $M,\pi'\Vdash\alpha$ for every $\pi'\in \sigma(*)$, meaning $M\Vdash \Box_*\alpha$ by definition. Then, picking any random $\pi''\in\Pi$, we have that  $M,\pi''\Vdash \Box_*\alpha$ and so $\Box_*\alpha$ is locally satisfiable.

            $(\impliedby)$: If $\Box_*\alpha$ is locally satisfiable we have that there exists $M=(\Pi,\sigma,\gamma,\prec)$ and $\pi\in \Pi$ such that $M,\pi\Vdash\Box_*\alpha$. By definition, $M,\pi'\Vdash\Box_*\alpha$ for all $\pi'\in\sigma(*)=\Pi$, and so $\alpha$ is globally satisfiable.
            
            \item $(\implies)$: If $\KB\vDash_P \alpha$ we have that whenever $M'\Vdash \phi$ for all $\phi\in\KB$ then $M'\Vdash \alpha$. Suppose then that $\Box_*(\bigwedge \KB)\wedge\neg \alpha$ is locally satisfiable. Then there exists $M=(\Pi,\sigma,\gamma,\prec)$ and $\pi\in \Pi$ such that $M,\pi\Vdash\Box_*(\bigwedge \KB)\wedge\neg \alpha$. Equivalently $M,\pi\Vdash\Box_*(\bigwedge \KB)$ and $M,\pi\Vdash\neg\alpha$. We then have  $M,\pi\nVDash\alpha$. But then, by point 1. $M\Vdash\bigwedge \KB$, and since $\KB\vDash_P \alpha$, we must have that $M\Vdash \alpha$ (i.e., $M,\pi'\Vdash \alpha$ for all $\pi'\in\Pi$). In particular, $M,\pi\Vdash \alpha$ which is a contradiction. Therefore, if $\KB\vDash_P \alpha$ then $\Box_*(\bigwedge \KB)\wedge\neg \alpha$ is not locally satisfiable.

         $(\impliedby)$: If $\Box_*(\bigwedge \KB)\wedge\neg \alpha$ is not locally satisfiable, then there is no $M$ such that $M,\pi\Vdash \Box_*(\bigwedge \KB)\wedge\neg \alpha$ for some $\pi\in\Pi$. Moreover, since $\KB$ is globally satisfiable there exists some $M'=(\Pi',\sigma',\gamma',\prec')$ such that $M'\Vdash \bigwedge \KB$. Then, by 1. for every $\pi'\in \Pi'$, we have $M',\pi'\Vdash \Box_*(\bigwedge \KB)$. By our assumption we then cannot have $M',\pi'\Vdash \neg \alpha$ and so  $M',\pi'\Vdash \alpha$ for all $\pi'\in\Pi'$. That is,  $M'\Vdash \alpha$. Lastly, since $M'$ was an arbitrary SPSS such that $M'\Vdash \bigwedge \KB$, we have that $M''\Vdash \alpha$ for any other $M''$ such that $M'\Vdash \bigwedge \KB$. Therefore $\KB\vDash_p \alpha$.
        \end{enumerate}
    \end{proof}

\noindent\textbf{Proposition \ref{proposition:translation-to-classical-standpoint-logic}.}\textit{ For any SPSS $M$, and any sentence $\alpha$ in classical standpoint logic, we have that 
    $M\Vdash \dnec_e\alpha$ iff $T(M)\Vdash \Box_{\tilde{e}}\alpha$,
    and 
    $M\Vdash e\lesssim d$ iff $T(M)\Vdash \tilde{e} \leq d$.}

\begin{proof}
    For the first  equivalence, note that  $M\Vdash \dnec_e\alpha$ iif and only if
         $M, \pi \Vdash \alpha$ for all $\pi\in\min_\prec(\sigma(e))$. Equivalently, $T(M),\pi\Vdash \alpha$  for all $\pi\in\sigma'(\tilde{e})$. Note here that since $\alpha$ is a classical standpoint sentence and $\Pi$ and $\gamma$ are the same in both structures, the classical standpoint logic statements satisfied by $M$ at $\pi$ are the same as those satisfied by $T(M)$ at $\pi$. Lastly, by the definition of satisfaction in standpoint structures, the previous statement is equivalent to $T(M)\Vdash \Box_{\tilde{e}}\alpha$.

         For the second equivalence, we note that $M\Vdash e\lesssim d$ iff $\min_\prec\sigma(e)\subseteq \sigma(d)$. In the semantic structure of $T(M)$, this is equivalent to $\sigma'(\tilde{e})\subseteq \sigma'(d)$. That is, $T(M)\Vdash \tilde{e}\leq d$.
\end{proof}

    \noindent\textbf{Complexity and Termination Proof:}\\

    \noindent\textbf{Theorem \ref{therorem:tableau-in-pspace} (Complexity).}
   \textit{ The tableau algorithm runs in \textsc{PSpace}.}

\begin{proof}
    We use the well-known result \textsc{PSpace}=\textsc{NPSpace} and show that any formula has a saturated tableau that terminates with a longest branch of polynomial length. We specify that the length of the root sentence for the tableau $\alpha$, denoted  $|\alpha|=m$, is determined by accumulating the number of symbols occurring in the sentence, and the number of symbols occurring in the standpoint modality indexes. For example, the sentence $\Box_s p$ has length $3$, while $\Box_{s\cup t} p$ has length $5$.\\

    We then note the following: 
    \begin{itemize}
        \item The only rules in the tableau calculus which create a new label (possible world) in the denominator are $(\Diamond_e)$, $(\dposs_e)$, $(\leadsto)$, $(\lesssim)$, $(\lesssim^+)$, $(\not\lesssim)$, $(\dnec)$, $(\dnec^+)$, and \textbf{(non-empty $\SSS$)}. In the case of all rules except $(\lesssim)$, and \textbf{(non-empty $\SSS$)}, the labeled sentence which occurs in the denominator is shorter than the one before, and so if the length of the root formula is $m$, these rules can only be applied $m$ times in total. In the case of $(\not\lesssim)$, we note that when $n::\neg (e\lesssim d)$ occurs in a branch, the rule can only be applied once for any pair of standpoint expressions $e,d\in\mathcal{E}$. This is because, once the denominator of the rule is applied, the rule holds relative to any label.
        \item Similarly, the number of atomic standpoints which occur in the formula cannot be greater than $m$, and so $\textbf{(non-empty $\mathcal{S}$)}$ can only be applied $m$ times. Therefore the total number of labels generated by the last two points are in $O(m^2)$.
        \item In the case of $(\lesssim)$ and $(\lesssim^+)$, each rule can only be applied once to the labels in the branch which are generated by other rules. Importantly, the new label $n^\star$ which both generate already fulfills a semantic conditions in the denominator of $(\lesssim)$ and  $(\lesssim^+)$ (since $n^\star\in W_d$) and so by the conditions of applicability which we previously defined, when the presence of $n^\star\in W_{s_1},...,n^\star\in W_{s_k}$ and $n::e\lesssim d$ occurs, the rule $(\lesssim)$ or $(\lesssim^+)$ is not applicable. Hence, the number of new labels $(\lesssim)$ and $(\lesssim^+)$ generates is at most $4k$, where $k$ is the number of labels generated by other rules. Therefore, the number of labels is in $O(4m^2)=O(m^2)$.
        \item When a given world is allocated to $W_e$ for some complex standpoint expression $e$, then the length of this standpoint expression is bounded by $m$ and the so the rules $(\cup)$ and $\cap$ can only be applied at most $m$ times, while $(*_1)$, $(*_2)$, $(\bot_{-})$ and $(\bot_\prec)$ can only be applied once per world. Hence, the maximum number of applications of these rules are in $O(m)$.
        \item Then, for a given label, the number of new sentences allocated to this label by a modal quantifier (i.e. the number of applications of $\Box_c$, $\Box_{c^+}$, $\dnec$, $\dnec^+$) is also bound by $m$, since each application of such a rule shortens the labeled sentence, and so is in $O(m)$.
        \item Then, the last set of rules $(\wedge)$, $(\neg)$, $(\vee)$ $\Box_{e\cup d}$ and $(\bot)$ are those which produce labeled sentences in the denominator with the same label that occurs in the numerator. Since the labeled sentence in the denominator is always shorter than the labelled sentence in the numerator, these too can only be applied a maximum $m$ times per label, with the exception of $(\bot)$ whch can only be applied once per label. Hence this step is in $O(m)$ labels.
        \item Lastly, the length of each label is $O(m)$ in length.
    \end{itemize}
    
        Therefore, the depth of any branch of the tableau is in $O(m^6)$ and so the total non-deterministic space complexity is in $O(m^7)$. Thus, the algorithm runs in \textsc{PSpace}. 
\end{proof}

\begin{corollary}
The tableau calculus for $\LangdSt$ terminates.
\end{corollary}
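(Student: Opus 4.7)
The plan is to observe that termination is essentially an immediate corollary of the complexity analysis in Theorem \ref{therorem:tableau-in-pspace}, so rather than redo the bookkeeping I would simply cite that theorem and supply the two missing pieces.

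First, I would inspect the rules in Figure \ref{fig:etableau-rules-for-satisfiability} and note that each rule has at most three alternative denominators separated by ``$|$''; viewed as a tree of rule applications, the tableau is therefore finitely branching. Second, the proof of Theorem \ref{therorem:tableau-in-pspace} has already established that the depth of any branch of a saturated tableau is in $O(m^6)$, where $m = |\alpha|$. Since a finitely branching tree of bounded depth is finite, the whole tableau contains only finitely many nodes, and the algorithm --- which proceeds by repeatedly applying applicable rules until none remain --- must halt after finitely many steps.

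The only point requiring care is that the word ``applicable'' in Definition \ref{definition:tableau} really does block re-derivation: a rule cannot fire if its denominator is already present in the branch, and fresh-label clauses are suppressed whenever some existing label already witnesses the required conditions on $\Sigma$ and $\prec$. This is the property that makes the polynomial bounds in the complexity proof hold in the first place, and I would re-use exactly that observation here. In particular, it is what keeps the fresh-witness rules $(\lesssim)$, $(\lesssim^+)$, $(\dnec)$, $(\dnec^+)$, $(\dposs_e)$, $(\Diamond_e)$, $(\leadsto)$, $(\not\lesssim)$, and the saturating step (\emph{non-empty} $\mathcal{S}$) from being re-triggered by the very labels they themselves introduce. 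The ``hard part'' of this corollary is therefore really nothing beyond invoking the bounds already proved, so I would simply phrase the argument as: polynomial depth plus finite branching plus blocking of saturated rule instances equals a finite tableau, hence the algorithm terminates.
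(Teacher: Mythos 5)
Your proposal is correct and follows essentially the same route as the paper's own proof: both derive termination from the polynomial depth bound established in the complexity theorem, combined with the observation that each rule produces at most three branches and can fire only finitely often, so the tableau is a finitely branching tree of bounded depth and hence finite. Your additional remarks on how the applicability/blocking condition prevents re-triggering merely make explicit what the paper's complexity proof already relies on.
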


\begin{proof}
    As per the above, any branch in the tableau calculus has finite depth. Furthermore, it is worth noting that any rule which applies multiple branches can only be applies finitely many times (as is shown in the points above), and at most such a rule produces $3$ branches, meaning that the total branches in the tableau are finite, and thus the tableau terminates.
\end{proof}

\noindent\textbf{Completeness Proof:}\\

\noindent In order to show completeness we show that if there exists an open (saturated) tableau for $\alpha$, then $\alpha$ is locally satisfiable. We then show from any open saturated tableau $\mathcal{T}$ for $\alpha\in\LangdSt$ that there is some SPSS such that $\llbracket\alpha\rrbracket\neq \emptyset$. We define this SPSS as follows.\\

Let $\mathcal{T}$ be an open (saturated) tableau. Then by Definition \ref{def:closed-and-open-tableau} there is some open branch $(\mathcal{B},\Sigma,\prec)$ in $\mathcal{T}$. We define the SPSS $M_\T:=(\Pi_\T,\sigma_\T,\gamma_\T,\prec_\T)$ by:
\begin{itemize}
    \item $\Pi_\T:=\{n\mid n::\beta\in\SSS\text{ or }n\in\Sigma(e)\text{ for some }e\in\mathcal{E}\}$
    \item $\sigma_\T(s):=\{n\mid n\in \Sigma(s)\}$ for all $s\in \mathcal{S}$, $\sigma_\T(*)=\Pi_\T$ and the value of $\sigma_\T(e)$ is determined inductively on all other $e\in\mathcal{E}$ by $\sigma_\T(e_1\cap e_2)=\sigma_\T(e_1)\cap \sigma_\T(e_2)$, $\sigma_\T(e_1\cup e_2)=\sigma_\T(e_1)\cup \sigma_\T(e_2)$  and  $\sigma_\T(-e_1)=\Pi_\T\setminus\sigma_\T(e_1)$.
    \item $\gamma_\T:\Pi_\T\rightarrow2^\mathcal{P}$ is a function where $\gamma_\T(n):=\{p\in \mathcal{P}\mid n::p\in\SSS\}$ for all $n\in \Pi_\T$.
    \item $\prec_\T=\prec^+$, the transitive closure of $\prec$ appearing in the branch $(\mathcal{B},\Sigma,\prec)$.
\end{itemize}

The following four Lemmas show that $M_\T$ is in fact an SPSS such that $\llbracket\alpha\rrbracket\neq \emptyset$. In the proofs, we may ommit the suffix $\T$ from $M_\T:=(\Pi_\T,\sigma_\T,\gamma_\T,\prec_\T)$ where it is clear.

\begin{lemma}\label{lemma:saturated-branch-is-SPSS}
    $M_\T:=(\Pi_\T,\sigma_\T,\gamma_\T,\prec_\T)$ is a state-preferential standpoint structure.
\end{lemma}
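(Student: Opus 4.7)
The plan is to verify each of the four conditions of Definition \ref{def:standpoint-structure} for $M_\T = (\Pi_\T, \sigma_\T, \gamma_\T, \prec_\T)$. Three of these fall out essentially immediately from the construction, and the real content lies in (i) showing non-emptiness of $\sigma_\T(s)$ for every atomic $s \in \mathcal{S}$, and (ii) showing that $\prec_\T$ is a well-founded strict partial order.

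First I would dispatch $\sigma_\T$. The identity $\sigma_\T(*) = \Pi_\T$ and the Boolean clauses $\sigma_\T(e \cap d) = \sigma_\T(e) \cap \sigma_\T(d)$ and $\sigma_\T(-e) = \Pi_\T \setminus \sigma_\T(e)$ are built into the inductive definition of $\sigma_\T$, so nothing is to be checked there. For non-emptiness: because $\T$ is saturated and $(\B, \Sigma, \prec)$ is open, the auxiliary rule \textbf{(non-empty $\mathcal{S}$)} has fired for every atomic standpoint that was not already populated by other rules, so $\Sigma(s)$ (and hence $\sigma_\T(s)$) contains at least one label. The valuation $\gamma_\T$ is manifestly a function from $\Pi_\T$ to $2^{\mathcal{P}}$.

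The heart of the proof is verifying that $\prec_\T = \prec^+$ is a well-founded strict partial order. Transitivity is automatic since $\prec_\T$ is a transitive closure. For irreflexivity I would establish the following invariant by induction on the construction of the branch: every pair $(a, b)$ added to $\prec$ has $a$ a label freshly generated by the very rule application introducing that pair, and $b$ a label that existed beforehand. A case analysis through Figure \ref{fig:etableau-rules-for-satisfiability} confirms this: the only rules that write to $\prec$ are $(\leadsto)$ and those whose denominators include $\Gamma^\star$, namely $(\dnec)$, $(\dnec^+)$, $(\lesssim)$ and $(\lesssim^+)$; in every case the edge has the shape $n^\star \prec n'$ with $n^\star$ fresh. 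Stratifying labels by creation time therefore makes every edge of $\prec$ point from a strictly newer label to an older one, so no directed cycle can form in $\prec$, and consequently no cycle appears in its transitive closure $\prec_\T$. Well-foundedness is then immediate: by Theorem \ref{therorem:tableau-in-pspace} the tableau terminates with only finitely many labels, so $\Pi_\T$ is finite and every strict partial order on a finite set is well-founded.

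The main obstacle I anticipate is confirming the freshness invariant uniformly across every rule that touches $\prec$. In particular, I would carefully check that the rules whose denominators declare minimality constraints, namely $(\dposs_e)$, $(\not\leadsto)$ and $(\not\lesssim)$, only assert that a (fresh) label is $\prec$-minimal in some set and do \emph{not} add new preference edges that could violate the invariant. Consistency of these minimality constraints with $\prec$ on the open branch is guaranteed by the rule $(\bot_\prec)$, which would otherwise have closed the branch. Once the freshness invariant is in place, the remainder of the lemma is bookkeeping.
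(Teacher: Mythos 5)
Your proposal is correct and follows the same overall decomposition as the paper: the Boolean clauses for $\sigma_\T$ hold by construction, non-emptiness of $\sigma_\T(s)$ comes from saturation together with the rule \textbf{(non-empty $\mathcal{S}$)}, transitivity of $\prec_\T$ is free from taking the transitive closure, and well-foundedness ultimately rests on termination. Where you genuinely add value is on irreflexivity: the paper only observes that no rule ever adds a pair $(n,n)$ to $\prec$, which by itself does not rule out a cycle $n_1\prec n_2\prec n_1$ whose transitive closure would contain $(n_1,n_1)$. Your freshness invariant --- every edge written to $\prec$ by $(\leadsto)$, $(\dnec)$, $(\dnec^+)$, $(\lesssim)$, $(\lesssim^+)$ has the form $n^\star\prec n'$ with $n^\star$ newly created, while $(\dposs_e)$, $(\not\leadsto)$, $(\not\lesssim)$ only assert minimality constraints and add no edges --- yields acyclicity of $\prec$ and hence irreflexivity of $\prec_\T$, closing this small gap. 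Your route to well-foundedness (finitely many labels by the termination argument, and any strict partial order on a finite set is well-founded) is a mild repackaging of the paper's ``no infinite descending chains'' argument and is equally valid, provided one notes, as you implicitly do, that the termination proof does not depend on this lemma.
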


\begin{proof}
    Clearly $\Pi_\T$ and $\gamma_\T$ are defined appropriately for an SPSS. Furthermore, since $\sigma_\T$ is defined inductively on complex standpoints and since $\sigma_\mathcal{T}(*)=\Pi_\T$, it respects all the identities in Definition \ref{def:standpoint-structure}. Moreover, since $\T$ is saturated, for every $s\in\SSS$ the application of rule \textbf{(non-empty $\SSS$)} requires that there exists some $n\in \Sigma(s)$. Hence $\sigma(s)\neq \emptyset$ for all $s\in\SSS$, and $\sigma_\T$ is well defined.

    We then only need to show that $\prec_\T$ is a strict partial order which satisfies well-foundedness. $\prec_\T$ is irreflexive, since no rule in the tableau calculus will assign $(n,n)$ to $\prec$ for any label $n$. Since $\prec_\T$ is a transitive closure it is transitive by definition. Well-foundedness follows from the fact that the tableau does not introduce any infinite descending chains: the only rules which which introduce pairs $(n,n')$ to $\prec$ are $(\dposs_e)$, $(\leadsto)$,, $(\not\leadsto)$, $\dnec$, $\dnec_c$, $\dnec_{c^+}$, $(\lesssim)$, $(\lesssim^+)$, and $(\not\lesssim)$. 
    Since these rules can only be applied finitely many times, all descending chains are finite and so, for any subset of $\Pi_\T$, $\prec$ must have minimal elements.
\end{proof}

\begin{lemma}\label{lemma:completenesss-standpoints-in-right-sigma-place}
    If $n\in \Sigma(e)$ appears in the branch $(\mathcal{B},\Sigma,\prec)$, then $n\in \sigma_\T(e)$. Moreover, if $e\in \SSS$ then the converse holds.
\end{lemma}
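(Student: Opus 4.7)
The plan is to proceed by structural induction on $e$, exploiting that, since $\alpha$ is in INF, every standpoint expression we will actually encounter in the branch can be taken to be in DNF, i.e., a union of standpoint conjuncts, each a conjunction of literals drawn from $\SSS \cup \{-s : s \in \SSS\} \cup \{*\}$. This means the induction really needs three kinds of base cases (atomic $s \in \SSS$, universal $*$, and negated atomic $-s$), plus the $\cap$ and $\cup$ inductive cases, which are dispatched by saturation of rules $(\cap)$ and $(\cup)$.

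In more detail, if $e = s \in \SSS$ and $n \in \Sigma(s)$, then $n \in \sigma_\T(s)$ holds by the very definition of $\sigma_\T$; symmetrically, the converse direction $n \in \sigma_\T(s) \Rightarrow n \in \Sigma(s)$ follows immediately from the same definition, which settles the ``Moreover'' clause. For $e = *$, if $n \in \Sigma(*)$ then $n \in \Pi_\T = \sigma_\T(*)$ directly from the definitions of $\Pi_\T$ and $\sigma_\T(*)$. For a standpoint conjunct $c = s_1 \cap \ldots \cap s_k \cap -s_{k+1} \cap \ldots \cap -s_l$, saturation of $(\cap)$ propagates $n \in \Sigma(c)$ down to $n \in \Sigma(s_i)$ for $1 \leq i \leq k$ and $n \in \Sigma(-s_j)$ for $k+1 \leq j \leq l$, and the induction hypothesis applied to each literal yields $n \in \sigma_\T(c)$. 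For $e = c_1 \cup \ldots \cup c_m$, saturation of $(\cup)$ gives $n \in \Sigma(c_i)$ for at least one $i$, and the IH gives $n \in \sigma_\T(c_i) \subseteq \sigma_\T(e)$.

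The only genuinely delicate case is the negated-atomic literal $e = -s$, and this is the step I expect to be the main obstacle since it is the only place where the openness assumption and the closure rule $(\bot_{-})$ have to enter. I would argue by contradiction: suppose $n \in \Sigma(-s)$ appears in the open branch but $n \in \sigma_\T(s)$. Since $\sigma_\T(s) = \{m \mid m \in \Sigma(s)\}$ by construction, this means $n \in \Sigma(s)$ also appears in the branch. But then both $n \in W_s$ and $n \in W_{-s}$ are present as conditions, so the rule $(\bot_{-})$ is applicable, and by saturation it must have been applied, forcing $n :: \bot$ into $\B$ and contradicting the openness of $(\B, \Sigma, \prec)$. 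Hence $n \notin \sigma_\T(s)$; together with $n \in \Pi_\T$ (which holds because $n \in \Sigma(-s)$ puts $n$ in $\Pi_\T$ by definition), this yields $n \in \Pi_\T \setminus \sigma_\T(s) = \sigma_\T(-s)$, as required. Everything else in the proof is a routine appeal to saturation of $(\cap)$, $(\cup)$, or to the inductive clauses of $\sigma_\T$.
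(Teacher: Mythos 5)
Your proposal is correct and follows essentially the same route as the paper's own proof: induction on the DNF structure of $e$, with the atomic case settled by the definition of $\sigma_\T$ (giving the converse for $e\in\SSS$), the $-s$ case by openness together with saturation of $(\bot_{-})$, and the $\cap$/$\cup$ cases by saturation of the corresponding rules. Your treatment of the negated-literal case is in fact slightly more explicit than the paper's (noting $n\in\Pi_\T$ before concluding $n\in\Pi_\T\setminus\sigma_\T(s)$), but the argument is the same.
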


\begin{proof}
    We do this inductively, first considering the case where $e$ is a literal standpoint: if $e\in \SSS$ then by definition $\sigma(e)=\{n\mid n\in \Sigma(e)\}$ and so in this case  $n\in \sigma(e)$ iff $n\in\Sigma(e)$. Then if $e=-s$ for $s\in \SSS$ then if $n\in \Sigma(-s)$ we cannot have $n\in \Sigma(s)$ since the branch $\B$ is open and saturated (and having $n\in \Sigma(s)$ would require the addition of $n::\bot$ by rule $\bot_-$). Therefore $n\notin\sigma(s)$, and so $n\in\sigma(-s)$. We also consider the special case where $e=*$. Then, supposing that if $n\in \Sigma(e_1)$ implies $n\in \sigma(e_1)$ and $n\in \Sigma(e_2)$ implies $n\in \sigma(e_2)$ we have the following by induction:
    \begin{itemize}
          \item If $n\in \Sigma(e_1\cap e_2)$, then by rule $(\cap)$ we have $n\in \sigma(e_1)$ and  $n\in \sigma(e_2)$. Thus $n\in \sigma(e_1)\cap \sigma(e_1)=\sigma(e_1\cap e_2)$.
        \item If $n\in \Sigma(e_1\cup e_2)$, then by rule $(\cup)$ either $n\in \sigma(e_1)$ or  $n\in \sigma(e_2)$. In either case,  $n\in \sigma(e_1)\cup \sigma(e_1)=\sigma(e_1\cup e_2)$.
    \end{itemize}
Then, since each standpoint symbol is in disjunctive normal form, this is sufficient for all standpoints which will be introduced in $\Sigma$ during the tableau process.
\end{proof}

\begin{lemma}\label{lemma:PI-is-Sigma*}
    For any open saturated tableau, we have $\Pi_\T=\Sigma(*)$.
\end{lemma}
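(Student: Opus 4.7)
The plan is to prove the two inclusions separately, with one direction being immediate from the definition of $\Pi_\T$ and the other following from the saturation of the tableau via the rules $(*_1)$ and $(*_2)$.

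For the inclusion $\Sigma(*)\subseteq\Pi_\T$, the argument is essentially by unwinding the definition. If $n\in\Sigma(*)$, then in particular $n\in\Sigma(e)$ holds for some $e\in\mathcal{E}$ (namely $e=*$), so the second disjunct of the definition of $\Pi_\T$ applies and gives $n\in\Pi_\T$. No tableau reasoning is needed here.

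For the reverse inclusion $\Pi_\T\subseteq\Sigma(*)$, I would take an arbitrary $n\in\Pi_\T$ and split by the two disjuncts in the defining condition. In the first case, $n::\beta\in\B$ for some sentence $\beta$, and I appeal to saturation: the rule $(*_1)$ with numerator $n::\beta$ has denominator $n\in W_*$, and by the applicability convention of Section~\ref{section:satisfiability-and-pref-entailment}, saturation forces either that the rule was applied or that $n\in W_*$ already occurred in $\B$; in either case $n\in\Sigma(*)$. In the second case, $n\in\Sigma(e)$ for some $e\in\mathcal{E}$, and the analogous argument with rule $(*_2)$ (whose numerator $n\in W_e$ is precisely the hypothesis and whose denominator is $n\in W_*$) delivers $n\in\Sigma(*)$.

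I do not anticipate a genuine obstacle; the lemma is essentially a bookkeeping fact that confirms the universal standpoint really does cover all precisifications in the induced SPSS. The only point to take care with is making sure that, when $e$ in case two is complex, rule $(*_2)$ still applies — but the rule is stated generically for $n\in W_e$ with $e$ ranging over all standpoint expressions, so no additional case analysis (for atomic versus complex $e$) is required. This gives the desired equality $\Pi_\T=\Sigma(*)$.
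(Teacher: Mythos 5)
Your proposal is correct and follows essentially the same route as the paper's own proof: one inclusion is immediate from the definition of $\Pi_\T$, and the other follows from saturation via rules $(*_1)$ and $(*_2)$. You simply spell out the case split that the paper leaves implicit.
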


\begin{proof}
    If $n\in \Pi_\T$, by $(*_1)$ and $(*_2)$ we have that $n\in \Sigma(*)$. By definition if $n\in \Sigma(*)$, then  $n\in\Pi_\T$
\end{proof}

\begin{lemma}\label{lemma:big-completeness}
    If $\beta$ is a subsentence of $\alpha$, then whenever $n::\beta \in\SSS$, $n'\in \llbracket\beta\rrbracket^{M_\T}$.
\end{lemma}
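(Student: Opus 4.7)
The plan is a standard structural induction on the sub-sentence $\beta$ (where the intended statement reads $n\in\llbracket\beta\rrbracket^{M_\T}$ when $n::\beta\in\B$), but the work lies in matching the tableau's syntactic notion of ``minimality'' with the semantic minimality in $M_\T$. Throughout, I would make free use of Lemmas~\ref{lemma:saturated-branch-is-SPSS}, \ref{lemma:completenesss-standpoints-in-right-sigma-place}, and~\ref{lemma:PI-is-Sigma*}, and of the fact that the branch is open and saturated so every applicable rule has already been fired and $n::\bot$ never appears.

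First I would dispatch the base and Boolean cases. For $\beta=p$, use the definition of $\gamma_\T$; for $\beta=\neg p$, use openness of the branch together with $(\bot)$ to conclude $n::p\notin\B$. The cases $\neg\neg\gamma$, $\gamma_1\wedge\gamma_2$, $\neg(\gamma_1\wedge\gamma_2)$ reduce via saturation under $(\neg)$, $(\wedge)$, $(\lor)$ to shorter sub-sentences, where the induction hypothesis applies directly. Next I would handle the classical modal cases. For $\beta=\Box_e\gamma$, first convert $e$ to its DNF $c_1\cup\dots\cup c_m$, apply $(\Box_{e\cup d})$ saturation to reduce to one conjunct $c_i$, and then for an arbitrary $\pi'\in\sigma_\T(c_i)$ derive $n'::\gamma\in\B$ from $(\Box_c)/(\Box_{c^+})$; the positive-literal condition in $\Gamma^c$ is met by Lemma~\ref{lemma:completenesss-standpoints-in-right-sigma-place}, while the ``$n'\in W_{s_{k+1}\cup\dots\cup s_m}$'' disjunct of $(\Box_c)$ is ruled out by $\pi'\in\sigma_\T(-s_j)$ for each negative literal, again via Lemma~\ref{lemma:completenesss-standpoints-in-right-sigma-place} and openness. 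For $\beta=\neg\Box_e\gamma$, $(\Diamond_e)$ produces a fresh witness $n^\star\in\sigma_\T(e)$ with $n^\star::\neg\gamma\in\B$, and the IH yields $n^\star\in\llbracket\neg\gamma\rrbracket$.

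The real work lies in the defeasible cases. For $\beta=\dnec_{c_1\cup\dots\cup c_m}\gamma$, I need to show that every $\pi'\in\min_{\prec_\T}\sigma_\T(c_1\cup\dots\cup c_m)$ satisfies $\gamma$. Pick such a $\pi'$, lying in some $\sigma_\T(c_i)$; by Lemma~\ref{lemma:completenesss-standpoints-in-right-sigma-place} the atomic membership conditions $\Gamma/\Gamma^+$ hold, and the rules $(\dnec)$ and $(\dnec^+)$ force either $n'::\gamma\in\B$ (done by IH) or the existence of a tableau-witness $n^\star\prec n'$ with $n^\star\in W_{c_1\cup\dots\cup c_m}$, which would contradict minimality of $\pi'$ in $\sigma_\T(c_1\cup\dots\cup c_m)$ (using that $\prec_\T$ is the transitive closure of $\prec$); the negative-literal branch is again excluded as in the $\Box$ case. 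For $\beta=\neg\dnec_e\gamma$, $(\dposs_e)$ gives a fresh $n^\star::\neg\gamma$ placed into $\min_\prec W_e$, and here I must argue that $n^\star$ is also $\prec_\T$-minimal in $\sigma_\T(e)$: this relies on rule $(\bot_\prec)$, which would close the branch if any $n'\in W_e$ satisfied $n'\prec n^\star$, combined with the absence of new $\prec$-pairs below $n^\star$ in saturated form. The sharpening cases $e\lesssim d$ and $\neg(e\lesssim d)$ are analogous, using $(\lesssim)/(\lesssim^+)$ and $(\not\lesssim)$ respectively; here the extra subtlety is that the fresh $n^\star$ generated by $(\lesssim)$ has to serve as a genuine element of $\min_{\prec_\T}\sigma_\T(c_i)$ when no existing $\B$-label witnesses this, which again follows from the second denominator branch producing $n^\star\prec n'$ plus $n^\star\in W_{c_1\cup\dots\cup c_m}$. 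Finally, $\beta=\alpha_1\leadsto\alpha_2$ follows by a case split on the three disjuncts of $(\leadsto)$, each contradicting or establishing $n\in\min_{\prec_\T}\llbracket\alpha_1\rrbracket\Rightarrow n\in\llbracket\alpha_2\rrbracket$; $\neg(\alpha_1\leadsto\alpha_2)$ is handled via $(\not\leadsto)$.

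The principal obstacle is the equivalence between the tableau's $\min_\prec W_e$ (a local bookkeeping set populated by specific rules) and the semantic $\min_{\prec_\T}\sigma_\T(e)$ under the transitive closure. I would isolate this in a preliminary sublemma: \emph{if $n\in\min_\prec W_e$ appears in $\B$ and $n\in\sigma_\T(e)$, then $n$ is $\prec_\T$-minimal in $\sigma_\T(e)$}, proved by contradiction using $(\bot_\prec)$ and the fact that $\prec_\T$ is generated by finitely many tableau-introduced $\prec$-edges, so any $\prec_\T$-chain below $n$ would expose an immediate predecessor already in $W_e$ closing the branch. Once this bridge lemma is in place, the defeasible cases slot into the induction by essentially the same pattern as the classical modal cases.
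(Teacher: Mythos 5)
Your proposal follows essentially the same route as the paper's proof: structural induction on $\beta$ with the same case analysis (literals, Boolean connectives, the four DNF subcases for $\Box_e$ and $\dnec_e$, and the sharpening and conditional cases), discharged by the same rules and the same appeals to Lemmas~\ref{lemma:completenesss-standpoints-in-right-sigma-place} and~\ref{lemma:PI-is-Sigma*}. The one difference is that you isolate the correspondence between the tableau's $\min_\prec W_e$ and the semantic $\min_{\prec_\T}\sigma_\T(e)$ as an explicit bridge sublemma via $(\bot_\prec)$, a step the paper handles implicitly ``by the definition of $\prec_\T$''; this is a welcome clarification rather than a departure.
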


\begin{proof}
    We do this using structural induction on the number of connectives in $\beta$. We shorten  $\llbracket \delta\rrbracket^{M_\T}$ to $\llbracket \delta\rrbracket$ for all $\xi\in\LangdSt$.

    In the base case $\beta$ is a literal. We then have two cases: (i) If $\beta=p\in\mathcal{P}$ then $n::p\in \B$ if and only if $p\in\gamma(n)$, and therefore $\llbracket p\rrbracket=\llbracket \beta\rrbracket$. (ii) Suppose $\beta=\neg p$ for some $p\in\mathcal{P}$, then $n::\neg p\in \B$. Then since $\T$ is saturated if we had $n::p\in\B$ we would have, by $(\bot)$ that $\B$ is closed. It then must be the case that $n::p\notin\B$, and therefore $p\notin\gamma(n)$. Hence $n\in\llbracket \neg p\rrbracket=\llbracket \beta\rrbracket$.

   Inductive step: For the inductive hypothesis (IH) we assume that if $n'::\delta\in\B$ then $n'\in \llbracket\delta\rrbracket$. The Boolean cases are done as usual, and the proofs for subsentences of the form $(\delta\leadsto \delta')$ or $\neg(\delta\leadsto \delta')$ are the same as the proof in Britz and Varzinczak \cite{britzvarzin:defeasiblemodalities}. We then consider the other cases for modal statements and defeasible standpoint sharpenings:

   \begin{itemize}
       \item $\beta=\neg\Box_e \delta$. If $n::\neg\Box_e \delta\in \B$, then by $(\Diamond_e)$ there exists $n'\in \Sigma(e)$ such that $n'::\neg \delta\in B$ by  (IH) we have $n'\in \llbracket\delta\rrbracket$ and by Lemma \ref{lemma:completenesss-standpoints-in-right-sigma-place}, we have $n'\in \sigma(e)$. Therefore $M,n\Vdash \neg\Box_e \delta$ and so $n\in \llbracket\delta \rrbracket$.
       \item $\beta=\neg\dnec_e \delta$. if $n::\neg\dnec_e \delta\in\B$ then by $\dposs_e$ there exists some $n'\in\min_\prec\Sigma(e)$ such that $n'::\neg \delta\in\B$. By Lemma \ref{lemma:completenesss-standpoints-in-right-sigma-place} and the definition of $\prec^\T$ we have that $n'\in\min_{\prec^\T}\sigma(e)$ and by IH we have $n'\in\llbracket \neg \delta\rrbracket$. Hence $M,n\Vdash \neg\dnec_e\delta$ and so $n\in \llbracket\neg\dnec_e\delta\rrbracket$.
       \item $\beta=e\lesssim d$. Let $e=c_1\cup...\cup c_m$ where all $c_i$ are conjuncts. If $n::c_1\cup...\cup c_m\lesssim d\in \B$ then for all $n'\in \min_\prec\sigma(c_1\cup...\cup c_2)$ we have $n'\in \sigma(c_i)$ for some $i$ with $1\leq i\leq m$. If $c_i=s_1\cap...\cap s_k\cap -s_{k+1}\cap...\cap s_{l}$ we have that $n\in \sigma(s_1)\cap...\cap \sigma(s_k)\cap \sigma(-s_{k+1})\cap...\cap \sigma(s_{l})\subseteq \sigma(s_1)\cap...\cap\sigma(s_k)=\Sigma(s_1)\cap...\cap\Sigma(s_k)$ (Lemma \ref{lemma:completenesss-standpoints-in-right-sigma-place}). Then by $(\lesssim)$ either (1) $n'\in \Sigma(s_{k+1}\cup...\cup s_l)$, (2) $n'\in \Sigma(d)$ or (3) there exists $n''$ such that $n''\prec n'$ and $n''\in \Sigma(c_1\cup...\cup c_m)\cap \Sigma(d)$. In case (1) by Lemma \ref{lemma:completenesss-standpoints-in-right-sigma-place}, $n'\in \sigma(s_{k+1}\cup...\cup s_l)$. Equivalently $n'\notin \sigma (-s_{k+1})\cap...\cap \sigma(-s_l)$ which contradicts the assumption that $n'\in\sigma(c_i)$. In case (3) $n'$ is not minimal in $\sigma(c_1\cup... \cup c_m)$, which contradicts the assumption that $n'\in \min_\prec\sigma(c_1\cup...\cup c_2)$. Therefore, case (2) must hold and so $n'\in \sigma(d)$ by Lemma \ref{lemma:completenesss-standpoints-in-right-sigma-place}. If $c_i=s_1\cap..\cap s_k$ (it has no negative standpoint literals), then we apply Rule $(\lesssim^+)$ and by a similar proof (without considering case (2)) we get $n'\in \sigma(d)$. Since $e$ is in DNF this covers all cases, and so $\min_\prec\sigma(e)\subseteq\sigma(d)$ and so $M,n\Vdash e\lesssim d$ and $n\in \llbracket e\lesssim d\rrbracket$.
       \item $\beta=\neg(e\lesssim d)$. If $n::\neg(e\lesssim d)\in \B$, then by $(\not\lesssim)$ there exists some $n'\in \Sigma(-d)$ set as minimal in $\Sigma(e)$. Then by Lemma \ref{lemma:completenesss-standpoints-in-right-sigma-place} we have $n'\in \pi\setminus \sigma(d)$ and since $n'$ is minimal in $\Sigma(e)$ then $n'\in \min_{\prec^\T}\sigma(e)$. Therefore $\min_{\prec^\T}\sigma(e)\nsubseteq \sigma(d)$, and so $M,n\nVdash e\lesssim d\implies M,n\Vdash \neg(e\lesssim d)\implies n\in \llbracket\neg(e\lesssim d)\rrbracket$.
       \item $\beta=\Box_e\delta$. We consider four cases:
       \begin{itemize}
           \item[i.] $e=*$. If $n::\Box_* \delta$, by Lemma \ref{lemma:PI-is-Sigma*} we have that $n'\in\Sigma(*)$ for all $n'\in \Pi$. Then by a special case of rule $(\Box_{c^+})$ we have that $n'::\delta\in \B$ for every $n'\in \Pi=\sigma(*)$ and by IH $n'\in \llbracket\delta \rrbracket$. Therefore, $M,n\Vdash \Box_* \delta$ so $n'\in \llbracket \Box_* \delta \rrbracket$.
           \item[ii.] $e=s_1\cap...\cap s_k$ for $s_i\in\SSS$. If $n::\Box_e\delta\in \B$ then by $(\Box_{c^+})$, for every $n'$ such that $n\in \Sigma(s_1),... n\in \Sigma(s_k)$ we have that $n'::\delta\in\B$ and $n'\in \llbracket\delta \rrbracket$ by IH. Moreover, for every $n''\in \sigma(s_1\cap...\cap s_k)=\sigma(s_1)\cap...\cap\sigma(s_k)$ we that by definition $n''\in \Sigma(s_1)\cap...\cap\Sigma(s_k)$. Hence, $n''\in \llbracket\delta \rrbracket$ for all $n''\in \sigma(s_1\cap...\cap s_k)$ and therefore $M,n\Vdash \Box_e\delta$ and $n\in \llbracket\Box_e\delta\rrbracket$.
           \item[iii.] $e=s_1\cap...\cap s_k\cap -s_{k+1}\cap...\cap -s_{m}$ for $s_i\in\SSS$. If $n::\Box_e \delta$, then, for any $n'\in \sigma(s_1\cap...\cap s_k\cap -s_{k+1}\cap...\cap -s_{m})=\sigma(s_1)\cap...\cap \sigma(s_k)\cap \sigma(-s_{k+1})\cap...\cap \sigma(-s_{m})$ we have by definition that $n'\in \Sigma(s_1)\cap....\Sigma(s_k)$. Then by rule $(\Box_c)$ either (1) $n'\in \Sigma(s_{k+1}\cup...\cup s_m)$ or (2) $n'::\delta \in \B$. Since $n'\in \sigma(-s_{k+1})\cap...\cap \sigma(-s_{m})$ then $n'\notin \sigma(s_i)$ for all $k+1\leq i\leq m$, and it follows from Lemma \ref{lemma:PI-is-Sigma*} that $n'\notin \Sigma(s_{i})$ for any such $i$. But then in case (1), by rule $(\cup)$ applied finitely many times, we must have that $n'\in \Sigma(s_{i})$ for some $k+1\leq i\leq m$, which is a contradiction. Therefore case (2) must hold, in which case $n'\in \llbracket\delta\rrbracket$ and therefore $M,n\Vdash \Box_e \delta\implies n\in \llbracket\Box_e \delta\rrbracket$. The above shows us that for all standpoint symbols $c$ which are conjuncts, then $n::\Box_c\delta\in\B$ implies $n\in\llbracket\Box_c\delta\rrbracket$.
           \item[iv.] $e=c_1\cup..\cup c_m$ where each $c_i$ is a conjunct. If $n::\Box_e\delta$, a finite number of applications of rule $(\Box_{e\cup d})$  gives us $n::\Box_{c_i}\delta\in \B$ whenever $1\leq i\leq m$. Then, by points i.-iii. we have that $n\in\llbracket\Box_{c_1}\delta\rrbracket\cap...\cap \llbracket\Box_{c_m}\delta\rrbracket$. Therefore, for any $n'\in \sigma(c_1\cup...\cup c_m)=\sigma(c_1)\cup...\cup \sigma(c_m)$ we have that $n'\in \sigma(c_i)$ for some $1\leq i\leq m$. Therefore, since $M,n\Vdash \Box_{c_i}\delta$, we have $M,n'\Vdash \delta$ and so $M,n\Vdash \Box_{c_1\cup..\cup c_m}\delta$. Therefore, $n\in \llbracket\Box_{c_1\cup..\cup c_m}\delta\rrbracket$.
       \end{itemize}
       Since each standpoint index in $\alpha$ is in DNF this suffices to cover all cases.
    \item $\beta=\dnec_e\delta$. We consider similar cases for $e$:
    \begin{itemize}
        \item[i.] $e=*$. If $n::{\dnec_*} \delta\in \B$, then for all $n'\in \Pi=\Sigma(*)$, by $(\dnec^+)$ either (1) $n'::\delta \in \B$ or (2) there exists a label $n''$ such that $n''\prec n'$, $n''\in \Sigma(*)$ and $n''::\delta\in\B$. Then for any $n'''\in \min_{\prec^\T}\sigma(*)$, case (b) cannot hold and so by (a), we have $n'''::\delta\in\B$. By IH then $n'''\in\llbracket\delta\rrbracket$ and so $M,n\Vdash{\dnec_*} \delta\implies n\in \llbracket{\dnec_*} \delta\rrbracket$.
        \item[ii.]  $e=s_1\cap...\cap s_k$ for $s_i\in\SSS$. If $n::\dnec_e\delta$ then for any $n'\in \min_{\prec^\T}\sigma(s_1\cap...\cap s_k)$ we have that $n'\in \sigma(s_1)\cap...\cap(s_1)$ and so $n'\in \Sigma(s_1)\cap...\cap\Sigma(s_k)$. Therefore, by $(\dnec^+)$ we have either (1) $n'::\delta\in \B$ or (2) there exists $n''$ such that $n''\prec n'$, $n''\in \Sigma(s_1)...n'\in\Sigma(s_k)$ and $n''::\delta\in\B$. In case (2) $n''\in \sigma(s_1)\cap...\cap\sigma(s_k)=\sigma(s_1\cap...\cap s_k)=\sigma(e)$ and then since $n'$ is assumed to be minimal in $\sigma(e)$ this is a contradiction. Therefore case (1) holds and $n'::\delta\in \B$ and by IH $n'\in \llbracket\delta\rrbracket$. Thus  $M,n\Vdash \dnec_e\delta\implies n\in \llbracket\dnec_e\delta\rrbracket$.
        \item[iii.] $e=s_1\cap...\cap s_k\cap -s_{k+1}\cap...\cap -s_{m}$ for $s_i\in\SSS$. Let $n::\dnec_e\delta\in\B$ and let $n'$ be a random member of $\min_{\prec^\T}\sigma(e)$. Then $n'\in \sigma(s_1)\cap...\cap \sigma(s_{k})\cap\sigma(-s_{k+1})\cap...\cap\sigma(-s_m)$. Then by Lemma \ref{lemma:completenesss-standpoints-in-right-sigma-place} $n'\in\Sigma(s_1)\cap...\cap \Sigma(s_{k})$ and so by $(\dnec)$ either (1) $n'\in \Sigma(s_{k+1}\cup...\cup s_m)$, (2) $n'::\delta\in \B$ or (3) there exists a label $n''$ such that $n''::\delta$, $n''\prec n'$, and $n''\in \Sigma(s_1)\cap...\cap\Sigma(s_k)\cap\Sigma(-s_{k+1})\cap...\cap\Sigma(-s_m)$. In case (1) by Lemma \ref{lemma:completenesss-standpoints-in-right-sigma-place} $n'\in\sigma(s_{k+1}\cup...\cup s_m)=\sigma(s_{k+1})\cup...\cup \sigma(s_m)$, but then $n'\in \sigma(s_i)$ where $k+1\leq i\leq m$. However, this contradicts the assumption that $n'\in \sigma(-s_{k+1})\cap...\cap\sigma(-s_m)$. In case (3) by Lemma \ref{lemma:completenesss-standpoints-in-right-sigma-place} $n''\in \sigma(s_1)\cap...\cap \sigma(s_{k})\cap\sigma(-s_{k+1})\cap...\cap\sigma(-s_m)=\sigma(s_1\cap...\cap s_{k}\cap-s_{k+1}\cap...\cap -s_m)=\sigma(e)$. However, this contradicts the assumption that $n'$ is minimal in $\sigma(e)$. Therefore, case (1) must hold and so $n'::\delta\in\B$, and by IH $n'\in\llbracket\delta\rrbracket$. Thus, $M,n\Vdash \dnec_e\delta\implies n\in \llbracket\dnec_e\delta\rrbracket$.
        \item[iv.] $e=c_1\cup..\cup c_m$ where each $c_i$ is a conjunct. Assume $n::\dnec_e\delta$, then for any $n'\in \min_{\prec^\T}\sigma(e)$ we have the following. $n\in \sigma(c_1\cup...\cup c_m)=\sigma(c_1)\cup...\cup \sigma(c_m)$, therefore $n'\in \sigma(c_i)$ for some conjunct $c_i$. Then either (a) $c_i$ is of the form $c_i=s_1\cap...\cap s_k\cap -s_{k+1}\cap...\cap -s_{m}$ or (b) $c_i$ is of the form $c_i=s_1\cap...\cap s_k$. In case (a) $n'\in \sigma(s_1)\cap...\cap\sigma(s_k)\cap\sigma(-s_{k+1})\cap...\cap\sigma(-s_l)=\Sigma(s_1)\cap...\cap\Sigma(s_k)\cap\sigma(-s_{k+1})\cap...\cap\sigma(-s_l)\subseteq \Sigma(s_1)\cap...\cap\Sigma(s_k)$ by Lemma \ref{lemma:completenesss-standpoints-in-right-sigma-place}. Then by $\dnec$ either (1) $n'::\delta\in \B$, (2) $n'\in \Sigma(s_{k+1}\cup...\cup s_l)$ or (3) there exists some label $n''$ such that $n''::\delta\in\B$, $n''\prec n'$ and $n''\in \Sigma(c_1\cup...\cup c_m)$. In case (2), we derive a contradiction using a similar proof to case (2) in point iii. In case (3) we have by Lemma \ref{lemma:completenesss-standpoints-in-right-sigma-place} that $n''\in \sigma(c_1\cup...\cup c_m)=\sigma(e)$ but then since $n'$ is assumed to be minimal in $\sigma(e)$ a contradiction results from the addition of $n''\prec n'$. Then case (1) must hold and $n'::\delta\in \B$. By IH $n'\in \llbracket\delta \rrbracket$ and so $M,n\Vdash \dnec_e\delta\implies n\in \llbracket\dnec_e\delta\rrbracket$. If case (b) holds then the proof that $n\in \llbracket\dnec_e\delta\rrbracket$ follows a similar method by applying rule $(\dnec^+)$, where we just do not need to consider branch (2) from the previous case.
    
        \end{itemize}
            Since every standpoint index is in DNF, i.-iv. cover all cases. 
   \end{itemize}
\end{proof}

\begin{lemma}[Completeness]\label{lemma:completenesss}
  For any $\alpha\in\LangdSt$, if there is an open saturated tableau for $\alpha$, then $\alpha$ is locally satisfiable.  
\end{lemma}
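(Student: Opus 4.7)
The plan is to combine the preceding four lemmas in a routine way, since the genuine work has already been done in the construction of $M_\T$ and in the truth lemma (Lemma~\ref{lemma:big-completeness}). Concretely, given an open saturated tableau $\T$ for $\alpha$, by Definition~\ref{definition:tableau} the construction of $\T$ starts from $\T^0=\{(\{0::\alpha\},\emptyset,\emptyset)\}$, so the labelled sentence $0::\alpha$ persists in every branch of $\T$. Since $\T$ is open, we may pick an open branch $(\B,\Sigma,\prec)$, and we will have $0::\alpha\in\B$.

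Next, I would appeal to the canonical-model construction preceding Lemma~\ref{lemma:saturated-branch-is-SPSS} to produce the quadruple $M_\T=(\Pi_\T,\sigma_\T,\gamma_\T,\prec_\T)$. Lemma~\ref{lemma:saturated-branch-is-SPSS} guarantees that $M_\T$ is a bona fide state-preferential standpoint structure, so it is a legitimate semantic object against which we can evaluate~$\alpha$. In particular, $0\in\Pi_\T$ because $0::\alpha\in\B$ ensures $0$ appears in the label-set underlying $\Pi_\T$.

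Finally, the truth lemma (Lemma~\ref{lemma:big-completeness}) applied to $\beta=\alpha$ (trivially a subsentence of itself) yields $0\in\llbracket\alpha\rrbracket^{M_\T}$, i.e.\ $M_\T,0\Vdash\alpha$. This directly witnesses local satisfiability of $\alpha$ in $M_\T$, which is the conclusion we need.

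I do not expect any real obstacle here: all the heavy lifting, including the case analysis for the defeasible modalities, for $\lesssim$, and for $\Box_e$ with complex index in DNF, is already packaged inside Lemma~\ref{lemma:big-completeness}, and the well-foundedness/irreflexivity of $\prec_\T$ is handled in Lemma~\ref{lemma:saturated-branch-is-SPSS}. The only thing to double-check is that the open branch chosen really does contain $0::\alpha$, which is immediate from the initialisation in Definition~\ref{definition:tableau} together with the fact that no tableau rule ever deletes labelled sentences from a branch.
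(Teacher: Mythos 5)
Your proposal is correct and follows essentially the same route as the paper: pick an open branch, invoke Lemma~\ref{lemma:saturated-branch-is-SPSS} to get that $M_\T$ is a well-defined SPSS, and apply the truth lemma (Lemma~\ref{lemma:big-completeness}) to $0::\alpha$ to conclude $\llbracket\alpha\rrbracket^{M_\T}\neq\emptyset$. The only difference is that you make explicit the (correct) observation that $0::\alpha$ persists in every branch, which the paper leaves implicit.
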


\begin{proof}
    If there exists an open saturated tableau $\T$ for $\alpha$, then $M^\T=(\Pi^\T,\sigma^\T,\gamma^\T,\prec^\T)$ is a well-defined SPSS by Lemma \ref{lemma:saturated-branch-is-SPSS} and $\llbracket\alpha\rrbracket\neq \emptyset$ by Lemma \ref{lemma:big-completeness}.
\end{proof}

\noindent\textbf{Soundness Proof:}\\

\noindent In order to show soundness, we must show that if $\alpha\in\LangdSt$ is satisfiable in the semantics of $\LangdSt$, then there exists an open saturated taleau for $\alpha$. Equivalently, if all the saturated tableau for $\alpha$ are closed, then $\alpha$ is unsatisfiable. 

\begin{definition}
    Let $\mathcal{B}$ be a set of labelled sentences. $\mathcal{B}(n):=\{\beta\mid n::\beta\in \mathcal{B}\}$, and $\widehat{\mathcal{B}(n)}=\bigwedge\mathcal{B}(n)$.
\end{definition}

\begin{lemma}\label{lemma:big-soundness-lemma}
    For every tableau rule in Figure \ref{fig:etableau-rules-for-satisfiability}, if for all $\mathcal{T}^{j+1}=\{...,(\mathcal{S}^{j+1},\Sigma^{j+1}, \prec^{j+1})...\}$ that can be obtained from $\mathcal{T}^{j}=\{...,(\mathcal{S}^{j},\Sigma^{j}, \prec^{j})...\}$ there is an $n$ such that $\widehat{\mathcal{S}^{j+1}(n)}$ is unsatisfiable in the semantics for $\LangdSt$, then there is an $n'$ such that $\widehat{\mathcal{S}^{j}(n')}$ is unsatisfiable.
\end{lemma}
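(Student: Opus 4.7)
The plan is to prove the statement by case analysis on which rule is applied, working via the contrapositive. Call a branch $(\mathcal{B},\Sigma,\prec)$ \emph{jointly satisfiable} if there exists an SPSS $M=(\Pi,\sigma,\gamma,\prec_M)$ and an assignment $n\mapsto\pi_n\in\Pi$ such that $M,\pi_n\Vdash\beta$ whenever $n::\beta\in\mathcal{B}$, $\pi_n\in\sigma(e)$ whenever $n\in\Sigma(e)$, and $\pi_n\prec_M\pi_{n'}$ whenever $n\prec n'$. Joint satisfiability of a branch clearly implies that each $\widehat{\mathcal{B}(n)}$ is satisfiable in $M$ at $\pi_n$, so it suffices to show: if the parent branch is jointly satisfiable, then at least one of the resulting child branches is jointly satisfiable. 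This yields the contrapositive of the claim.

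The Boolean rules $(\bot),(\lnot),(\land),(\lor)$ are immediate from propositional semantics applied at $\pi_n$. The standpoint-algebra rules $(\cap),(\cup),(*_1),(*_2),(\bot_{-})$ follow from the defining identities on $\sigma$ in Definition~\ref{def:standpoint-structure}, while $(\bot_\prec)$ cannot apply to a jointly satisfiable branch because its numerator contradicts irreflexivity of $\prec_M$. The classical modal rules $(\Diamond_e),(\Box_{e\cup d}),(\Box_c),(\Box_{c^+})$ are routine: for $(\Diamond_e)$, the semantics of $\lnot\Box_e\alpha$ furnishes a witness $\pi\in\sigma(e)$ with $M,\pi\Vdash\lnot\alpha$, which is assigned to the fresh label; for $(\Box_c)$ the denominator split is resolved by whether $\pi_{n'}$ lies in some $\sigma(s_j)$ for a negative literal of $c$. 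The defeasible-implication rules $(\leadsto)$ and $(\lnot\leadsto)$ are handled as in~\cite{britzvarzin:defeasiblemodalities}, using the semantics of $\leadsto$ together with well-foundedness of $\prec_M$ to locate minimal witnesses.

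The main work lies in the rules coupling minimality with fresh labels, namely $(\dposs_e),(\dnec),(\dnec^+),(\lesssim),(\lesssim^+),(\lnot\lesssim)$. Consider $(\dnec)$ with $M,\pi_n\Vdash\dnec_{c_1\cup\cdots\cup c_m}\alpha$ and $\pi_{n'}\in\sigma(s_1)\cap\cdots\cap\sigma(s_k)$ for the positive literals of some $c_i=s_1\cap\cdots\cap s_k\cap -s_{k+1}\cap\cdots\cap -s_l$. Split on whether $\pi_{n'}\in\sigma(c_i)$: if $\pi_{n'}$ violates a negative literal then $\pi_{n'}\in\sigma(s_{k+1}\cup\cdots\cup s_l)$ and the second denominator branch is realised; otherwise $\pi_{n'}\in\sigma(c_1\cup\cdots\cup c_m)$, and we further distinguish whether $\pi_{n'}$ is $\prec_M$-minimal there. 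If yes, $M,\pi_{n'}\Vdash\alpha$ and the first branch works; if not, well-foundedness of $\prec_M$ yields some $\pi^\star\prec_M\pi_{n'}$ with $\pi^\star\in\min_{\prec_M}\sigma(c_1\cup\cdots\cup c_m)$, hence $M,\pi^\star\Vdash\alpha$, and mapping $n^\star\mapsto\pi^\star$ realises the third branch along with the side conditions $\Gamma^\star$. The rule $(\lesssim)$ is treated identically, replacing \emph{$M,\pi^\star\Vdash\alpha$} by \emph{$\pi^\star\in\sigma(d)$} and invoking $\min_{\prec_M}\sigma(e)\subseteq\sigma(d)$; $(\dposs_e)$ chooses a minimal witness via well-foundedness; $(\lnot\lesssim)$ picks $\pi^\star\in\min_{\prec_M}\sigma(e)\setminus\sigma(d)$, which must exist when $e\lesssim d$ fails. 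The main obstacle is the three-way split in $(\dnec)$ and $(\lesssim)$: one must simultaneously introduce a fresh label that is strictly below an existing label in $\prec_M$ and respects the standpoint-membership constraints in $\Gamma^\star$, which is precisely what the well-foundedness condition of Definition~\ref{def:standpoint-structure} guarantees.
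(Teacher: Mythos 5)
Your overall strategy---a case analysis over the rules showing that satisfiability passes from the numerator to at least one denominator branch---is the same one the paper uses, but you package it differently and, in one respect, more cleanly: you carry a single global invariant (one SPSS plus an assignment of labels to precisifications realising all of $\mathcal{B}$, $\Sigma$ and $\prec$ at once) and argue forward, whereas the paper propagates unsatisfiability backwards label by label, repeatedly and somewhat informally appealing to ``the semantic conditions of $\T^j$'' to smuggle the skeleton and preference constraints into the meaning of ``$\widehat{\mathcal{B}(n)}$ is unsatisfiable''. Your formulation makes explicit what those appeals leave implicit. The price is that you do not prove the literal contrapositive of the stated lemma: that contrapositive has as hypothesis only that each $\widehat{\mathcal{B}^j(n')}$ is \emph{individually} satisfiable, which is strictly weaker than your joint satisfiability, so your implication has a stronger hypothesis than the one the lemma demands. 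What you prove does suffice for the downstream soundness result (the root branch $(\{0::\alpha\},\emptyset,\emptyset)$ is jointly satisfiable exactly when $\alpha$ is locally satisfiable, and a jointly satisfiable branch cannot contain $n::\bot$), but you should either restate the lemma in terms of your invariant or say explicitly that you are proving a variant that implies what Lemma~\ref{lemma:soundness} actually needs.

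There is also one concrete gap. Your definition of joint satisfiability records only the memberships $n\in\Sigma(e)$ and the pairs $n\prec n'$, but several rules ($(\dposs_e)$, $(\not\leadsto)$, $(\not\lesssim)$) write \emph{minimality markers} of the form $n\in\min_\prec W_e$ or $n\in\min_\prec W^\alpha_{\mathcal{B}}$ into the branch, and the closure rule $(\bot_\prec)$ consumes exactly such a marker: its numerator is $n\in\min_\prec W$, $n'\prec n$, $n'\in W$ with $n$ and $n'$ in general \emph{distinct} labels, so your claim that it ``contradicts irreflexivity of $\prec_M$'' does not go through. The correct argument is that if the invariant also requires $\pi_n\in\min_{\prec_M}\sigma(W)$ whenever the branch records $n\in\min_\prec W$, then the numerator of $(\bot_\prec)$ would force $\pi_{n'}\in\sigma(W)$ with $\pi_{n'}\prec_M\pi_n$, contradicting minimality of $\pi_n$---so the rule simply cannot fire on a jointly satisfiable branch. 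You do in fact choose genuinely minimal witnesses in your treatment of $(\dposs_e)$, $(\not\lesssim)$, etc., so the fix is only to add the minimality markers to the invariant and replace the irreflexivity argument; but as written the invariant is too weak to close the $(\bot_\prec)$ case. You should also say a word about the rule \textbf{(non-empty $\mathcal{S}$)}, which you omit: it is handled by mapping the fresh label to any element of $\sigma(s)$, which is non-empty by Definition~\ref{def:standpoint-structure}.
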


\begin{proof}

We will not show proofs for rules $(\wedge)$, $(\neg)$, $(\vee)$ and $(\bot)$ since these have been considered in other cases. Rules $(\dposs_e)$, $(\leadsto)$, $(\bot_\prec)$ and $(\not\leadsto)$ are considered by Britz and Varzinczak \cite{britzvarzin:defeasiblemodalities}. The rule $\Diamond_e$ also refers to one specific case of the similar rule $(\Diamond_i)$ in Britz and Varzinczak's \cite{britzvarzin:defeasiblemodalities} work, and so we omit this too:
    \begin{itemize}
    \item \textbf{Rule $(\cap)$:} If $n\in \Sigma^j(e\cap d)$ then an application of $(\cap)$ extends $\Sigma^j$ by adding $n\in \Sigma^{j+1}(e)$ and $n\in \Sigma^{j+1}(d)$. Then, if $\widehat{\B^{j+1}(n')}$ is unsatisfiable, then in the case where $n'\neq n$ we have that $\B^j(n')=\B^{j+1}(n')$ and that all other semantic conditions concerning $n'$ are the same. Therefore $\widehat{\B^j(n')}$ is unsatisfiable. If $n=n'$ then the contradiction must come about from the addition of $n\in \Sigma^{j+1}(e)$ or $n\in \Sigma^{j+1}(d)$. Without losing generality, assume $n\in \Sigma^{j+1}(e)$ leads to unsatisfiability. Then, for any SPSS $M$ with $\pi_n$ such that $M,\pi_n\Vdash \widehat{\B^{j}(n)}$, we must have that $\pi_n\notin \sigma(e)$. However, clearly in any SPSS $\vDash e\cap d\leq e$. Therefore since $n\in \Sigma^j(e\cap d)$ but in any SPSS no precisification satisfying $\widehat{\B^{j}(n)}$ can be in $\sigma(e)$ implies that $\widehat{\B^{j}(n)}\vDash \neg (e\cap d\leq e)$ and so $\widehat{\B^{j}(n)}$ is unsatisfiable.
    \item \textbf{Rule $(\cup),(*_1),(*_2),(\bot_{-})$} have similar proofs to $(\cap)$ using the tautologies $e\leq e\cup d$ for $(\cup)$, $e\leq *$ for $(*_1)$ and $(*_2)$, and $\Box_{e\cap -e}\bot=e\leq e$ for $(\bot_{-})$.
    \item \textbf{Rule $(\Box_{e\cup d})$:} If $\B^j$ contains $n::\Box_{e\cup d}\beta$ then applying $(\Box_{e\cup d})$ adds $n::\Box_e\beta$ and $n::\Box_d\beta$ to $\B^{j}$. Then assume $\widehat{\B^{j+1}(n')}$ is unsatisfiable. As in the previous case, if $n\neq n'$, the Lemma follows. Then if $n=n'$ either $\widehat{\B^{j}(n)}\vDash \neg \Box_e\beta$ or $\widehat{\B^{j}(n)}\vDash \neg \Box_e\beta$ in order to derive that $\widehat{\B^{j+1}(n')}$ is unsatisfiable. Assume, the first case. Then in every SPSS $M$ such that $\widehat{\B^{j}(n)}$ holds at some precisificaton $\pi_n$ (I.e. $M,\pi_n\Vdash \widehat{\B^{j}(n)}$) there exists some $\pi'\in \sigma(e)\subseteq \sigma(e\cup d)$ such that $M,\pi_n\Vdash \neg \alpha$. But then for every such $M$ we have $M\Vdash \neg \Box_{e\cup d} \beta$, and so $\widehat{\B^{j}(n)}\vDash \neg\Box_{e\cup d}\beta$, since this occurs in any SPSS with a precisification satisfying satisfying $\widehat{\B^{j}(n)}$. Therefore, since $\Box_{e\cup d}\beta\in \B^{j}(n)$, we must have $\widehat{\B^{j}(n)}$ is unsatisfiable.
    \item \textbf{Rule $(\Box_c)$:} If $\B^j$ contains $n::\Box_{s_1\cap...\cap s_k\cap -s_{k+1}\cap...\cap -s_m}\beta$ and $n'\in \Sigma(s_1)\cap...\cap\Sigma(s_k)$ then an application of $(\Box_c)$ gives us either:
    \begin{itemize}
        \item $\T^{j+1}_{(1)}$ where the labelled sentences are the same but $n'\in \Sigma ^{j+1}_{(1)}(s_{k+1}\cup...\cup s_m)$ is added to $\Sigma^j$
        \item $\T^{j+1}_{(2)}$ where we add $n'::\beta$ to obtain $\B^{j+1}_{(2)}$. 
    \end{itemize}
    Assume there exists $n_1$ or $n_2$ such that $\widehat{\B^{j+1}_{(1)}(n_1)}$ and $\widehat{\B^{j+1}_{(2)}(n_2)}$ are both unsatisfiable. If either $n_1\neq n'$ or $n_2\neq n'$ then by similar reasoning to Rule $(\cap)$ the Lemma holds (even if $n_1$ or $n_2$ are equal to $n$, since we do not add extra labelled sentences or semantic conditions to $n$ the argument still applies). Therefore we consider the case where $n_1=n'=n_2$. If $\widehat{\B^{j+1}_{(1)}(n')}$ is unsatisfiable, since we add no labeled sentences to $\B^j$, we must have that the contradiction (unsatisfiability) derives from $n'\in \Sigma ^{j+1}_{(1)}(s_{k+1}\cup...\cup s_m)$. In this case, that means that for any SPSS $M$ containing a precisification $\pi_{n'}$ such that $M,\pi_{n'}\Vdash \widehat{\B^j(n')}$ and $\pi_{n'}\in \sigma(s_1)\cap...\cap\sigma(s_k)$ we must have that $\pi_{n'}\notin \sigma(s_{k+1}\cup...\cup s_m)$. Equivalently, $\pi_{n'}\in \sigma(-s_{k+1}\cap...\cap -s_m)$. Then since $\widehat{\B^{j+1}_{(1)}(n_2)}$ is unsatisfiable, we must have $\widehat{\B^j(n')}\vDash \neg \beta$.  But then, in any SPSS $M$ containing a precisification $\pi_{n'}$ such that $M,\pi_{n'}\Vdash \widehat{\B^j(n')}$ and $\pi_{n'}\in \sigma(s_1)\cap...\cap\sigma(s_k)$ we have that both $M,\pi_{n'}\Vdash \neg \beta$ and $\pi_{n'}\in \sigma(-s_{k+1}\cap...\cap -s_m)$. But then $\pi_{n'}\in \sigma(s_1\cap...\cap s_k\cap -s_{k+1}\cap...\cap s_m)$. But then for any SPSS, $\vDash \neg\Box_{s_1\cap...\cap s_k\cap -s_{k+1}\cap...\cap -s_m}\beta$ and so $\widehat{\B^j(n)}\vDash \neg \Box_{s_1\cap...\cap s_k\cap -s_{k+1}\cap...\cap -s_m}\beta$ and since $\Box_{s_1\cap...\cap s_k\cap -s_{k+1}\cap...\cap -s_m}\beta\in \B^j(n)$ we have that $\widehat{\B^j(n)}$ is unsatisfiable.
    \item \textbf{Rule $(\Box_{c^+})$} has a similar proof to $(\Box_c)$ except the branch $\T^{j+1}_{(2)}$ does not need to be considered.
    
    \item \textbf{Rule $(\dnec)$:} If $\B^j$ contains $n::\dnec_{c_1\cup...\cup c_m}\beta$ where each $c_i$ is a conjunct and $n'\in \Sigma(s_1)\cap...\cap\sigma(s_k)$ where $c_i=s_1\cap...\cap s_k\cap s_{k+1}\cap...\cap s_l$, then applying rule $(\dnec)$ gives us:
    \begin{itemize}
        \item $\T^{j+1}_{(1)}$ where $n'::\beta$ is added to obtain $\B^{j+1}_{(1)}$.
        \item  $\T^{j+1}_{(2)}$ where $\Sigma^{j}$ is extended to have $n'\in \Sigma^{j+1}_{(2)}(s_{k+1}\cup...\cup s_l)$.
        \item $\T^{j+1}_{(3)}$ where a fresh label $n''$ is added, $n''::\beta$ is added to $\B^{j+1}_{(3)}$ and $\Sigma^j$ and $\prec^j$ are extended by  $n''\in\Sigma^{j+1}_{(3)}(c_1\cup...\cup c_m)$ and $n''\prec^{j+1}_{(3)}n'$.
    \end{itemize}
   Suppose then there exist $n_1$, $n_2$ and $n_2$ such that $\widehat{\B^{j+1}_{(1)}(n_1)}$, $\widehat{\B^{j+1}_{(2)}(n_2)}$ and $\widehat{\B^{j+1}_{(3)}(n_3)}$ are unsatisfiable. As with previous cases, if $n_1\neq n'$, $n_2\neq n'$ or $n_3\notin\{n',n''\}$ the Lemma holds. We consider the case where $n_1\neq n'$, $n_2\neq n'$ and $n_3\in\{n',n''\}$. For similar reasons to Rule $(\Box_c)$ $\widehat{\B^{j+1}_{(2)}(n')}$ being unsatisfiable means that for any SPSS $M$ with a precisification $\pi_{n'}$ such that $M,\pi_{n'}\Vdash \widehat{\B^j(n')}$ and $\pi\in \sigma(s_1\cap...\cap s_k)$  we have that $\pi_{n'}\in \sigma(-s_{k+1}\cap...\cap-s_l)$. Similarly, the unsatisfiablility of  $\widehat{\B^{j+1}_{(2)}(n')}$ leads to any SPSS $M$ with a precisification $\pi_{n'}$ such that $M,\pi_{n'}\Vdash \widehat{\B^j(n')}$ and $\pi\in \sigma(s_1\cap...\cap s_k)$ implies that $M,\pi_{n'}\Vdash \neg \beta$. We then consider the case where $n_3=n'$. Then the only condition added which affects $n'$ is that $n''\prec^{j+1}_{(3)}n'$. That is, the non-minimality of $n'$ leads to a semantic contradiction. Hence, in this case, for any any SPSS $M$ with a precisification $\pi_{n'}$ such that $M,\pi_{n'}\Vdash \widehat{\B^j(n')}$, we must have that there is no $\pi_{n''}\in \Pi$ such that $\pi_{n''}\prec \pi_{n'}$. Combining these three conditions, we have that for any SPSS $M$ such with a precisification $\pi_{n'}$ such that $M,\pi_{n'}\Vdash \widehat{\B^j(n')}$ and $\pi\in \sigma(s_1\cap...\cap s_k)$ then $\pi_{n'}\in\sigma(s_1\cap...\cap s_k)\cap -s_{k+1}\cap...\cap-s_l)=\sigma(c_i)\subseteq \sigma(c_1\cup...\cup c_m)$, $M\pi_{n'}\Vdash \neg \beta$ and $\pi_{n'}$ is minimal. That is $M\Vdash \neg \dnec_{c_1\cup...\cup c_m}\beta$, and in particular, for any SPSS $M$ with such a $\pi_{n'}$, we have that $M,\pi_{n}\Vdash \widehat{\B^j(n)}$ implies $M,\pi_{n}\Vdash \neg \dnec_{c_1\cup...\cup c_m}\beta$ (That is, under the semantic conditions of $\T^j$, $ \widehat{\B^j(n)}\vDash \neg \dnec_{c_1\cup...\cup c_m}\beta$). Then since $\dnec_{c_1\cup...\cup c_m}\beta\in\B^j(n)$ we have that $\widehat{\B^j(n)}$ is unsatisfiable.
    In the case where $n_3=n''$ then $\widehat{\B^{j+1}_{(3)}(n'')}$ is unsatisfiable, and so since $n''$ is a fresh label $\B^{j+1}_{(3)}(n'')=\{\beta\}$. Then, for any SPSS $M$, if a precisification $\pi_{n''}$ is in $\sigma(c_1\cup...\cup c_m)$ and is minimal (since $n''$ is fresh there is no element minmal to it), we must have $M,\pi_{n''}\Vdash \neg \beta$. That is, $\vDash \dnec_{c_1\cup...\cup c_m}\neg \beta$, which implies $\vDash \neg\dnec_{c_1\cup...\cup c_m} \beta$ unless $\sigma(c_1\cup...\cup c_m)$ is empty (which by this application of Rule $(\dnec_{cup})$ that introduces $n''$, cannot be true). Then $\widehat{\B^j(n)}$ is unsatisfiable since $\dnec_{c_1\cup...\cup c_m}\beta\in\B^j(n)$. 
    \item \textbf{Rule $(\dnec^+)$} has a similar proof to $(\dnec)$, but without having to consider the first branch, since $c_i$ only contains positive literals.

    \item \textbf{Rule $(\lesssim$):} Suppose $B^j$ contains $n::e\lesssim d$ where $e=c_1\cup ...\cup c_m$, such that each $c_i$ is a conjunct\footnote{Since $e$ is in DNF we can make this assumption.}. Then if $n'\in \Sigma(s_1)\cap...\Sigma(s_k)$ where some $c_i=s_1\cap ...\cap s_k\cap -s_{k+1}\cap...\cap s_l$, an application of rule $(\lesssim)$ gives us either:
    \begin{itemize}
        \item $\T^{j+1}_{(1)}$ where we extend $\Sigma^j$ by adding $n'\in \Sigma^{j+1}_{(1)}(s_{k+1}\cup...\cup s_l)$.
         \item $\T^{j+1}_{(2)}$ where we extend $\Sigma^j$ by adding $n'\in \Sigma^{j+1}_{(2)}(d)$.
        \item $\T^{j+1}_{(3)}$ where we add a fresh label $n''$ and extend $\Sigma^j$ and $\prec^{j}$ by adding $n''\prec^{j}_{(3)} n'$, $n'\in \Sigma^{j+1}_{(3)}(e)$ and $n'\in \Sigma^{j+1}_{(3)}(d)$.
    \end{itemize}
    Now assume there exists $n_1$, $n_2$ and $n_3$ such that $\widehat{\B^{j+1}_(1)(n_1)}$, $\widehat{\B^{j+1}_(2)(n_2)}$ and $\widehat{\B^{j+1}_{(3)}(n_3)}$ are all unsatisfiable. As in previous cases, if $n_1\neq n'$, $n_2\neq n'$ or $n_3\notin \{n',n''\}$ then the Lemma follows. We then consider the case where $n_1= n'$, $n_2= n'$ and either (A) $n_3=n'$ or (b) $n_3=n''$. In branch $(1)$ since we add no labelled sentences to $n'$, if $\widehat{\B^{j+1}_(1)(n')}$ is unsatisfiable, following a similar reasons to branch (2) for Rule $\dnec$, we have that for any SPSS $M$, if there exists $\pi_{n'}$ such that $M,\pi_{n'}\Vdash \widehat{\B^j(n')}$ and $\pi_{n'}\in \sigma(s_1\cap...\cap s_k)$, then $\pi_{n'}\in \sigma(-s_{k+1}\cap...\cap -s_l)$. Similarly, the unsatisfiability of $\widehat{\B^{j+1}_(2)(n')}$ implies that in any SPSS $M$, if there exists $\pi_{n'}$ such that $M,\pi_{n'}\Vdash \widehat{\B^j(n')}$ and $\pi_{n'}\in \sigma(s_1\cap...\cap s_k)$ then $\pi_{n'}\notin \sigma(d)$. Then in the case (a) where $\widehat{\B^{j+1}_{(3)}(n')}$ is unsatisfiable, since the only semantic condition added to $\T^j$ involving $n'$ is the fact that there exists $n''$ such that $n''\prec n'$, then it is the non-minimality of $n'$ that leads to contradiction. That is, for any SPSS $M$, if there exists $\pi_{n'}$ such that $M,\pi_{n'}\Vdash \widehat{\B^j(n')}$ and $\pi_{n'}\in \sigma(s_1\cap...\cap s_k)$ then $\pi_{n'}$ must be minimal in $\Pi$. However, if we put these three conditions together, then for any SPSS $M$, if there exists $\pi_{n'}$ such that $M,\pi_{n'}\Vdash \widehat{\B^j(n')}$ and $\pi_{n'}\in \sigma(s_1\cap...\cap s_k)$ then by (1), $\pi_{n'}\in \sigma(s_1\cap...\cap s_k\cap-s_{k+1}\cap...\cap -s_l)=\sigma(c_i)\subseteq \sigma(e)$, $\pi_{n'}\notin \sigma(d)$ and $\pi_{n'}$ is globally minimal and thus in $\min_\prec\sigma(e)$. Therefore, $\min_\prec\sigma(e)\nsubseteq\sigma(d)$ and so $M\Vdash \neg(e\lesssim d)$, and in particular $M,\pi_n\Vdash \neg(e\lesssim d)$ for any $\pi_n$ such that $M,\pi_n\Vdash \widehat{\B^j(n)}$. Then, under the semantic conditions of $\T^j$\footnote{Which we recall are constructed by the initial formula appearing at the root of the tableau.} we have $\widehat{\B^j(n)}\vDash \neg(e\lesssim d)$ and so $\widehat{\B^j(n)}$ is unsatisfiable. In case (b) $\widehat{\B^{j+1}_{(3)}(n'')}$ is unsatisfiable. Then since $n''$ is a fresh label $\B^{j+1}_{(3)}(n'')=\emptyset$. Therefore if $\widehat{\B^{j+1}_{(3)}(n'')}$ is unsatisfiable, we must that for any SPSS $M$ satisfying the conditions of $\T^j$ there can be no precisification $\pi_{n''}$ such that $\pi_{n''}\in \sigma(e)$, $\pi_{n''}\in \sigma(d)$ and (iii) $\pi_{n''}\prec \pi_{n'}$ for \textit{any} $\pi_{n'}\in \sigma(s_1\cap...\cap \sigma(s_k)$. Therefore, either (i) $\sigma(e)$ is empty, (ii) $\sigma(d)$ is empty, or (iii) every $\pi_{n'}\in \sigma(s_1\cap...\cap \sigma(s_k)$ is globally minimal. In case (i) we get a contradiction since $\T^j$ and the unsatisfiability of branch (1) means that there exists $\pi_{n'}\in \sigma(s_1\cap...\cap s_k\cap-s_{k+1}\cap...\cap -s_l)=\sigma(c_i)\subseteq \sigma(e)$. In case (ii), since $\sigma(e)$ is non-empty by the previous reasoning, and $\sigma(d)$ is empty. Therefore $\min_\prec \sigma(e)\nsubseteq \emptyset=\sigma(d)$ and we must have $M\Vdash \neg(e\lesssim d)$. Hence, by similar reasoning as in case (a), we have $\widehat{\B^j(n)}\vDash \neg(e\lesssim d)$ under the semantic conditions of $\T^j$, and so $\widehat{\B^j(n)}$ is unsatisfiable. In case (iii), we have that any $\pi_{n'}\in \sigma(s_1\cap...\cap \sigma(s_k)$ is globally minimal, then by branch (1) and (2) we have that $\pi_{n'}\in \sigma(s_1\cap...\cap \sigma(s_k)$ implies that $\pi_{n'}\in \sigma(e)$ and $\pi_{n'}\notin \sigma(d)$. Moreover, by the conditions of $\T^j$, $\sigma(s_1\cap...\cap \sigma(s_k)$ is non-empty. Then since any such $\pi_{n'}$ is globally minimal, it must be in $\min_\prec \sigma(e)$. Therefore $\min_\prec \sigma(e)\nsubseteq \sigma(d)$ and so following similar reasoning to case (ii) $\widehat{\B^j(n)}$ is unsatisfiable.
    \item\textbf{Rule $(\lesssim^+)$} is similar to Rule $(\lesssim)$ but without needing to consider branch $\T^{j+1}_{(1)}$.
        \item \textbf{Rule $(\not\lesssim)$}: If $n::\neg (e\lesssim d)\in \B^j$, then applying $(\not\lesssim)$ introdcues a fresh label $n'$ such that $n'$ is minimal in $\Sigma^{j+1}(e)$ and $n\in\Sigma^{j+1}(-d)$. Assume there exists $n''$ such that $\widehat{\B^j(n'')}$ is unsatisfiable. By the same reasoning, if $n''\neq n'$, the Lemma follows. Consider the case where $\widehat{\B^j(n')}$ is unsatisfiable. Since $n'$ is fresh  then $\B^j(n')=\emptyset$. Therefore, if $\widehat{\B^j(n')}$ is unsatisfiable, for any SPSS $M$ satisfying $\T^j$, we must have that there exissts no precisification $\pi_{n'}$ such that $\pi_{n'}$ is minimal in $\sigma(e)$ and $\pi_{n'}\notin \sigma(d)$. Therefore in such a case $\min_\prec \sigma(e)\subseteq \sigma(d)$ and $M\Vdash e\lesssim d$ and in particular $M,\pi_n\Vdash e\lesssim d$ for any precisification $\pi_n$ such that $M,\pi_n\Vdash \widehat{\B^j(n')}$. Therefore, if $\T^j$ holds, then $\widehat{\B^j(n')}\vDash e\lesssim d$, but since $n::\neg (e\lesssim d)\in \B^j$, then $\widehat{\B^j(n')}$ is unsatisfiable.
        \item \textbf{Rule (non-empty $\SSS$):} If $\T^j$ is such that $\Sigma^j(s)=\emptyset$ for some standpoint $s\in\SSS$ in our vocabulary, and no other rules are applicable, then we apply \textbf{(non-empty $\SSS$)} to obtain  $\T^{j+1}$ where a fresh label $n$ so that $\Sigma^{j+1}(s)=\{n\}$. If we assume that tere exists a label $n'$ such that $\widehat{\B^{j+1}(n')}$ is unsatisfiable, then if $n'\neq n$ for similar reasons as above the Lemma holds. On the other hand, if $n=n'$ then since $n$ is a fresh label and $\B^{j+1}(n')=\emptyset$, we have that any SPSS $M$ satisfying the conditions in $\T^j$ must have $\sigma(s)=\emptyset$. But then $M$ is not a well-defined SPSS, since by definition we require that $\sigma(s')\neq \emptyset$ and so $\T^j$ is unsatisfiable. In particular, for any label $n''$ in $\T^j$, by the above we have $\widehat{\B^j(n'')}\vDash *\leq -s$. But by definition every SPSS $M$ has $M\Vdash \neg(*\leq -s)$ and so $\widehat{\B^j(n'')}$ for any label $n''$ in $\T^j$.\footnote{We also know the set of labels in $\T^j$ are non-empty since the tableau for any sentence starts with the inclusion of label $0$.}
    \end{itemize} 
\end{proof}

\begin{lemma}[Soundness.]\label{lemma:soundness}
    If $\alpha\in\LangdSt$ is locally satisfiable, then there exists an open saturated tableau for $\alpha$.
\end{lemma}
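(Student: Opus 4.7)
The plan is to prove the statement by contraposition: assume every saturated tableau for $\alpha$ is closed, and show that $\alpha$ is not locally satisfiable. The main tool will be Lemma~\ref{lemma:big-soundness-lemma}, which propagates unsatisfiability of a labelled conjunction from successor tableaux $\T^{j+1}$ back to their parent $\T^j$. Combined with the termination guarantee of Theorem~\ref{therorem:tableau-in-pspace}, this enables a finite backward induction along the construction sequence $\T^0,\T^1,\dots,\T^k=\T^{\infty}$.

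For the base case, I would observe that a closed branch $(\B,\Sigma,\prec)$ contains some labelled sentence $n::\bot$. Since $\bot$ is then a conjunct of $\widehat{\B(n)}$, the conjunction $\widehat{\B(n)}$ is trivially unsatisfiable in every SPSS. Hence, in any closed saturated tableau, every branch contains some label whose associated conjunction is unsatisfiable. Then I would carry out the backward induction on $i$: assuming every branch of $\T^{i+1}$ has some label with an unsatisfiable conjunction, Lemma~\ref{lemma:big-soundness-lemma} immediately yields the same property for the unique parent branch in $\T^i$ to which the rule was applied, while the other branches of $\T^i$ are copied unchanged into $\T^{i+1}$ and so inherit the property directly.

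Applying this induction all the way back to $\T^0=\{(\{0::\alpha\},\emptyset,\emptyset)\}$, the unique branch of $\T^0$ must contain some label with an unsatisfiable conjunction. Since the only label present is $0$ and $\B^0(0)=\{\alpha\}$, this conclusion reduces to the unsatisfiability of $\alpha$, contradicting the assumption of local satisfiability; hence some saturated tableau for $\alpha$ is open. The main obstacle is really just bookkeeping: one must articulate precisely how the ``every branch contains some unsatisfiable label'' property is transferred through each single-branch rule application, and confirm that the limit $\T^{\infty}$ is reached in finitely many steps so that the backward induction is well-founded. Both points are taken care of by Lemma~\ref{lemma:big-soundness-lemma} (which already encapsulates all the per-rule case analysis) and by the complexity argument underlying Theorem~\ref{therorem:tableau-in-pspace}, leaving only the inductive glue to be written out.
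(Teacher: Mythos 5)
Your proposal is correct and follows essentially the same route as the paper: contraposition via Lemma~\ref{lemma:big-soundness-lemma}, propagating the unsatisfiability of some $\widehat{\B(n)}$ from closed leaves back to the root branch $\{0::\alpha\}$. The only difference is presentational: you make explicit the base case ($n::\bot$ forces $\widehat{\B(n)}$ unsatisfiable), the backward induction along $\T^0,\dots,\T^k$, and the appeal to termination for well-foundedness, all of which the paper's two-line proof leaves implicit.
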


\begin{proof}
    We prove the converse. By Lemma \ref{lemma:big-soundness-lemma} if all tableau for $\alpha$ are closed, then for any branch $(\B,\Sigma,\prec)$ there is some $n$ such that $\widehat{\B(n)}$ is unsatisfiable. Then in particular, $\widehat{\B(0)}=\alpha$ is unsatisfiable, since $0$ is the only label in the root of the tableau. That is, all rules in Figure \ref{fig:etableau-rules-for-satisfiability} preserve satisfiability.
\end{proof}

Together, these give us the results needed for our main result.\\

\noindent\textbf{Theorem \ref{theorem:soundness and completeness}. } \textit{The tableau algorithm is sound and complete with respect to local satisfiability in SPSS semantics.}

\begin{proof}
    Completeness is shown in Lemma \ref{lemma:completenesss}. Soundness is shown in Lemma \ref{lemma:soundness}.
\end{proof}
\end{document}